%\documentclass[11pt, onecolumn, doublespace, %triplespace,
%journal]{IEEEtran}
%\documentclass[11pt,oneside,reqno]{article}
\documentclass[11pt,a4paper,twoside]{article}
\usepackage{geometry} %% change the margin
\usepackage{amsmath,amsthm}
\usepackage{amssymb}
\usepackage{color}                   %Double column single space version default
\usepackage{rotating}
\usepackage{mdframed}
\usepackage{graphicx,amssymb}
\usepackage{algorithm}
\usepackage{algorithmic}
\usepackage{color,appendix}
\usepackage{diagbox}
\usepackage{multirow}
\usepackage{subcaption}
\usepackage{makecell}
\usepackage{hyperref}
\usepackage[affil-it]{authblk}
\usepackage{soul}
%%\topmargin=1mm % comment out if want exact top margin
%%\advance\topmargin 19mm
%%\oddsidemargin=6mm
%%\advance\oddsidemargin 10mm % comment out if want exact left margin
%%\evensidemargin=\oddsidemargin \textwidth=170mm
%%\advance\textwidth by-60mm
%%\textheight=200mm
%%\advance\textheight by-60mm
%%\footskip=10mm
%%\footheight=5mm
%%\headheight=10mm

\usepackage{geometry}
 \geometry{
 a4paper,
% total={170mm,257mm},
 left=20mm,
 top=20mm,
 }

\makeatletter

\@addtoreset{algorithm}{section}
\makeatother

%\DeclareGraphicsExtensions{.pdf,.jpg,.jpeg,.png}
%\else
%\usepackage[dvips]{graphicx}
%\usepackage{epsfig}
%%\usepackage{epstopdf}
%\DeclareGraphicsExtensions{.eps}
%\fi
%\DeclareMathOperator{\sech}{sech}
%\newcommand\cheng[1]{{\color{red}#1}}
%%
%%\usepackage{pdfsync}
%\usepackage{amsthm}

\newtheorem{theorem}{Theorem}[section]

\newtheorem{proposition}[theorem]{Proposition}

\newtheorem{remark}[theorem]{Remark}

\renewcommand{\theequation} {\thesection.\arabic{equation}}

\def\R{ {\mathbb R} }
\def\Z{ {\mathbb Z} }

\numberwithin{equation}{section}

\setstcolor{red}
%\newcommand{\footremember}[2]{%
%   \footnote{#2}
%    \newcounter{#1}
%    \setcounter{#1}{\value{footnote}}%
%}
%
%\newcommand{\footrecall}[1]{%
%    \footnotemark[\value{#1}]%
%}

%\makeatletter
%\renewcommand*{\@fnsymbol}[1]{\ensuremath{\ifcase#1\or *\or \dagger\or \ddagger\or
%   \mathsection\or \mathparagraph\or \|\or **\or \dagger\dagger
%   \or \ddagger\ddagger \else\@ctrerr\fi}}
%\makeatother

%\makeatletter
%\let\@fnsymbol\@arabic
%\makeatother

\title{Polynomial graph filters of multiple  shifts and  distributed implementation of  inverse filtering}

\author{Nazar Emirov}
\affil{Department of Computer Science, Boston College, Chestnut Hill, Massachusetts 02467, USA. \vspace{.1in} }
%\email{nazar.emirov@bc.edu}
\author{Cheng Cheng\thanks{Corresponding author}}
\affil{School of Mathematics, Sun Yat-sen University, Guangzhou, Guangdong 510275, China.\vspace{.1in}}
\author[3]{Junzheng Jiang}
\affil{School of Information and Communication, Guilin University of Electronic Technology, Guilin, Guangxi 541004, China. \vspace{.1in}}
\author[4]{Qiyu Sun}
\affil{Department of Mathematics, University of Central Florida, Orlando, Florida 32816, USA. \vspace{.05in}}
\affil[ ]{ {nazar.emirov@bc.edu}, {chengch66@mail.sysu.edu.cn}, {jzjiang@guet.edu.cn}, {qiyu.sun@ucf.edu} }

\date{}

\begin{document}

	\maketitle

\begin{abstract}
Polynomial graph filters  and their inverses play  important roles in graph signal processing.  In this paper,
we introduce the concept of multiple commutative graph shifts and polynomial graph filters, which
could  play similar roles  in graph signal processing
   as the  one-order delay and finite impulse response filters  in classical multi-dimensional signal processing.
 We implement the filtering procedure associated with a polynomial graph filter of multiple shifts  at the vertex level in  a distributed network  on which each vertex is equipped with a  data processing subsystem for limited  computation power and data storage, and a communication subsystem for direct data exchange only to its adjacent vertices. In this paper, we also consider the  implementation of  inverse filtering  procedure associated with a polynomial graph filter of multiple  shifts, and we propose two  iterative approximation algorithms  applicable in a distributed network and in a central facility. We also demonstrate the effectiveness of the proposed algorithms to implement the inverse filtering procedure  on  denoising time-varying  graph signals
  and  a dataset of  US hourly temperature at $218$ locations.
\end{abstract}

%%%Graphical abstract
%\begin{graphicalabstract}
%%\includegraphics{grabs}
%\end{graphicalabstract}

%%%Research highlights
%\begin{highlights}
%\item Research highlight 1
%\item Research highlight 2
%\end{highlights}

 {\bf Keywords:} {Graph signal processing, polynomial graph filter,   inverse filtering,
distributed algorithm,  distributed  network, multivariate  Chebyshev polynomial approximation.}

%% \linenumbers

%% main text
%\section{}
%\label{}

%% The Appendices part is started with the command \appendix;
%% appendix sections are then done as normal sections
%% \appendix

%% \section{}
%% \label{}

%% If you have bibdatabase file and want bibtex to generate the
%% bibitems, please use
%%
%%  \bibliographystyle{elsarticle-harv}
%%  \bibliography{<your bibdatabase>}

%% else use the following coding to input the bibitems directly in the
%% TeX file.

\section{Introduction}  %\noindent

	Graph signal processing  provides an innovative framework to handle  data residing on    spatially distributed networks (SDN), %such as
% smart grids, neural networks, social networks and many other
%	irregular domains \cite{shuman13}--\cite{Ortega18}.
such as the wireless sensor networks,  smart grids and social network and many other
	irregular domains,  \cite{cheng_SDS16,  Hebner17, Ortega18, aliaksei14, shuman13, Yick08}.
% (\cite{shuman13, aliaksei14,  Ortega18}).
Graphs provide a flexible tool
to model the underlying topology of the networks, and the
edges present the interrelationship between data elements.
 %The graph topology in the underlying framework offers a  flexible tool to model the interrelationship between data elements.
  For instance,  an edge between two vertices may indicate the availability of a direct data exchanging channel between sensors of a distributed network,
the functional connectivity between neural regions in brain,
  or
  the correlation between temperature records of neighboring locations. By leveraging graph  spectral theory and applied harmonic analysis, graph signal processing
 has been  extensively exploited, and
many important concepts in classical signal processing, such as  Fourier transform and  wavelet filter banks, % and graph filters, %leading to graph Fourier transform, graph wavelet filter banks and graph filters,
have been extended to graph  setting  %signal processing
{\cite{cheng_SDS16,hammod11,narang12,Ortega18,aliaksei13,sandryhaila14,aliaksei14,shuman13,narang13}}.

	Let  ${\mathcal G}:=(V,E)$ be an  undirected  and unweighted graph with vertex set $V=\{1, \ldots, N\}$
	and edge set $E\subset V \times V$,
and define the
	geodesic distance $\rho(i,j)$   between vertices $i,j\in V$
	by  the number of edges in
a shortest path connecting  $i, j\in V$
%by the number of edges in a shortest path connecting them,
	and set  $ \rho(i,j)=\infty$ if vertices $i, j\in V$ belong to its different connected components.
   A {\em graph filter} %${\bf H}$
    on the graph $\mathcal G$  maps one graph signal   ${\bf x}=(x(i))_{i\in V}$
   linearly to another graph signal ${\bf y}={\bf H}{\bf x}$,
   and it is usually represented  by a matrix ${\bf H}=(H(i,j))_{ i,j\in V}$.
%	\vspace{-0.5em}\begin{equation}\label{filter.def00} {\bf H}=(H(i,j))_{ i,j\in V}.
%	\vspace{-0.6em}\end{equation}
Graph filters
and their implementations are fundamental  in graph signal processing,
 and they have been used in
	denoising, smoothing, consensus of multi-agent systems, the estimation of time series  and many other applications
{\cite{isufi19,shuman18,Waheed18,moura14}}.
 In the classical signal processing,  filters are categorized into two families, finite impulse response (FIR) filters and  infinite impulse response (IIR) filters. The FIR concept has been extended to graph filters with the duration of an FIR filter being replaced by the
  geodesic-width of a graph filter. Here the {\em geodesic-width} $\omega({\bf H})$
   of a graph filter
  ${\bf H}=(H(i,j))_{ i,j\in V}$ % ${\bf H}$ in \eqref{filter.def00}
	is the smallest nonnegative integer $\omega({\bf H})$ such that
%\vspace{-.4em}\begin{equation*}\label{bandwidth.def}
$	H(i,j)=0 $  hold for  all $i,j\in V$   with $\rho(i,j)>\omega({\bf H})$
%\vspace{-.4em}\end{equation*}
{ \cite{cheng2021,cheng_SDS16, ekambaram13,  jiang19, David2019}}.
% \cite{cheng_SDS16}, \cite{ ekambaram13}--\cite{cheng2021}. % \cite{cheng_SDS16, ekambaram13,  jiang19, David2019, cheng2021}.
	%  has significant advantage on their implement, is
%For a  graph filter ${\bf H}$ with limited geodesic-width  $\omega({\bf H})$,
%the   value   of the output signal of  the filtering procedure ${\bf x}\longmapsto {\bf H}{\bf x}$
% is a ``weighted'' sum of values of the input signal ${\bf x}$
%  at its  $\omega({\bf  H})$-hop vertices, i.e.,
%{\bf neighboring} vertices within geodesic distance no more than  $\omega({\bf  H})$
%\cite{jiang19,David2019}
%{\color{red} (references \cite{hammod11, aliaksei13,narang13, shuman18} are not related to this paragraph ) add some references here}
%since
% the   value   of the output signal
% at each vertex is a ``weighted'' sum of input signal values at its
%{\bf neighboring} vertices within geodesic distance no more than  $\omega({\bf  H})$. {\color{red} reference here}
%

	An	elementary graph filter  is a {\em graph shift}, which has one as its geodesic-width \cite{Coutino17, jiang19, aliaksei13,   segarra17}.
In this paper, we introduce the
concept of {\bf multiple} commutative graph shifts ${\bf S}_1, \dots,  {\bf S}_d$,
i.e.,
%	\vspace{-0.4em}
\begin{equation}\label{commutativityS}
	{\bf S}_k{\bf S}_{k'}={\bf S}_{k'}{\bf S}_k,\  1\le k,k'\le d,
%	\vspace{-0.6em}
\end{equation}
   and
 we consider the implementation of filtering and inverse filtering associated with a {\em polynomial graph filter}
%	\vspace{-0.6em}
\begin{equation}\label{MultiShiftPolynomial}
	{\bf H}=h({\bf S}_1, \ldots, {\bf S}_d)=\sum_{ l_1=0}^{L_1} \cdots \sum_{ l_d=0}^{L_d}  h_{l_1,\dots,l_d}{\bf S}_1^{l_1}\cdots {\bf S}_d^{l_d},
%	\vspace{-0.6em}
\end{equation}
   where the polynomial
%\vspace{-0.6em}
 \begin{equation*}\label{MultiShiftPolynomial.polynomial}
h(t_1, \ldots, t_d)=\sum_{ l_1=0}^{L_1} \cdots \sum_{ l_d=0}^{L_d}  h_{l_1,\dots,l_d} t_1^{l_1} \ldots t_d^{l_d} %\vspace{-0.4em}
\end{equation*}
 % of degree $\sum_{k=1}^d L_k$.
%h_{l_1,\dots,l_d},  0\le l_k\le L_k, 1\le k\le d$ are the polynomial coefficients for the terms  ${\bf S}_1^{l_1}\cdots {\bf S}_d^{l_d}$.
 in variables $t_1,\cdots,t_d$ has polynomial coefficients $h_{l_1,\dots,l_d}$, $0\le l_k\le L_k, 1\le k\le d$.
The commutativity of graph shifts ${\bf S}_1, \dots,  {\bf S}_d$
guarantees that the polynomial graph filter ${\bf H}$  in \eqref{MultiShiftPolynomial}  is independent on
equivalent expressions of the multivariate polynomial $h$, and also the well-definedness of their joint spectrum, see Appendix
\ref{jointspectrum.appendix}.
The concept of commutative graph shifts  ${\bf S}_1, \dots,  {\bf S}_d$
    may play a similar role  in graph signal processing
   as the  one-order delay $z_1^{-1}, \ldots, z_d^{-1}$ in classical multi-dimensional signal processing,
   and  in practice  graph shifts may have specific features and physical interpretation,
  see
   \ref{commutative.section} and Section \ref{Numeric.section} for
    some illustrative examples. % and  for examples of commutative graph filters in noise reduction.
We remark that the commutative assumption on  graph shifts ${\bf S}_1, \ldots, {\bf S}_d$ is trivial for $d=1$ and
 polynomial graph filters  of a single shift have been widely used in graph signal processing
 % \cite{Waheed18,  shuman18, segarra17}, \cite{mario19}--\cite{Emirov19}.
 \cite{ Emirov19,  mario19, Leus17,  Lu18, segarra17,shuman18,Waheed18}.

  Polynomial graph filters ${\bf H}$ in  \eqref{MultiShiftPolynomial}
	have geodesic-width $\omega({\bf H})$ no more than the degree $\sum_{k=1}^d L_k$ of the polynomial $h$.  Our study of
 polynomial graph filters
 of  multiple shifts
 is {\bf mainly} motivated by signal processing on  time-varying signals,
  such as video and data collected by a sensor network over a period of time,
which carry different correlation characteristics %and has different properties
 for different dimensions/directions. In such a scenario, graph  filters
 should be designed to reflect spectral characteristic  on the  vertex domain and also on the temporal domain,
 hence  polynomial graph filters
 of  multiple commutative shifts are  {\bf  preferable}, see
  \cite{Kurokawa17, Ortega18, aliaksei14} and also Subsections \ref{denoising.subsection} and \ref{denoisingweather.subsection}.
  The design of polynomial filters of multiple graph shifts
and their inverses with specific features
and physical interpretation for engineering applications
 is beyond the scope of this paper
and it will be discussed in our future work.
 %for the demonstration on denoising.  	
%  an digital image
% may have different patterns on horizontal, vertical and diagonal directions
% and hence graph filters for imaging processing should  be designed to include horizontal, vertical and diagonal shifts.
%   the graph to  describe the underlying topology is usually a two-dimensional rectangular lattice,
% and the spectral characteristic of the image which  can be shifted horizontally, vertically and diagonally depending on the image structure.
%	Recently, there is an  increasing demand to process multi-dimensional signals on graphs [citations], which shows different characteristics along different dimensions. For instance, the time-varying signals on graph present different correlation characteristics across the graph domain and temporal domain, the pixels of images show different oscillation features across horizonal, vertical and cross directions.
%		Product graph has been employed to model the domain of the multi-dimensional signal, for example, the Cartesian product graph is used to model the time-varying graph signal, by using the Kronecker sum of the general graph and ring graph. However, the existing product graph model and the corresponding theory can not distinguish different information along different directions, since  there is only one shift, i.e. the Laplacian/adjacency matrix, used for constructing linear operators to process the time-varying graph signals that prevents the separate operation.
%In order to overcome this deficiency,
 %which motivates the consideration of graph filters with multiple graph shifts.
  Our discussion
is {\bf also} motivated by  directional frequency analysis in \cite{Kurokawa17}, feature separation in
\cite{Fan19}
% forvmulti-dimensional graph signals
and
  graph filtering in  \cite{Radu19} for  time-varying graph signals. % with stationary interactions between nodes in time.

%and  hence  the corresponding  filtering procedure  can be implemented at the
%vertex level.
%``{\bf locally}" implemented in the vertex domain.
 For polynomial graph filters of a {\bf single}  shift,
 algorithms have been proposed to implement
their filtering procedure in finite steps, with each step including data exchanging between  {\bf adjacent} vertices only, see
\cite{ Emirov19, Leus17, Shi15, shuman18, Waheed18, moura14} and also Algorithm \ref{singleshiftprocedure.algorithm}.
%\cite{moura14, isufi17, shuman18,   Leus17,  Shi15, Emirov19}.
% The  first main contribution is that in Section \ref{implementation.section}
% we extend Algorithm \ref{singleshiftprocedure.algorithm}
% to implement filtering procedure associated with polynomial graph filters
% of multiple shifts, see Algorithm \ref{distributed_Hx.algorithm}.
 The first main contribution is  to extend  the one-hop implementation in Algorithm \ref{singleshiftprocedure.algorithm}
 to the filtering procedure associated with polynomial graph filters
 of {\bf multiple shifts},
 see Algorithm \ref{distributed_Hx.algorithm} in Section \ref{implementation.section}.
%\cc{An SDN has a large amount of agents
%and each agent equipped with a data processing subsystem
%having limited data storage and computation power and a
%communication subsystem for data exchanging to its adjacent agents.
%The implementation of data processing on an SDN is a distributed %processing
% task and
% it should be designed at agent/vertex level. }
Therefore the  filtering procedure associated with polynomial graph filters
can be implemented
 on an SDN  on which  each agent is equipped with a data processing subsystem
having limited data storage and computation power,  and with a
communication subsystem for data exchanging to its adjacent agents.

  %	In the classical signal processing,  IIR filters can be designed to provide better spectral characteristic than FIR filters of
% the same order do, and  the corresponding filtering procedures can be  implemented
% by the combination of an FIR filtering  and an inverse FIR filtering.

%	In the classical signal processing,  IIR filters
%%can be designed to provide better spectral characteristic than FIR filters of
%% the same order do, and
%the corresponding filtering procedures
% can be  implemented
% by the combination of an FIR filtering  and an inverse FIR filtering.

 Inverse filtering associated with  the graph filter having small  geodesic-width
 plays an important role in graph signal processing,
 %	Along with the filtering operators on graph by using the graph filter, the inverse filtering operator is also a fundamental issue in graph signal processing framework.
such as  denoising, graph semi-supervised learning,  non-subsampled filter banks
 and signal reconstruction
%\cite{shuman18, jiang19}, \cite{ mario19}--\cite{Emirov19}, \cite{Shi15}--\cite{sihengTV15}.
{\cite{siheng_inpaint15, sihengTV15,  Emirov19,mario19, Leus17,  jiang19, Lu18, Onuki16,Shi15,shuman18}}.
%(the graph signal sampling and reconstruction is not viewed as the  inverse filtering in general.),
  The challenge arisen in the inverse filtering  is on its implementation, as
  the inverse filter  ${\bf H}^{-1}$ usually has  full geodesic-width even if the original filter  ${\bf H}$ has small geodesic-width.
% Although the filter operator can be implemented in distributed manner when  it possesses  certain localized characteristic, its inverse is not localized in general according to the matrix inverse property. As a result, the inverse filtering can not realized in distributed manner.
%	Therefore, the inverse filtering presents big challenge, particularly when the graph is of large order [].
%	For a graph filter  ${\bf H}$ of  small bandwidth, the matrix  ${\bf H}^{-1}$ associated with the inverse filtering procedure usually has  full bandwidth.
For the case that  the filter ${\bf  H}$  is strictly positive definite,
%has its eigenvalues contained in the positive axis,
 the inverse filtering procedure ${\bf b}\longmapsto {\bf H}^{-1}{\bf b}$ can be
implemented by applying the iterative gradient descent method
%\cc{Delete: \st{with each iteration being implementable}}
in a distributed network,  see \cite{siheng_inpaint15,
 qiu2017,Shi15} and Remark \ref{remark.grad}.
 %chosen so that $\gamma {\bf H}$ has all eigenvalues contained in the open interval $(0, 2)$.
Denote the identity matrix by ${\bf I}$.
 To consider implementation of inverse filtering of an arbitrary invertible filter ${\bf H}$ with small geodesic-width, in Section \ref{iterativeapproximation.section}
 we  start from selecting a graph filter  ${\bf G}$ with small geodesic-width to approximate the inverse
filter ${\bf H}^{-1}$ so that the spectral radius of ${\bf I}-{\bf H}{\bf G}$  is strictly less than 1, and then we propose an  {\bf exponential convergent} algorithm \eqref{iterativedistributedalgorithm.eqn1}
	and \eqref{iterativedistributedalgorithm.eqn2} to  implement the inverse filtering procedure
	with each iteration  mainly including two  filtering procedures associated with filters ${\bf H}$ and ${\bf G}$, see  Theorem \ref{convergence_ICPA.thm}. % for the {\bf exponential convergence}.
% , we  establish the exponential convergence of
%the  iterative approximation  algorithm
%\eqref{iterativedistributedalgorithm.eqn1}
%	and \eqref{iterativedistributedalgorithm.eqn2}
%when the  graph filter  ${\bf G}$ provides a good approximation to the inverse
%filter ${\bf H}^{-1}$.
% in the sense that the spectral radius of ${\bf I}-{\bf H}{\bf G}$  is strictly less than 1.

% is selected to satisfy
%	\vspace{-.5em}\begin{equation}\label{sigmaHG.eq} \rho({\bf I}-{\bf H}{\bf G})<1,
%	\vspace{-.5em}\end{equation}
%where $\rho({\bf A})$ is a spectral radius of an matrix ${\bf A}$.
%The performance of the  proposed iterative algorithm %highly
%depends on the selection of the approximation filter ${\bf G}$ to the inverse filter ${\bf H}^{-1}$.
For  an invertible polynomial graph filter of a {\bf single}   shift,
  there are several methods to implement the inverse filtering in a distributed network  {\bf approximately} %anner
% \cite{shuman18,  segarra17,mario19,  Leus17, Emirov19, Shi15}.
\cite{Emirov19, mario19,  Leus17, Shi15,shuman18}, see Remark \ref{Chebyshev.remark}.
 The second main contribution of this paper is that we introduce   optimal  polynomial  filters
and multivariate Chebyshev polynomial  filters to provide  good approximations to the inverse
  of an invertible polynomial graph filter ${\bf H}$ of {\bf multiple}  shifts, see Section  \ref{ipaa.section}.
  % The second main contribution of this paper is on  finding a good approximation to the inverse
%  of an invertible polynomial graph filter ${\bf H}$ of multiple graph shifts.
%In Section  \ref{ipaa.section}, %of commutative shifts,
%we introduce  optimal  polynomial  filters
%and Chebyshev polynomial  filters  to  approximate the inverse filter ${\bf H}^{-1}$.
%\cc{Rewrite:   Apply the obtained optimal/Chebyshev polynomial filter   in the iterative approximation algorithm in Section  \ref{iterativeapproximation.section}, the induced iterative optimal polynomial
%approximation algorithm \eqref{optimaliterativedistributedalgorithm.eqn1}
%and the iterative Chebyshev polynomial approximation algorithm
%\eqref{Chebysheviterativedistributedalgorithm.eqn1} can be implemented at the vertex level in a distributed network and have exponential convergence, see Algorithms \ref{IOPA.algorithm}
%and \ref{ICPA.algorithm} and
% Theorems \ref{IOPAconvergence.thm} and \ref{icpa.thm}.
%}
Then, based on the iterative approximation algorithm in Section  \ref{iterativeapproximation.section},
we propose the iterative optimal polynomial
approximation algorithm \eqref{optimaliterativedistributedalgorithm.eqn1}
and the iterative Chebyshev polynomial approximation algorithm
\eqref{Chebysheviterativedistributedalgorithm.eqn1} to implement  the inverse filtering
procedure ${\bf b}\longmapsto {\bf H}^{-1}{\bf b}$,
 see
 Theorems \ref{IOPAconvergence.thm} and \ref{icpa.thm} for their exponential convergence. % in Section \ref{ipaa.section}.
More importantly, as shown in Algorithms \ref{IOPA.algorithm}
and \ref{ICPA.algorithm}, each iteration in the  proposed iterative  algorithms
mainly contains
 two  filtering procedures involving
data exchanging between {\bf adjacent} vertices only and hence they can be implemented in a {\bf distributed network} of large size,
 where  each vertex is equipped with systems for  limited data storage, computation power and
	data exchanging facility to its adjacent vertices.
%with direct communication between adjacent vertices.
%  and the Chebyshev polynomial approximation filters has been successfully applied  to the inverse filtering procedure, see \cite{shuman18} and references therein. {\color{blue} cite our sampta paper, and mention that for the polynomial of a single graph shift, the Chebyshev polynomial approximation with the iterative algorithm has been announced there. }

The paper is organized as follows. In Section \ref{implementation.section}, we consider  distributed
 implementation of the filtering procedure associated with polynomial graph filters
 of  multiple shifts at the vertex level. In Section   \ref{iterativeapproximation.section}, we propose an
  iterative approximation algorithm to implement an inverse filtering procedure.
In Section  \ref{ipaa.section}, we propose the iterative optimal polynomial
approximation algorithm
and the iterative Chebyshev polynomial approximation algorithm
 to implement  the inverse filtering
procedure associated with a polynomial filter.
The effectiveness of these two iterative algorithms to implement the inverse filtering procedure
 is demonstrated in Section
\ref{Numeric.section}.
In  \ref{commutative.section}, we introduce  two illustrative  families of commutative graph shifts
 on circulant graphs and product graphs respectively, and we define
  joint spectrum \eqref{jointspectrum.def} of multiple commutative graph shifts, which     %are commutative
 is crucial
   for us to  develop the iterative algorithms   for inverse filtering
     in Section  \ref{ipaa.section}.
In  \ref{commutative.section}, we also consider the problem when
a graph filter  is a polynomial of  multiple commutative graph  shifts
and how to measure the distance between a graph filter and the set of polynomial filters of commutative graph shifts.

	\section{Polynomial filter and distributed implementation}\label{implementation.section}

%	\begin{figure}[t]  %[t]
%		%\setlength{\belowcaptionskip}{-1cm}
%		\begin{center}
%			\includegraphics[width=90mm, height=22mm]{single_shift_diagram}
%			%\includegraphics[width=82mm, height=48mm]{two_shift_diagram}%{multI_shift_fig5_new}%{Pol_Implementation_Diagram}
%			\caption{Block diagram to implement the filtering procedure  ${\bf x}\longmapsto {\bf H}{\bf x}$ corresponding to a polynomial filter ${\bf H}=\sum^L_{l=0} h_l {\bf S}^l$.}
%%				of thshift ${\bf S}$. } %{\bf Too much space between figure caption and the text.}			
%			\label{onepolynomialfiltering_structure}
%		\end{center}
%		\vspace{-1em}
%	\end{figure}

%%%%%%%%%%%%%%%%%%%%%
   \begin{algorithm}[t]
\caption{ Backward iteratively  synchronous realization of the filtering procedure ${\bf x}\longmapsto {\bf H}{\bf x}$ for a polynomial filter ${\bf H}=\sum^L_{l=0} h_l {\bf S}^l$ at a vertex $i\in V$. }
\label{singleshiftprocedure.algorithm}
\begin{algorithmic}  %[1]

\STATE {\bf Inputs}: Polynomial coefficients
$h_0, h_1, \ldots, h_L$, entries $S(i,j), j\in {\mathcal N}_i$ in the $i$-th row of the shift ${\bf S}$, and
the  %signal
 value $x(i)$  of the input signal ${\bf x}=(x(i))_{i\in V}$ at the vertex $i$.

%\STATE {\bf Operation}: Evaluate $m_k=\mu(B(k, r))$, compute ${\bf F}_k= {\bf H}_{0,k}^T{\bf H}_{0,k}+ {\bf H}_{1,k}^T{\bf H}_{1,k}$,
% find its inverse  $({\bf F}_k)^{-1}$, and then compute $ {\bf G}^L_{l; k}:=({\bf F}_k)^{-1} {\bf H}_{l,k}^T, l=0, 1$.

%  , and a local approximation
%${\bf G}_k=(\tilde f_k(i,j))_{i,j\in B(k, 2r)}$  to the matrix ${\bf H}$
% and compute ${\bf F}_k= {\bf H}_{0,k}^T{\bf H}_{0,k}+ {\bf H}_{1,k}^T{\bf H}_{1,k}$ and
%$({\bf F}_k)^{-1}=(\tilde f_k(i,j))_{i,j\in B(k, 2r)}$

\STATE {\bf Initialization}:   $z^{(0)}(i)=h_L  x(i)$ and $n=0$.

\STATE{\bf 1)} Send $z^{(n)}(i)$ to its adjacent vertices $j\in {\mathcal N}_i$
and receive $z^{(n)}(j)$ from its  adjacent vertices $j\in {\mathcal N}_i$.

\STATE{\bf 2)} Update
 $z^{(n+1)}(i)=h_{L-n-1} x(i) + \sum\limits_{j\in {\mathcal N}_i} S(i, j) z^{(n)}(j).$

\STATE{\bf 3)}
 Set $n=n+1$ and return to  Step   {\bf 1)} if $n\le L-1$.

\STATE {\bf Output}: The  value $ \tilde x(i)=z^{(L)}(i)$  is  the output signal ${\bf H}{\bf x}=(\tilde x(i))_{i\in V}$ at the vertex $i$.
\end{algorithmic}  %\vspace{-.03in}
\end{algorithm}

Let ${\mathcal G}=(V, E)$ be a  connected, undirected and unweighted graph of order $N$.
 Graph shifts  ${\bf S}$ on ${\mathcal G}$ are building blocks of a polynomial filter.
 Our familiar examples of graph  shifts  %on a graph ${\mathcal G}$
are the adjacency matrix  ${\bf A}_{\mathcal G}$,
	%the degree matrix  ${\bf D}_{\mathcal G}$ of the graph ${\mathcal G}$,
	Laplacian matrix ${\bf L}_{\cal G}:={\bf D}_{\mathcal G}-{\bf A}_{\mathcal G}$,
symmetric	normalized Laplacian matrix ${\bf L}^{\rm sym}_{\cal G}={\bf D}_{\mathcal G}^{-1/2}{\bf L}_{\cal G}{\bf D}_{\mathcal G}^{-1/2}$
 %node-variant graph filters \cite{segarra17} and edge-variant graph filters \cite{Coutino17}
  and their
	variants, where ${\bf D}_{\mathcal G}$ is the degree matrix of the graph ${\mathcal G}$
\cite{Coutino17, jiang19, aliaksei13,  segarra17}.
% Illustrative examples of graph  shifts on a graph ${\mathcal G}$ are the adjacency matrix  ${\bf A}_{\mathcal G}$,
%	%the degree matrix  ${\bf D}_{\mathcal G}$ of the graph ${\mathcal G}$,
%	Laplacian matrix ${\bf L}_{\cal G}:={\bf D}_{\mathcal G}-{\bf A}_{\mathcal G}$,
%symmetric	normalized Laplacian matrix ${\bf L}^{\rm sym}_{\cal G}={\bf D}_{\mathcal G}^{-1/2}{\bf L}_{\cal G}{\bf D}_{\mathcal G}^{-1/2}$
% %node-variant graph filters \cite{segarra17} and edge-variant graph filters \cite{Coutino17}
%  and their
%	variants.
The filtering procedure
	${\bf x}\longmapsto {\bf S}{\bf x}$ associated with a graph shift ${\bf S}=(S(i,j))_{i,j\in V}$
	is a local operation that updates  signal value  at each vertex $i\in V$
	by a ``weighted" sum of signal values at  {\bf adjacent} vertices  $j\in {\mathcal N}_i$,
%	\vspace{-.5em}
\begin{equation*}
\label{weightedsum}
\tilde{x}(i)=\sum_{j\in\mathcal{N}_i}S(i,j)x(j), %\vspace{-.5em}
\end{equation*}
	where ${\bf x}=(x(i))_{i\in V}$, ${\bf S} {\bf x}=(\tilde {x}(i))_{i\in V}$, and ${\mathcal N}_i$
is the set of  adjacent vertices of $i\in V$. %, see Algorithm \ref{shiftprocedure.algorithm}.
%%%%%%%%%%%%%%%%%%%%%%%
% \begin{algorithm}[t]
%\caption{Realization of the filtering procedure for a graph shift ${\mathbf S}$ at a vertex $i\in V$. }
%\label{shiftprocedure.algorithm}
%\begin{algorithmic}  %[1]
%
%\STATE {\bf Inputs}: The $i$-th row $s(i,j), j\in {\mathcal N}_i$ of the shift ${\bf S}$ and
%the signal value $x(i)$  of the input signal ${\bf x}$ at the vertex $i$.
%
%%\STATE {\bf Operation}: Evaluate $m_k=\mu(B(k, r))$, compute ${\bf F}_k= {\bf H}_{0,k}^T{\bf H}_{0,k}+ {\bf H}_{1,k}^T{\bf H}_{1,k}$,
%% find its inverse  $({\bf F}_k)^{-1}$, and then compute $ {\bf G}^L_{l; k}:=({\bf F}_k)^{-1} {\bf H}_{l,k}^T, l=0, 1$.
%
%%  , and a local approximation
%%${\bf G}_k=(\tilde f_k(i,j))_{i,j\in B(k, 2r)}$  to the matrix ${\bf H}$
%% and compute ${\bf F}_k= {\bf H}_{0,k}^T{\bf H}_{0,k}+ {\bf H}_{1,k}^T{\bf H}_{1,k}$ and
%%$({\bf F}_k)^{-1}=(\tilde f_k(i,j))_{i,j\in B(k, 2r)}$
%
%\STATE{\bf 1)} Send $x(i)$ to all adjacent vertices $j\in {\mathcal N}_i$ and receive $x(j)$ from adjacent vertices $j\in {\mathcal N}_i$.
%\STATE{\bf 2)} Use \eqref{weightedsum} to calculate  $\tilde{x}(i)$.
%
%\STATE {\bf Output}: The signal value $\tilde x(i)$  of the output signal ${\bf x}$ at the vertex $i$.
%\end{algorithmic}  %\vspace{-.03in}
%\end{algorithm}
%%%%%%%%%%%%%%%%%%%%%%%%%%%%%%%%%%%%%%%%%%%%
		The above local implementation of filtering procedure    has been extended to a polynomial graph filter
	${\bf H}= \sum_{l=0}^L h_l {\bf S}^l $
	of the shift  ${\bf S}$,
%{\color{re \cite{Emirov19,Leus17, shuman18, Shi15}(more references here?)},
	\begin{equation} \label{iterativedistributedalgorithm.eqn100}
	\left\{ \begin{array}{l}
	{\bf z}^{(0)}= h_L  {\bf x}, \\
	{\bf z}^{(n+1)}= h_{L-n-1} {\bf x}+ {\bf S} {\bf z}^{(n)},\  n=0, \ldots, L-1,\\
	{\bf H}{\bf x}={\bf z}^{(L)},
	\end{array} \right.
	\end{equation}
	where the filtering procedure
${\bf x}\longmapsto {\bf H}{\bf x}$ is divided into $(L+1)$-steps with the procedure
	in each step
	being a local operation \cite{ Emirov19,  Leus17,  shuman18, Waheed18}.
%(\cite{moura14, isufi17, shuman18,  Leus17,  Emirov19}).
 The %block diagram of the above implementation
%is shown in  Figure \ref{onepolynomialfiltering_structure}, and
 parallel  realization of the above implementation \eqref{iterativedistributedalgorithm.eqn100} at the vertex level  is presented in   Algorithm  \ref{singleshiftprocedure.algorithm}.
	In this section, we  extend the above synchronized implementation at the vertex level
to the filtering procedure associated with  a polynomial graph filter ${\bf H}$
of multiple   shifts, and propose a recursive algorithm containing
  about $\sum_{m=0}^{d-1} \prod_{k=1}^{m+1}(L_k+1)$ %{\color{blue} or $L_k+1$}
steps with
 the output  value  at each vertex  in each step being updated from some weighted sum of the input  values at  adjacent vertices
 of its preceding step,
 see Algorithm \ref{distributed_Hx.algorithm}.
	
 Let ${\bf S}_k=(S_k(i,j))_{i,j\in V}, 1\le k\le d$, be commutative graph shifts and ${\bf H}$ be the polynomial graph filter  in \eqref{MultiShiftPolynomial} with $d\ge 2$.  For $1\le m\le d-1$
%  $N$ be the order of the graph ${\mathcal G}$.
%\cc{Reviewer suggests to define $v_m$ first, and also change the notation. }
%	\vspace{-.5em}\begin{eqnarray} \label{multi_index_order}
%	v_{m}(l_1, \ldots, l_{m}) & \hskip-0.08in = & \hskip-0.08in  1+ l_{m}+ (L_{m}+1) l_{m-1}\nonumber\\
%& \hskip-0.08in  &  +\cdots+
%\prod_{k'=2}^{m}(L_{k'}+1)l_1
%	\vspace{-.5em}\end{eqnarray}
we use
	\begin{equation} \label{multi_index_order}
	v_{m}(l_1, \ldots, l_{m}) =   l_{m}+ l_{m-1}(L_{m}+1)  +\cdots+ l_1
\prod_{k=2}^{m}(L_{k}+1)
\end{equation}
to denote the lexicographical order of $(l_1, \ldots, l_{m})$ with $0 \leq l_k \leq L_k, 1\le k\le m$.
Now we define a matrix ${\bf U}_{d-1}$   of size $N\times \prod_{k=1}^{d-1}(L_{k}+1)$
with its $v_{d-1}(l_1, \ldots, l_{d-1})$-th column given by
	\begin{equation} \label{algorithm.eqn0}
{\bf U}_{d-1}\big(:, v_{d-1}(l_1, \ldots, l_{d-1})\big)
	=\sum^{L_d}_{l_d=0} h_{l_1, \ldots, l_{d-1}, l_d} {\bf S}^{l_d}_d {\bf x}.
	\end{equation}
%
%
%
%
%
%% Define a matrix ${\bf U}_{d-1}$   of size $N\times \prod_{k=1}^{d-1}(L_{k}+1)$
%%with its $v_{d-1}(l_1, \ldots, l_{d-1})$-th column given by
%%	\begin{equation} \label{algorithm.eqn0}
%%{\bf U}_{d-1}\big(:, v_{d-1}(l_1, \ldots, l_{d-1})\big)
%%	=\sum^{L_d}_{l_d=0} h_{l_1, \ldots, l_{d-1}, l_d} {\bf S}^{l_d}_d {\bf x},
%%	\end{equation}
%%where  for $1\le m\le d-1$,
%%%	\vspace{-.5em}\begin{eqnarray} \label{multi_index_order}
%%%	v_{m}(l_1, \ldots, l_{m}) & \hskip-0.08in = & \hskip-0.08in  1+ l_{m}+ (L_{m}+1) l_{m-1}\nonumber\\
%%%& \hskip-0.08in  &  +\cdots+
%%%\prod_{k'=2}^{m}(L_{k'}+1)l_1
%%%	\vspace{-.5em}\end{eqnarray}
%%	\begin{equation} \label{multi_index_order}
%%	v_{m}(l_1, \ldots, l_{m}) =   l_{m}+ l_{m-1}(L_{m}+1)  +\cdots+ l_1
%%\prod_{k=2}^{m}(L_{k}+1)
%%\end{equation}
%%is  the lexicographical order of $(l_1, \ldots, l_{m})$ with $0 \leq l_k \leq L_k, 1\le k\le m$.
%
%
%
%
%
Follow the procedure in \eqref{iterativedistributedalgorithm.eqn100}, we can
evaluate ${\bf U}_{d-1}(:,  v_{d-1}(l_1, \ldots, l_{d-1}))$
	in $(L_d+1)$-steps with the filtering procedure
	in each step
	being a local operation,
 see Step 1 in Algorithm \ref{distributed_Hx.algorithm} for the distributed implementation at vertex level.
Moreover, one may verify that
 \begin{equation} \label{algorithm.eqn1}
{\bf H}{\bf x}= \sum_{ l_1=0}^{L_1}\cdots \sum_{l_{d-1}=0}^{L_{d-1}}
 {\bf S}_1^{l_1} \cdots  {\bf S}_{d-1}^{l_{d-1}}  {\bf U}_{d-1}(:, v_{d-1}( l_1, \ldots,  l_{d-1}))
\end{equation}
by \eqref{MultiShiftPolynomial} and \eqref{algorithm.eqn0}.
By induction on $m=d-2, \ldots, 1$, we define matrices ${\bf U}_m$ of size $N\times \prod_{k'=1}^m(L_{k'}+1)$
 by  %For $2\le m\le d-1$  we  define matrices ${\bf U}_m$
	\begin{equation} \label{algorithm.eqn2}
 {\bf U}_m\big(:, v_{m}(l_1, \ldots, l_{m})\big) 	=   \sum_{ l_{m+1}=0}^{L_{m+1}} {\bf S}_{m+1}^{l_{m+1}}  {\bf U}_{m+1}\big(:, v_{m+1}( l_1, \ldots, l_{m}, l_{m+1})\big)
	\end{equation}
%	
%	\vspace{-.5em}\begin{eqnarray} \label{algorithm.eqn2}
%\hskip-0.18in  & \hskip-0.08in 	 & \hskip-0.08in {\bf U}_m\big(:, v_{m}(l_1, \ldots, l_{m})\big)\nonumber\\
%% & \hskip-0.08in 	=  & \hskip-0.08in \sum^{L_{m}}_{l_{m}=0}
%%\cdots \sum^{L_d}_{l_d=0} h_{l_1, \ldots, l_{m-1}, l_m, \ldots,  l_d}\nonumber\\
%% & \hskip-0.08in 	  &  \qquad \quad \times {\bf S}_{m}^{l_m}\cdots {\bf S}^{l_d}_d {\bf x}\nonumber\\
%\hskip-0.18in & \hskip-0.08in 	=  & \hskip-0.08in \sum_{ l_{m+1}=0}^{L_{m+1}}
% {\bf S}_{m+1}^{l_{m+1}}  {\bf U}_{m+1}\big(:, v_{m+1}( l_1, \ldots, l_{m}, l_{m+1})\big)
%	\vspace{-.5em}\end{eqnarray}	
%
where  $0 \leq l_k \leq L_k, 1\le k\le m$.
 By induction on $m=d-2, \ldots, 1$
 we obtain from  \eqref{algorithm.eqn2} that
every column of the matrix ${\bf U}_m$
 can be %${\bf U}_m(:, l_1, \ldots, l_{m-1}), 0\le l_1\le L_1, \ldots, 0\le l_{m-1}\le L_{m-1}$
 evaluated from
 ${\bf U}_{m+1}$ %(:, l_1, \ldots, l_{m-1}, l_m), 0\le l_1\le L_1, \ldots, 0\le l_{m}\le L_{m}$
in $(L_{m+1}+1)$-steps,  see Step 3 in Algorithm \ref{distributed_Hx.algorithm}  for the distributed implementation at vertex level.
%%with the filtering procedure
%%	in each step	being a local operation,
%	\vspace{-.4em}\begin{equation} \label{iterativedistributedalgorithm.eqn11}
%	\left\{ \begin{array}{l}
%	{\bf w}^{(0)}=  {\bf U}_{m+1}(:, l_1, \ldots, l_{m-1}, L_m), \\ % {\bf y}_{m(l_1,\cdots,l_{d'-1}, L_{d'})}^{(d'+1)}, \\
%	{\bf w}^{(n)}= {\bf U}_{m+1}(:, l_1, \ldots, l_{m-1}, L_m-n)
% + {\bf S}_{m} {\bf w}^{(n-1)}\\
%  \ \ {\rm for} \ n=1, \ldots, L_{m},\\
%	{\bf U}_{m}(:, l_1, \ldots, l_{m-1})={\bf w}^{(L_m)}.
%	\end{array} \right.
%	\vspace{-.4em}\end{equation}
By \eqref{algorithm.eqn1}
and
\eqref{algorithm.eqn2},  we can prove
  \begin{equation} \label{algorithm.eqn3}
{\bf H}{\bf x}= \sum_{ l_1=0}^{L_1}\cdots \sum_{l_{m}=0}^{L_{m}}
 {\bf S}_1^{l_1} \cdots  {\bf S}_{m}^{l_{m}}  {\bf U}_{m}(:, v_m( l_1, \ldots,  l_{m}))
\end{equation}
by induction on $m=d-2, \ldots, 1$.
Taking $m=1$ in \eqref{algorithm.eqn3} yields
	\begin{equation}\label{algorithm.eqn4}
	{\bf H}{\bf x}=\sum_{ l_1=0}^{L_1} {\bf S}_1^{l_1} {\bf U}_1(:, l_1). % {\bf y}_{m(l_1,\cdots,l_{d-1})}^{(d)}.
	%\Big(\sum_{ l_2=0}^{L_2}  h_{m(l_1, \ldots, l_{d-1}, l_d)}{\bf S}_d^{l_d}\Big)
	%=:\sum_{l_1=0}^{L_1} {\bf H}_{l_1}({\bf S}_2) {\bf S}_1^{l_1},
	\end{equation}
By \eqref{algorithm.eqn4},
 we finally evaluated    the output   ${\bf H}{\bf x}$ of the filtering procedure
%	\vspace{-.5em}\begin{equation} \label{firstformula.eq02}
%	{\bf H}{\bf x}= \sum^{L_1}_{l_1=0}  {\bf S}^{l_{1}}_{1} {\bf y}_{m(l_1)}^{(2)},
%	\vspace{-.5em}\end{equation}
%	which can be obtained
from the matrix ${\bf U}_1$
	in $(L_1+1)$-steps with
the filtering procedure
	in each step
	being a local operation, see Step 4 in Algorithm \ref{distributed_Hx.algorithm} for the  implementation at vertex level.

	\begin{algorithm}[h]
		\caption{Synchronous realization of the filtering procedure ${\bf x}\longmapsto {\bf H}{\bf x}$
			for the polynomial filter  ${\bf H}$ of multiple graph shifts
%			${\bf H}=
%			\sum^{L_1}_{l_1=0} \cdots \sum_{l_d=0}^{L_d} h_{l_1,\ldots, l_d} {\bf S}^{l_1}_1 \cdots {\bf S}_d^{l_d}$
at a vertex $i\in V$. % where $d\ge 2$.
%			of three commutative graph shifts ${\bf S}_1, {\bf S}_2$ and ${\bf S}_2$.
}
		\label{distributed_Hx.algorithm}
		\begin{algorithmic}  %[1]
			\STATE {\bf Inputs}: \ Polynomial coefficients $h_{l_1,\ldots, l_d}, 0\le l_1\le L_1, \ldots, 0\le l_d\le L_d$ of the polynomial filter ${\bf H}$ in \eqref{MultiShiftPolynomial}, entries   $S_k(i,j), j\in {\mathcal N}_i$ of the $i$-th row of
			graph shifts ${\bf S}_k=(S_k(i,j))_{i,j\in V}, 1\le k\le d$, and the  value $x(i)$ of the input graph signal ${\bf x}=(x(k))_{k\in V}$ at vertex $i$.\\
			
			\STATE {\bf Step 1}:\  { Find the $i$-th row of the matrix ${\bf U}_{d-1}$.}\\
%=(U_{d-1}(i,j))_{i\in V, 0\le j\le  \prod_{k=1}^{d-1}(L_{k}+1)-1} $.

\hspace{0.5cm} {\bf for} $p=0, 1, \ldots,  \prod_{k=1}^{d-1}(L_{k}+1)-1 $\\ %(L_1+1)\cdots (L_{d-1}+1)-1$\\
\hspace{0.8cm} {\bf Step 1a}: Write $p= v_{d-1} (l_1, \ldots, l_{d-1})$
%\sum^{d-1}_{k=1} \big(\prod_{k'=k+1}^{d}(L_{k'}+1)\big)l_k+l_{d-1}$
for some $0\le l_k\le L_k, 1\le k\le d-1$.

\hspace{0.8cm} {\bf Step \hspace{-.05cm}1b}:\ Apply  Algorithm  \ref{singleshiftprocedure.algorithm}
with polynomial coefficients and  entries of the graph shift being replaced by
 polynomial coefficients $h_{l_1, \ldots, l_{d-1},  l_d}, 0\le l_d\le L_d$,  and entries
$S_d(i,j),j\in N_i$ in the $i$-th row of the shift ${\bf S}_d$, and denote
the corresponding output by $z^{(L_d)}(i)$.
%, and the value $x(i)$ of the input signal ${\bf x} = (x(i))i\in V$ at the vertex $i$.

% $z^{(0)}(i)=h_{l_1, \ldots, l_{d-1},  L_d} * x(i)$ and $n=0$. \\ %({\bf U}^3(i, j(M_3+1)))_{i\in \mathcal{N}_q^3}$\\
%			
%\hspace{0.8cm} {\bf Step 1b}:\  send $z^{(n)}(i)$ to its adjacent vertices $j\in {\mathcal N}_i$
%and receive $z^{(n)}(j)$ from its  adjacent vertices $j\in {\mathcal N}_i$.\\
%
%\hspace{0.8cm} {\bf  Step 1c}:\ update
% $z^{(n+1)}(i)=h_{l_1, \ldots, l_{d-1},  L_d-n-1}*x(i) + \sum\limits_{j\in {\mathcal N}_i} s_d(i, j) * z^{(n)}(j).$\\
%
% \hspace{0.8cm} {\bf  Step 1d}:\ set $n=n+1$ and return to  {\bf Step   1b} if $n\le L_d-1$.

\hspace{0.8cm} {\bf  Step 1c}:\  Set ${\bf U}_{d-1}(i,p)=z^{(L_d)}(i)$.\\

\hspace{0.5cm}{\bf end}

			\STATE {\bf Step 2}:\  {\bf if}   $d=2$,
 set  ${\bf W}(i, j)={\bf U}_{d-1}(i, j), 0\le j\le L_1$ and do {\bf Step 4}, {\bf otherwise} do {\bf Step 3}.

\STATE {\bf Step 3}:  {\ Find  the $i$-th row of the matrix ${\bf U}_{m}, d-2\ge  m\ge 1$.}\\

\hspace{0.5cm} {\bf for} $m=d-2, \ldots, 2, 1$\\
\hspace{0.8cm}
{\bf for} $p=0, 1, \ldots, \prod_{k=1}^{m}(L_{k}+1)-1$\\ % (L_1+1)\cdots (L_{m}+1)-1$\\
%\hspace{0.7cm} Write $j-1=\sum^{m-1}_{k=1} \big(\prod_{k'=k+1}^{m}(L_{k'}+1)\big)l_k+l_{m-1}$ for some $0\le l_k\le L_k, 1\le k\le m-1$,\\
		%\hspace{1cm}  % $l_1=\lfloor (j-1)/(L_2+1)\rfloor$, $l_2=j-1-(L_2+1)l_1$
\hspace{1cm} {\bf  Step 3a}: Apply   Algorithm  \ref{singleshiftprocedure.algorithm}
with polynomial coefficients, entries of the graph shift and the value of input being  replaced by
 polynomial coefficients $h_{l}=1, 0\le l\le L_{m+1}$,  entries
$S_{m+1}(i,j),j\in N_i$ in the $i$-th row of the shift ${\bf S}_{m+1}$,
and the value $z^{(0)}(i)={\bf U}_{m+1}\big(i, p(L_{m+1}+ 1) + L_{m+1}\big) $ of the $(p(L_{m+1}+ 1) + L_{m+1})$-column of the matrix
${\bf U}_{m+1}$,
 and denote
the corresponding  output by $z^{(L_{m+1})}(i)$.  %$z^{(L_d-1)}(i)$.

%{\bf Step 3a}:\  set  $w^{(0)}(i)={\bf U}_{m+1}\big(i,p(L_{m+1}+ 1) + L_{m+1}\big) $ and $n=0$. \\ %({\bf U}^3(i, j(M_3+1)))_{i\in \mathcal{N}_q^3}$\\
%			
%\hspace{1cm} {\bf Step 3b}:\  send $w^{(n)}(i)$ to its adjacent vertices $j\in {\mathcal N}_i$
%and receive $w^{(n)}(j)$ from its  adjacent vertices $j\in {\mathcal N}_i$.\\
%
%\hspace{1cm} {\bf  Step 3c}:\ update
% $z^{(n+1)}(i)={\bf U}_{m+1}(i,p(L_{m+1}+ 1) + L_{m+1}-n-1)  + \sum\limits_{j\in {\mathcal N}_i} s_{m+1}(i, j) * w^{(n)}(j).$\\
%
% \hspace{1cm} {\bf  Step 3d}:\ set $n=n+1$ and return to  {\bf Step   3b} if $n\le L_{m+1}-1$.

\hspace{1cm} {\bf  Step 3b}:\  Set ${\bf U}_{m}(i,p)=z^{(L_{m+1})}(i)$.\\

			\hspace{0.8 cm} {\bf end}\\
\hspace{0.5 cm} {\bf end}\\
\hspace{0.3 cm} Set  ${\bf W}(i, j)={\bf U}_{1}(i, j), 0\le j\le L_1$. %Set ${\bf W}={\bf U}_1$.
			\\
			\STATE {\bf  Step 4}:\  { Find the value  of the output signal ${\bf H}{\bf x}$ at vertex $i$.}\\
%$i$-th row of the matrix ${\bf U}_{d-1}$.}\\

\hspace{1cm} %\cc{I can not get this correctly. apply coefficients and then replace?}
{\bf  Step 4a}: Apply  Algorithm  \ref{singleshiftprocedure.algorithm}
with polynomial coefficients, entries of the graph shift and the value of input being   replaced by
 polynomial coefficients $h_{l}=1, 0\le l\le L_{1}$,  entries
$S_{1}(i,j),j\in N_i$ in the $i$-th row of the shift ${\bf S}_{1}$,
and the value $u^{(0)}(i)={\bf W}\big(i, L_{1}\big) $ of the $L_1$-column of the matrix
${\bf W}$.

\hspace{1cm} %\cc{I can not get this correctly. apply coefficients and then replace?}
{\bf  Step 4b}: Denote
the corresponding  output by $u^{(L_{1})}(i)$.

 %\hspace{0.5 cm}  {\bf Step 4a}:\    set $u^{(0)} ={\bf W}(i, L_1)$\\
%		\hspace{0.5cm} {\bf Step 4b}:\  send $u^{(n)}(i)$ to its adjacent vertices $j\in {\mathcal N}_i$
%and receive $u^{(n)}(j)$ from its  adjacent vertices $j\in {\mathcal N}_i$.\\
%
%\hspace{0.5cm} {\bf  Step 4c}:\ update
% $z^{(n+1)}(i)={\bf W}(i, L_{1}-n-1)  + \sum\limits_{j\in {\mathcal N}_i} s_{1}(i, j) * u^{(n)}(j).$\\
%
% \hspace{0.5cm} {\bf  Step 4d}:\ set $n=n+1$ and return to  {\bf Step   4a} if $n\le L_1-1$.
%
%%\hspace{1cm} {\bf  Step 4d}:\  set ${\bf U}_{m}(i,p)=z^{(L_{m+1})}(i)$.\\

			{\bf Output:}  The  value $ \tilde x(i)=u^{(L_1)}(i)$  is  the output signal ${\bf H}{\bf x}=(\tilde x(i))_{i\in V}$ at the vertex $i$.
		\end{algorithmic}
\end{algorithm}

	 Denote   the degree of the graph ${\mathcal G}$ by  $\deg{\mathcal G}$, and for two positive quantities $a$ and $b$, we denote $a=O(b)$ if $a\le Cb$ for some absolute constant $C$, which is always independent of the order $N$ of the graph ${\mathcal G}$ and it could be different at different occurrences.
 Recall from the definition of a graph shift on a graph ${\mathcal G}$ that
 the number of nonzero entries in  every row of a graph shift on the graph ${\mathcal G}$ is no more than  $\deg{\mathcal G}+1$.
  %of the graph ${\mathcal G}$.
To implement \eqref{algorithm.eqn0}, \eqref{algorithm.eqn2} and \eqref{algorithm.eqn4}
 in a central facility,  the operations of addition and multiplication %needed
are about  $2 N  (\deg {\mathcal G}+1)  \prod_{k=1}^{d}(L_k+1)  $,
$  2N (\deg {\mathcal G}+1) \sum_{m=1}^{d-2} \prod_{k=1}^{m+1}(L_k+1)$ and
$2 N  (\deg {\mathcal G}+1) (L_1+1)$ respectively,
and  memory required
are about $d(\deg {\mathcal G}+1)N+\prod_{k=1}^{d}(L_k+1)+ 2N+ N  \sum_{m=0}^{d-1} \prod_{k=1}^{m}(L_k+1)$
to store  the graph shifts ${\bf S}_1,\dots, {\bf S}_d$,
	the polynomial coefficients of the polynomial graph filter  ${\bf H}$,
	the original graph signal ${\bf x}$,  the output ${\bf H}{\bf x}$ of the filtering procedure and
matrices  ${\bf U}_m, 1\le m\le d-1$,  in \eqref{algorithm.eqn0}, \eqref{algorithm.eqn2} and \eqref{algorithm.eqn3}.
Hence for the implementation of the filter procedure
${\bf x}\longmapsto {\bf H}{\bf x}$ in a central facility via applying \eqref{algorithm.eqn0}, \eqref{algorithm.eqn2} and \eqref{algorithm.eqn4},
the total computational cost is about $O\big( N \deg {\mathcal G} + (N+L_d+1) \prod_{k=1}^{d-1}(L_k+1)\big)$
and
the memory requirement is about
$O\big(N (\deg {\mathcal G}+1) \prod_{k=1}^{d}(L_k+1)\big)$.

	%\begin{remark}
Shown in Algorithm \ref{distributed_Hx.algorithm} is the  implementation
of
\eqref{algorithm.eqn0}, \eqref{algorithm.eqn2} and \eqref{algorithm.eqn4}
 at the vertex level. Hence it is implementable in  a distributed network where each agent is equipped with a data processing subsystem  for limited data storage and computation power, and a communication subsystem for
	direct data exchange to its adjacent vertices.
Denote the cardinality of a set $E$ by $\# E$.
To implement Algorithm \ref{distributed_Hx.algorithm} in a distributed network, we see
that the data processing subsystem  at a vertex $i\in V$ performs
 about $O\big(   (\# {\mathcal N}_i+1)\sum_{m=0}^{d-1} \prod_{k=1}^{m+1}(L_k+1)\big )=O\big((\deg {\mathcal G}+1) \prod_{k=1}^{d}(L_k+1)\big)$  operations of addition and multiplication, %\cc{Big O stands for about, do we still want to write about $O(\sharp)$. }
and it  stores  data of size about $O\big( \prod_{k=1}^{d}(L_k+1)+ (\# {\mathcal N}_i+1)( d+2+\sum_{m=0}^{d-1} \prod_{k=1}^{m}(L_k+1))\big ) = O\big((\deg {\mathcal G}+L_d+1) \prod_{k=1}^{d-1}(L_k+1)\big)$, including
	 polynomial coefficients of the  filter  ${\bf H}$,
	the $i$-th row of  graph shifts ${\bf S}_1,\dots, {\bf S}_d$,
	and the $i$-th and its adjacent $j$-th components of the original graph signal ${\bf x}$,
the output ${\bf H}{\bf x}$ of the filtering procedure and
	the matrices ${\bf U}_m, 1\le m\le d-1$, where $j\in {\mathcal N}_i$.
Comparing the implementation of \eqref{algorithm.eqn0}, \eqref{algorithm.eqn2} and \eqref{algorithm.eqn4}  in a central facility, the total computational cost to implement
%Algorithm \ref{distributed_Hx.algorithm} in a  distributed network is almost the same, while the total memory is slightly larger, since the
%polynomial coefficients of the polynomial filter  ${\bf H}$ needs to be stored at every agent in a distributed  network and
%it is enough to store one copy of the coefficients  in a central facility.
 Algorithm \ref{distributed_Hx.algorithm} in a  distributed network is almost the same, while the total memory is slightly  large, since the
polynomial coefficients of the polynomial graph filter  ${\bf H}$ need to be stored at every agent in a distributed  network while
only one copy of the coefficients needs to be stored   in a central facility.
In addition to data processing in a central facility, the implementation of Algorithm \ref{distributed_Hx.algorithm} in a distributed network requires
that every agent $i\in V$ communicates with its adjacent agents $j\in {\mathcal N}_i$ with
the $j$-th components of the original graph signal ${\bf x}$, matrices ${\bf U}_m, 1\le m\le d-1$ and the output ${\bf H}{\bf x}$ of filtering procedure, which is  about $O\big(   \# {\mathcal N}_i \prod_{k=1}^d (L_k+1)\big )= O( (\deg {\mathcal G}+1)\prod_{k=1}^d (L_k+1))$  loops.
We observe that for the implementation of  the proposed  Algorithm \ref{distributed_Hx.algorithm} in a distributed network,  the computational cost, memory requirement
 and communication expense for the data processing and communication subsystems equipped at each agent is {\bf independent} on the size $N$ of the network.

	\section{Inverse filtering and iterative  approximation  algorithm}
\label{iterativeapproximation.section}

	%where ${\bf b}$ be a graph signal and   ${\bf H}=h({\bf S}_1,...,{\bf S}_d)$ is  a polynomial filter of  graph shifts
	%${\bf S}_1,...,{\bf S}_d$ satisfying \eqref{commutativityS}
	% where ${\bf H}$ is a polynomial of the graph shifts ${\bf S}_1,...,{\bf S}_d$     {\color{red} \cite{segarra17, narang13,shuman18}}.

	%
	%  In this scenario, the centralized processing manner is infeasible since there is no strong central node to govern all computation over network.  The distributed processing manner that only involving local interaction between agents is indispensable approach for SDNs. Many operators on graph can be naturally implemented distributively, such as the graph filtering with filters being the finite polynomials of graph shift matrix \cite{segarra17}. However, not all operators can be realized in a distributed manner, for instance, the operator in graph denoising involves an inverse filtering that is not polynomials with respect to the graph shift matrix \cite{shuman18}.

%Let ${\mathcal G}=(V, E)$ be a connected, undirected and unweighted graph and
Let  ${\bf H}$ be an invertible graph filter on the graph ${\mathcal G}$.
 %on the \cc{ a connected, undirected and unweighted graph ? as it has been mentioned in section 4. }graph ${\mathcal G}$.
	In  some applications,  such as  signal denoising,  inpainting,  smoothing, reconstructing and  semi-supervised learning
%\cite{shuman18, jiang19}, \cite{mario19}--\cite{Emirov19}, \cite{ Shi15, sihengTV15, siheng_inpaint15},
{\cite{siheng_inpaint15,sihengTV15,  Emirov19, mario19, Leus17,  jiang19, Lu18, Shi15,shuman18}, }
an inverse filtering procedure
%	\vspace{-.4em}
\begin{equation}\label{inverseprocedure}
	{\bf x}= {\bf H}^{-1} {\bf b}
%	\vspace{-.4em}
\end{equation}
is involved.
	 In this section, we select a graph filter  ${\bf G}$ which provides an approximation to the inverse
filter ${\bf H}^{-1}$,  propose an iterative approximation algorithm
with each iteration including filtering procedures associated with filters ${\bf H}$ and ${\bf G}$,
%\eqref{iterativedistributedalgorithm.eqn1} and \eqref{iterativedistributedalgorithm.eqn2},
and show that the proposed algorithm  \eqref{iterativedistributedalgorithm.eqn1} and \eqref{iterativedistributedalgorithm.eqn2}
converges exponentially. % to the output of the inverse filtering procedure \eqref{inverseprocedure}.
The challenge to apply  the iterative approximation  algorithm \eqref{iterativedistributedalgorithm.eqn1}
	and \eqref{iterativedistributedalgorithm.eqn2}
%	 to  implement the inverse filtering procedure  \eqref{inverseprocedure}
is how to select
   the  filter ${\bf G}$ to approximate
the inverse filter ${\bf H}^{-1}$ appropriately, which will be discussed in the next section when ${\bf H}$
is a polynomial filter of commutative graph shifts.

 %  The selection of the approximation filter ${\bf G}$ will be discussed in the next section when ${\bf H}$ is a polynomial filter of multiple graph shifts.

%Denote the identity matrix by ${\bf I}$  and
Denote the spectral radius of a matrix ${\bf A}$ by $\rho({\bf A})$.
Take a graph filter 	${\bf G}$ such that
	the spectral radius of ${\bf I}-{\bf H}{\bf G}$  is strictly less than 1, i.e.,
	\begin{equation}\label{sigmaHG.eq} \rho({\bf I}-{\bf H}{\bf G})<1.
	\end{equation}
By Gelfand's formula on spectral radius,
  the requirement \eqref{sigmaHG.eq}  can be reformulated as
	\begin{equation}\label{sigmaHG.eq2}
	\rho({\bf I}-{\bf H}{\bf G})=\lim_{n\to \infty} \|({\bf I}-{\bf  HG})^n\|_2^{1/n}<1,
	\end{equation}
  where $\|{\bf x}\|_2$ is Euclidean  norm of a vector ${\bf x}$ %=(x(i))_{i\in V}$
  and $\|{\bf A}\|_2=\sup_{\|{\bf x}\|_2=1} \|{\bf A}{\bf x}\|_2$ is the operator norm of a matrix ${\bf A}$.
	By \eqref{sigmaHG.eq2}, we can rewrite the inverse filtering procedure
\eqref{inverseprocedure}  as
	\begin{equation}\label{inverseprocedure.new}
	{\bf x}= {\bf G} \big({\bf I}-({\bf I}-{\bf HG})\big)^{-1}{\bf b}= {\bf G} \sum_{n=0}^\infty ({\bf I}-{\bf HG})^n {\bf b}
	\end{equation}
	%where the convergence follows from \eqref{HG.eq}.
by applying Neumann series to ${\bf I}-{\bf HG}$.	Based on the above expansion, %representation,
 we propose the following iterative algorithm to implement the inverse filtering procedure \eqref{inverseprocedure}:
	\begin{equation} \label{iterativedistributedalgorithm.eqn1}
	\left\{ \begin{array}{l}
	{\bf z}^{(m)}= {\bf G} {\bf e}^{(m-1)}, \\
	{\bf e}^{(m)}={\bf e}^{(m-1)}-{\bf H} {\bf z}^{(m)},\\
	{\bf x}^{(m)}={\bf x}^{(m-1)}+ {\bf z}^{(m)}, \ m\ge 1,
	\end{array} \right.
	\end{equation}
	with initials
	\begin{equation} \label{iterativedistributedalgorithm.eqn2}
	{\bf e}^{(0)}={\bf b} \ \  {\rm and} \ \ {\bf x}^{(0)}={\bf 0}.
	\end{equation}
	Due to the approximation property \eqref{sigmaHG.eq}  of the graph filter ${\bf G}$ to the inverse filter ${\bf H}^{-1}$, we call the above algorithm \eqref{iterativedistributedalgorithm.eqn1}
	and \eqref{iterativedistributedalgorithm.eqn2} as an {\em iterative approximation algorithm}.
   In the following theorem, we show that  the requirement \eqref{sigmaHG.eq} for the approximation filter is a sufficient and necessary condition for
the exponential convergence of 		the iterative approximation algorithm \eqref{iterativedistributedalgorithm.eqn1}
	and \eqref{iterativedistributedalgorithm.eqn2}. %, see Appendix \ref{convergence_ICPA.appendix} for the proof.

%	Therefore we conclude from \eqref{sigmaHG.eq2} and \eqref{xm.nondef} that
%
%	converges exponentially.  %the exponential convergence of

	\begin{theorem} \label{convergence_ICPA.thm} {\rm
		Let   ${\bf H}$ be an invertible graph filter and ${\bf G}$ be a graph filter.
		%$=h({\bf S}_1,...,{\bf S}_d)$ be  a polynomial filter of a graph shifts ${\bf S}_1,...,{\bf S}_d$.
		Then   ${\bf G}$ satisfies   \eqref{sigmaHG.eq}  if and only if
		for any graph signal ${\bf b}$, the sequence ${\bf x}^{(m)}, m\ge 1$, in the iterative approximation  algorithm \eqref{iterativedistributedalgorithm.eqn1}
		and \eqref{iterativedistributedalgorithm.eqn2} converges exponentially to
		% the   % optimal
		%solution
		${\bf H}^{-1} {\bf b}$.
	Furthermore, for any $r\in (\rho({\bf I}-{\bf H}{\bf G}), 1)$, there exists a positive constant $C$  such that
	\begin{equation}\label{desiredestimate}
		\|{\bf x}^{(m)}- {\bf H}^{-1} {\bf b} \|_2 \leq   C \|{\bf x}\|_2
		r^m, \ m\ge 1.
		%\varepsilon \|\tilde{\bf x}-{\bf x}^{(m-1)}\|_2.
	\end{equation}
	}\end{theorem}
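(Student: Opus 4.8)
The plan is to reduce the whole statement to the asymptotics of the matrix powers $({\bf I}-{\bf G}{\bf H})^m$, after deriving a closed-form expression for the error ${\bf x}^{(m)}-{\bf H}^{-1}{\bf b}$. First I would record two consequences of the recursion \eqref{iterativedistributedalgorithm.eqn1}. Substituting ${\bf z}^{(m)}={\bf G}{\bf e}^{(m-1)}$ into the update ${\bf e}^{(m)}={\bf e}^{(m-1)}-{\bf H}{\bf z}^{(m)}$ gives ${\bf e}^{(m)}=({\bf I}-{\bf H}{\bf G}){\bf e}^{(m-1)}$, so that ${\bf e}^{(m)}=({\bf I}-{\bf H}{\bf G})^m{\bf b}$ by induction together with the initialization \eqref{iterativedistributedalgorithm.eqn2}. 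Independently, a one-line induction on $m$ using ${\bf x}^{(m)}={\bf x}^{(m-1)}+{\bf z}^{(m)}$ establishes the residual identity ${\bf e}^{(m)}={\bf b}-{\bf H}{\bf x}^{(m)}$ for all $m\ge 0$. Combining the two and using the invertibility of ${\bf H}$ yields
\begin{equation*}
{\bf x}^{(m)}-{\bf H}^{-1}{\bf b}=-{\bf H}^{-1}({\bf I}-{\bf H}{\bf G})^m{\bf b}.
\end{equation*}

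Next I would pass from ${\bf I}-{\bf H}{\bf G}$ to ${\bf I}-{\bf G}{\bf H}$. The elementary identity ${\bf H}^{-1}({\bf I}-{\bf H}{\bf G})=({\bf I}-{\bf G}{\bf H}){\bf H}^{-1}$, iterated $m$ times, converts the display above into the clean error formula
\begin{equation*}
{\bf x}^{(m)}-{\bf H}^{-1}{\bf b}=-({\bf I}-{\bf G}{\bf H})^m{\bf H}^{-1}{\bf b}=-({\bf I}-{\bf G}{\bf H})^m{\bf x},
\end{equation*}
where ${\bf x}={\bf H}^{-1}{\bf b}$. Moreover, since ${\bf H}$ is invertible, the similarity ${\bf H}({\bf I}-{\bf G}{\bf H}){\bf H}^{-1}={\bf I}-{\bf H}{\bf G}$ gives $\rho({\bf I}-{\bf H}{\bf G})=\rho({\bf I}-{\bf G}{\bf H})$, so the requirement \eqref{sigmaHG.eq} is equivalent to $\rho({\bf I}-{\bf G}{\bf H})<1$.

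With the error formula in hand both implications follow quickly. For sufficiency, assuming \eqref{sigmaHG.eq} and fixing $r\in(\rho({\bf I}-{\bf H}{\bf G}),1)=(\rho({\bf I}-{\bf G}{\bf H}),1)$, Gelfand's formula \eqref{sigmaHG.eq2} supplies a constant $C$ with $\|({\bf I}-{\bf G}{\bf H})^m\|_2\le C r^m$ for all $m$, whence $\|{\bf x}^{(m)}-{\bf H}^{-1}{\bf b}\|_2\le \|({\bf I}-{\bf G}{\bf H})^m\|_2\,\|{\bf x}\|_2\le C r^m\|{\bf x}\|_2$, which is exactly \eqref{desiredestimate}. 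For necessity, if ${\bf x}^{(m)}\to{\bf H}^{-1}{\bf b}$ for every ${\bf b}$, then since ${\bf x}={\bf H}^{-1}{\bf b}$ ranges over all graph signals as ${\bf b}$ does, the error formula forces $({\bf I}-{\bf G}{\bf H})^m{\bf x}\to{\bf 0}$ for every ${\bf x}$, i.e. $({\bf I}-{\bf G}{\bf H})^m\to{\bf 0}$; as the powers of a matrix tend to the zero matrix precisely when its spectral radius is below $1$, this yields $\rho({\bf I}-{\bf G}{\bf H})=\rho({\bf I}-{\bf H}{\bf G})<1$.

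The argument is at heart a Neumann-series/fixed-point computation, so no step is genuinely hard; the one place demanding care is the bookkeeping that expresses the error through ${\bf x}$ rather than ${\bf b}$. The operator that actually governs the error is ${\bf I}-{\bf G}{\bf H}$, not the ${\bf I}-{\bf H}{\bf G}$ appearing in \eqref{sigmaHG.eq}, and I expect the main obstacle to be invoking the similarity ${\bf H}({\bf I}-{\bf G}{\bf H}){\bf H}^{-1}={\bf I}-{\bf H}{\bf G}$ correctly so that the spectral-radius hypothesis and the factor $\|{\bf x}\|_2$ in \eqref{desiredestimate} line up. The other point to state explicitly is the quantifier in the necessity direction: convergence must be assumed for all ${\bf b}$, so that after applying ${\bf H}^{-1}$ the vectors ${\bf x}$ exhaust the whole space and pointwise decay of $({\bf I}-{\bf G}{\bf H})^m$ upgrades to decay of the matrix, hence to $\rho({\bf I}-{\bf G}{\bf H})<1$.
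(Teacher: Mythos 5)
Your proposal is correct, and while it rests on the same engine as the paper's proof (powers of the iteration matrix controlled via Gelfand's formula \eqref{sigmaHG.eq2}), both halves are executed by a genuinely different route. For sufficiency, the paper expands ${\bf x}^{(m)}={\bf G}\sum_{n=0}^{m-1}({\bf I}-{\bf HG})^n{\bf b}$ and bounds the error as the tail $\bigl\|{\bf G}\sum_{n=m}^{\infty}({\bf I}-{\bf HG})^n{\bf b}\bigr\|_2$ of the Neumann series \eqref{inverseprocedure.new}, picking up a constant of the form $C_0\|{\bf G}\|_2\|{\bf H}\|_2/(1-r)$; you instead combine the residual identity ${\bf e}^{(m)}={\bf b}-{\bf H}{\bf x}^{(m)}$ with the intertwining relation ${\bf H}^{-1}({\bf I}-{\bf HG})^m=({\bf I}-{\bf GH})^m{\bf H}^{-1}$ to get the closed-form error ${\bf x}^{(m)}-{\bf H}^{-1}{\bf b}=-({\bf I}-{\bf GH})^m{\bf x}$, which involves a single matrix power rather than a series tail and yields \eqref{desiredestimate} with a cleaner constant. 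For necessity, the paper argues by contraposition with an explicit eigenvector ${\bf b}_0$ for an eigenvalue $|\lambda|\ge 1$, showing the iterates either freeze at zero or diverge; you instead observe that convergence for all ${\bf b}$ forces $({\bf I}-{\bf GH})^m{\bf x}\to{\bf 0}$ for every ${\bf x}$, hence $({\bf I}-{\bf GH})^m\to{\bf 0}$ (apply to a basis), and invoke the standard equivalence between decay of matrix powers and spectral radius below one. Your version is shorter and in fact proves slightly more, since it deduces \eqref{sigmaHG.eq} from mere convergence rather than exponential convergence; the paper's eigenvector construction has the merit of being self-contained and of exhibiting concretely how the iteration fails. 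Both arguments are complete; the only point worth stating explicitly in yours is the (easy, finite-dimensional) step that pointwise decay $({\bf I}-{\bf GH})^m{\bf x}\to{\bf 0}$ for all ${\bf x}$ upgrades to norm decay of the matrix, which you do flag.
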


\begin{proof} %First the sufficiency.
 $\Longrightarrow$:\
	Applying the first two equations in \eqref{iterativedistributedalgorithm.eqn1} gives
$${\bf e}^{(m)}=({\bf I}-{\bf H}{\bf G}){\bf e}^{(m-1)},\  m\ge 1.$$
 Applying the above expression repeatedly
and using the initial % ${\bf b}^{(0)}={\bf b}$
in  \eqref{iterativedistributedalgorithm.eqn2}
 yields
	\vspace{-0.4em}\begin{equation}\label{convergence_ICPA.thm.pf.eq0-1}
	{\bf e}^{(m)}  = %({\bf I}-{\bf H}{\bf G}){\bf b}^{(m-1)} =\cdots
({\bf I}-{\bf H}{\bf G})^{m}{\bf b}, \ m\ge 0.
	\vspace{-0.4em}\end{equation}
Combining \eqref{convergence_ICPA.thm.pf.eq0-1} and the first and third equations in \eqref{iterativedistributedalgorithm.eqn1} gives
$$ {\bf x}^{(m)}= {\bf x}^{(m-1)}+ {\bf  G} ({\bf I}-{\bf H}{\bf G})^{m-1}{\bf b},\  m\ge 1.$$
Applying the above expression for ${\bf x}^{(m)}, m\ge 1$, repeatedly
and using the initial % ${\bf b}^{(0)}={\bf b}$
in  \eqref{iterativedistributedalgorithm.eqn2}, we obtain
	\vspace{-0.6em}\begin{equation} \label{xm.def}
	{\bf x}^{(m)}=
	{\bf G} \sum_{n=0}^{m-1}
	({\bf I}-{\bf H}{\bf G})^n {\bf b},\  m\ge 1.  \vspace{-0.4em}\end{equation}
By \eqref{sigmaHG.eq2}, there exists a positive constant $C_0$ for any $r\in (\rho({\bf I}-{\bf H}{\bf G}), 1)$ such that
\begin{equation}  \label{xm.def22}
\|({\bf I}-{\bf H}{\bf G})^n\|_2\le C_0 r^n,\  n\ge 1.
\end{equation}
	Combining \eqref{sigmaHG.eq},
  \eqref{inverseprocedure.new} and \eqref{xm.def}, we obtain
	\vspace{-0.4em}\begin{equation} \label {xm.nondef}
	\|{\bf x}^{(m)}-{\bf x}\|_2
  =   \Big\|{\bf G} \sum_{n=m}^\infty ({\bf I}-{\bf H}{\bf G})^n {\bf b}\Big\|_2.
%\nonumber \\
\vspace{-0.4em}
\end{equation}
From \eqref{xm.def22} and \eqref{xm.nondef} it follows that
\begin{equation*} \label{xm.nondef2}
  \|{\bf x}^{(m)}-{\bf x}\|_2\le  \|{\bf G} \|_2  \|{\bf b}\|_2 \sum_{n=m}^\infty \|({\bf I}-{\bf H}{\bf G})^n\|_2
	 \le   C_0 \|{\bf G} \|_2 \|{\bf H}\|_2\|{\bf x}\|_2
\sum_{n=m}^\infty r^n\le
\frac{C_0\|{\bf G} \|_2 \|{\bf H}\|_2}{1-r}  r^m \|{\bf x}\|_2	\vspace{-0.4em}
\end{equation*}
%\begin{eqnarray*} \label{xm.nondef2}
%  & & \|{\bf x}^{(m)}-{\bf x}\|_2\le  \|{\bf G} \|_2  \|{\bf b}\|_2 \sum_{n=m}^\infty \|({\bf I}-{\bf H}{\bf G})^n\|_2
%	\nonumber\\
%	& \hskip-0.08in \le & \hskip-0.08in  C_0 \|{\bf G} \|_2 \|{\bf H}\|_2\|{\bf x}\|_2
%\sum_{n=m}^\infty r^n\le
%\frac{C_0\|{\bf G} \|_2 \|{\bf H}\|_2}{1-r}  r^m \|{\bf x}\|_2	\vspace{-0.4em}\end{eqnarray*}
 for all $m\ge 1$. This proves the exponential convergence of ${\bf x}^{(m)}, m\ge 0$ to ${\bf H}^{-1}{\bf b}$.

$\Longleftarrow$:\   Suppose on the contrary that \eqref{sigmaHG.eq} does not hold.  Then there exist
an eigenvalue  $\lambda$ of ${\bf I}-{\bf H}{\bf G}$  and an eigenvector ${\bf b}_0$
such that
\begin{equation}\label{Gb0.eq1} |\lambda|\ge 1 \ \ {\rm and}\  \ ({\bf I}-{\bf H}{\bf G}){\bf b}_0=\lambda {\bf b}_0.\end{equation}
 Then the sequence  ${\bf x}^{(m)}, m\ge 1$, in  the iterative approximation  algorithm \eqref{iterativedistributedalgorithm.eqn1}
		and \eqref{iterativedistributedalgorithm.eqn2} with ${\bf b}$ replaced by ${\bf b}_0$ becomes
\begin{equation*}\label{Gb0.eq2} \vspace{-0.4em}
{\bf x}^{(m)}=\Big(\sum_{n=0}^{m-1} \lambda^n\Big) {\bf G} {\bf b}_0=\left\{\begin{array}{ll}
\frac{\lambda^m-1}{\lambda-1} {\bf G} {\bf b}_0  & {\rm if} \ \lambda\ne 1\\
m {\bf G} {\bf b}_0 & {\rm if} \ \lambda=1\end{array}\right.
%\vspace{-0.4em}
\end{equation*}
 by \eqref{xm.def} and \eqref{Gb0.eq1}. Hence
 the sequence ${\bf x}^{(m)}, m\ge 1$, does not converge to the nonzero vector ${\bf H}^{-1} {\bf b}_0$, since it is identically zero if ${\bf G} {\bf b}_0= {\bf 0}$, and it diverges by the assumption that $|\lambda|\ge 1$ if ${\bf G} {\bf b}_0\ne {\bf 0}$.
 This contradicts to the exponential convergence assumption and completes the proof. % of the necessity.
\end{proof}

	%We postpone the proof of Theorem \ref{convergence_ICPA.thm} to Appendix \ref{convergence_ICPA.appendix}.
By  Theorem  \ref{convergence_ICPA.thm},
	the inverse filtering procedure \eqref{inverseprocedure}
	can be implemented by applying the iterative approximation  algorithm \eqref{iterativedistributedalgorithm.eqn1}
	and \eqref{iterativedistributedalgorithm.eqn2} with the  graph filter ${\bf G}$ being chosen so that \eqref{sigmaHG.eq} holds.

	\medskip

%	 Moreover, the corresponding iterative approximation algorithm  with ${\bf G}=\gamma {\bf I}$  coincides with the
%gradient descent method \eqref{gradientdescent.al} with zero initial, see Remark \ref{remark.grad}.

	We finish this section with two remarks on the comparison among
	the gradient descent method \cite{Shi15},
	the  autoregressive moving
	average (ARMA) method  \cite{Leus17}, and the  proposed
	iterative
	approximation algorithm \eqref{iterativedistributedalgorithm.eqn1} and \eqref{iterativedistributedalgorithm.eqn2}, cf. Remark \ref{iopa.re}.

	\begin{remark}\label{remark.grad}{\rm
			For a positive definite graph filter ${\bf H}$, % has its spectrum contained in  the positive axis,
			the inverse filtering procedure \eqref{inverseprocedure}
			can be implemented by the gradient descent method
			\begin{equation}\label{gradientdescent.al}
			{\bf x}^{(m)}= {\bf x}^{(m-1)}-\gamma ({\bf H} {\bf x}^{(m-1)}-{\bf b}),\ \  m\ge 1,
			\end{equation}
associated with  the unconstrained optimization problem having the objective function  $F({\bf x})={\bf x}^T {\bf H}{\bf x}-{\bf x}^T{\bf b}$,
where $\gamma$  is an appropriate step length and ${\bf x}^T$ is the transpose of a vector ${\bf x}$.
			The above iterative method
% is known			as FastIDIIR algorithm, which
 is shown in  \cite{Shi15} to be convergent when $0<\gamma<2/\alpha_2$
 and to have fastest convergence when $\gamma=2/(\alpha_1+\alpha_2)$,
 where $\alpha_1$ and $\alpha_2$ are the minimal and maximal eigenvalues
of the matrix ${\bf H}$, cf. Remark \ref{iopa.re}.
		By \eqref{gradientdescent.al}, we have that
			\begin{equation}\label{fastidiir}
			{\bf x}^{(m)}=\gamma \sum_{n=0}^{m-1} ({\bf I}-\gamma {\bf H})^n {\bf b}+ ({\bf I}-\gamma {\bf H})^m {\bf x}^{(0)},\  m\ge 1.
			\end{equation}
			By  \eqref {fastidiir} and \eqref{xm.def} in Theorem  \ref{convergence_ICPA.thm}, the sequence  ${\bf x}^{(m)}, m\ge 1$, in the gradient descent algorithm with zero initial {\bf coincides} with the
			sequence in  the iterative approximation algorithm  \eqref{iterativedistributedalgorithm.eqn1}
			and \eqref{iterativedistributedalgorithm.eqn2}
			with ${\bf G}=\gamma {\bf I}$, in which
			the requirement \eqref{sigmaHG.eq} is met  as the spectrum of ${\bf I}-{\bf HG}$ is contained in $[1-\gamma \alpha_2, 1-\gamma\alpha_1]\subset (-1, 1)$ whenever $0<\gamma<2/\alpha_2$.
%			It is observed in  \cite{Shi15} that the gradient descent algorithm
%			with step length $\gamma=2/(\alpha_1+\alpha_2)$
%			has fastest convergence,
%			where  respectively,
%where $\gamma_1=\min_{\pmb \lambda_i\in \Lambda} h(\pmb \lambda_i)>0$ and
%			$\alpha_2=\max_{\pmb \lambda_i\in \Lambda} h(\pmb \lambda_i)>0$,
%cf. Remark \ref{iopa.re}.
	}\end{remark}

	\begin{remark}\label{remark.arma}{\rm
			Let ${\bf S}$ be a graph shift and
 $h$ be a polynomial of order $L$ with its  distinct nonzero roots  $1/b_l$ %, 1\le k\le L$,
 satisfying
			\begin{equation}\label{arma.eq1}
|b_l| \|{\bf S}\|_2<1, \ 1\le l\le L.
			\end{equation}
%outside the disk
%			$B(0, \|{\bf S}\|_2)=\{z\in \C, |z|\le  \|{\bf S}\|_2\}$.
%Then  $h(t)= h(0) \prod_{l=1}^L (1-b_k t)$
%%for some distinct complex numbers $b_k, 1\le k\le L$
%%			Write
%				%\vspace{-.4em}\begin{equation*}
%and
Applying partial fraction decomposition to the rational function $1/h(t)$ gives
 $$(h(t))^{-1}= \sum_{k=1}^L {a_k}(1-b_k t)^{-1}$$ %	\vspace{-.4em}\end{equation*}
			for some coefficients $a_k, 1\le k\le L$. % satisfying
Then for the polynomial filter
			${\bf H}=h({\bf S})$,  we can decompose the inverse filter ${\bf H}^{-1}$
			into a family of elementary inverse filters  $({\bf I}- b_k {\bf S})^{-1}$,
			\begin{equation*}
			{\bf H}^{-1} =\sum_{k=1}^L  a_k ({\bf I}-b_k {\bf S})^{-1}.
			\end{equation*}
			Due to the above decomposition,
			the  inverse filtering procedure \eqref{inverseprocedure} can be implemented as
			follows,
			\begin{equation} \label{ARMA00}
			{\bf x}=\sum_{k=1}^L a_k ({\bf I}-b_k {\bf S})^{-1} {\bf b}=:\sum_{k=1}^L  a_k {\bf x}_k. 	\end{equation}
			The autoregressive moving
average (ARMA) method
has widely and popularly been known in the time series model  \cite{Leus17}.
The ARMA can also be applied for the inverse filtering procedure  \eqref{inverseprocedure}, where it  uses the  decomposition \eqref{ARMA00}
 with the  elementary inverse procedure
			${\bf x}_k=  ({\bf I}-b_k {\bf S})^{-1} {\bf b}$
			implemented by the following iterative approach,
\begin{equation*}\label{ARMA1}
			{\bf x}_{k}^{(m)}=b_k {\bf S}{\bf x}_{k}^{(m-1)}+{\bf b}, \ m\ge 1
\end{equation*}
			with initial ${\bf x}_{k}^{(0)}={\bf 0}$.
			We remark that the above approach is the same as
			the iterative  approximation algorithm
			\eqref{iterativedistributedalgorithm.eqn1}
			and \eqref{iterativedistributedalgorithm.eqn2}
			with ${\bf H}$ and ${\bf G}$ replaced by  ${\bf I}- b_k {\bf S}$  and  ${\bf I}$ respectively.
%			\vspace{-.5em}\begin{equation*} \label{**iterativedistributedalgorithm.eqn1}
%			{\bf x}_k^{(m)}={\bf x}_k^{(m-1)}+   {\bf b}_k^{(m-1)} \ {\rm and} \
%			{\bf b}_k^{(m)}= b_k {\bf S} {\bf b}_k^{(m-1)},
%			\ m\ge 1.
%			\vspace{-.5em}\end{equation*}
			Moreover, in the above selection of the graph filters $\bf H$ and $\bf G$,   the requirement \eqref{sigmaHG.eq}
			is met as it follows from \eqref {arma.eq1} that
\begin{equation}\label{arma.convergence}
\rho({\bf I}-{\bf H}{\bf G})\le \|{\bf I}- {\bf H}{\bf G}\|_2\le |b_k| \|{\bf S}\|_2<1\end{equation}
 for all $1\le k\le L$. Applying \eqref{arma.convergence},
% and
% following the argument to prove
% Theorem  \ref {convergence_ICPA.thm} in Appendix \ref{convergence_ICPA.appendix},
 we see that the  convergence rate to apply ARMA
 in the implementation of  the inverse filtering procedure is $(\max_{1\le k\le L} |b_k|) \rho({\bf S})<1$.
 %\le (\max_{1\le k\le L} |b_k|) \|{\bf S}\|_2<1$.
		}
	\end{remark}
	
	% By \eqref{iterativedistributedalgorithm.eqn1} and
	%\eqref{iterativedistributedalgorithm.eqn2}, we can prove by induction on $m$ that
	%\begin{equation}
	%{\bf b}^{(m)}=({\bf I}-{\bf H}{\bf G})^{m}{\bf b},
	%\end{equation}
	%  In , Isufi and his collaborators
	% to implement ${\bf x}=\varphi({\bf I}-\psi {\bf S} )^{-1}{\bf b}$, where polynomial is $h(\pmb \lambda)=(1-\psi \pmb \lambda)/\varphi$ and ${\bf x}_t$ converges to ${\bf x}$ when $\|\psi {\bf S}\|<1$. For polynomial filters with order $K$,  the reciprocal of polynomial $h(\pmb \lambda)$ can be decomposed as
	% $$\frac{1}{h(\pmb \lambda)}=\sum_{k=1}^K\frac{\varphi_k}{1-\psi_k\pmb \lambda},$$
	% for some complex coefficients $\varphi_k$ and $\psi_k$. For this case apply algorithm (\ref{ARMA1}) for each $g_k(\pmb \lambda)=\frac{\varphi_k}{1-\psi_k\pmb \lambda}$ to implement
	% $${\bf x}_k=\varphi_k({\bf I}-\psi_k {\bf S} )^{-1}{\bf b},$$
	% then
	%${\bf x}=\sum_{k=1}^K{\bf x}_k$  and it converges when $\|\psi_k {\bf S}\|<1$ for $1\le k\le K$. They named this method as ARMA$_K$.

	\section{Iterative  polynomial approximation  algorithms for inverse filtering}
\label{ipaa.section}

Let ${\bf S}_k=(S_k(i,j))_{i,j\in V}, 1\le k\le d$, be commutative graph shifts on a connected, undirected and unweighted graph ${\mathcal G}=(V, E)$ of order $N$,
$\Lambda$ be their joint spectrum \eqref{jointspectrum.def}, % of the shifts  ${\bf S}_1, \ldots, {\bf S}_d$,
and
${\bf H}=h({\bf S}_1, \ldots, {\bf S}_d)$ be  an invertible polynomial filter in \eqref{MultiShiftPolynomial}. % in Appendix \ref{commutative.section}.
For polynomial graph filters of a single   shift,
  there are several methods to implement the inverse filtering in a distributed network
 \cite{ Emirov19, mario19,  Leus17,  segarra17,Shi15,shuman18} approximately, see Remark \ref{Chebyshev.remark}.
 In this section, we propose two iterative algorithms to implement
  the inverse filtering associated with a polynomial graph filter of  commutative graph  shifts
  in a distributed network with limited data processing and communication requirement for its agents and also
  in a centralized facility with linear complexity.
 For the case that the joint spectrum  $\Lambda$ is fully known, we construct the polynomial interpolation  approximation  ${\bf G}_I$ and
  optimal polynomial approximations ${\widetilde {\bf G}}_L, L\ge 0$,
  to approximate the inverse filter ${\bf H}^{-1}$ in Subsection \ref{iopa.subsection},
and propose the iterative optimal polynomial approximation  % (IOPA)
  algorithm  \eqref{optimaliterativedistributedalgorithm.eqn1}
     to  implement the inverse filtering procedure   ${\bf b}\longmapsto {\bf H}^{-1} {\bf b}$,
    see    Theorem \ref{IOPAconvergence.thm}.
%  	The construction of the interpolating polynomial  $g_I$ in  \eqref{interpolating.gfunction}
%	and the optimal polynomial $g_L^*\in {\mathcal P}_L$ of degree $L$ in \eqref{minimalg*.def}
%	depends on the prior information of the  joint spectrum  $\Lambda$  of commutative graph shifts ${\bf S}_k, 1\le k\le d$.
	For  a graph ${\mathcal G}$ of large order, it is often computationally expensive to
	find the joint spectrum  $\Lambda$ %in \eqref{jointspectrum.def}
exactly.
	However,  the graph shifts ${\bf S}_k, 1\leq k\leq d$, in some engineering applications are  symmetric and their spectrum sets are
	known being contained in some intervals  \cite{chung1997, narang12, sakiyama19, narang13}. For instance,
	the normalized Laplacian matrix  %${\bf L}^{\rm sym}_{\cal G}$
	on a simple graph is symmetric and its spectrum is contained in $[0, 2]$.
	%, or spectrum of an adjacency matrix of an undirected circulant graphs is in $[-2,2]$.
	In Subsection \ref{icpa.subsection}, we consider the implementation of
 the inverse filtering procedure   ${\bf b}\longmapsto {\bf H}^{-1} {\bf b}$ when
 the joint  spectrum $\Lambda$ of
 commutative shifts  ${\bf S}_1,...,{\bf S}_d$  is contained in a cube.
Based on multivariate Chebyshev polynomial approximation to the function $h^{-1}$,
we introduce  Chebyshev polynomial filters   ${\bf G}_K, K\ge 0$,
to approximate the inverse filter ${\bf H}^{-1}$,
and propose
 the
   iterative Chebyshev polynomial approximation  % (ICPA)
   algorithm \eqref{Chebysheviterativedistributedalgorithm.eqn1}
   to  implement the inverse filtering procedure   ${\bf b}\longmapsto {\bf H}^{-1} {\bf b}$,
       see   Theorem \ref{icpa.thm}.
     In addition to the exponential convergence, the proposed  iterative optimal polynomial approximation algorithm
     and
   Chebyshev polynomial approximation algorithm
can be implemented at vertex level in a distributed network, see
Algorithms \ref{IOPA.algorithm}  and
  \ref{ICPA.algorithm}.

	\vspace{-0.6em}
	\subsection{Polynomial interpolation and optimal polynomial approximation}
	\label{iopa.subsection}

Let ${\bf U}$ be the unitary matrix in \eqref{upperdiagonalization} and denote its conjugate transpose by ${\bf U}^{\rm H}$.
	For  polynomial filters ${\bf H}=h({\bf S}_1, \ldots, {\bf S}_d)$ and ${\bf G}=g({\bf S}_1, \ldots, {\bf S}_d)$, %${\bf H}{\bf G}$ can be
	%upper-triangularized by ${\bf U}$  simultaneously in the sense that
	one may verify that $ {\bf U}({\bf I}-{\bf H}{\bf G}){\bf U}^{\rm H}$ is an  upper triangular matrix with
	diagonal entries
	$1-h({\pmb \lambda}_i)g({\pmb \lambda}_i),\ {\pmb \lambda}_i\in \Lambda$.
	Consequently, the  requirement \eqref{sigmaHG.eq} for the polynomial  graph filter ${\bf G}$ %=g({\bf S}_1,...,{\bf S}_d)$
becomes
	\begin{equation}\label{epsilon.def}
	\rho({\bf I}-{\bf G}{\bf H})=\sup_{{\pmb \lambda_i} \in \Lambda} \big|1-h({\pmb \lambda_i}) g({\pmb \lambda_i}) \big |<1.
\end{equation}
	A necessary condition for the existence of a  multivariate polynomial $g$ such that \eqref{epsilon.def} holds is that
\begin{equation}\label{h.condition}
	h(\pmb \lambda_i)\ne 0 {\rm \ for \  all \ }\  \pmb \lambda_i\in \Lambda,
\end{equation}
	or equivalently the  filter ${\bf H}$ is invertible. %=h({\bf S}_1,...,{\bf S}_d)$ is  represented by a nonsingular matrix.
	Conversely  if \eqref{h.condition} holds, $(\pmb \lambda_i, 1/h(\pmb \lambda_i)), 1\le i\le N$, can be interpolated by a polynomial $g_I$ %of $d$ variables
	of  degree  at most $N-1$ \cite{cheney2000course}, i.e.,
\begin{equation}\label{interpolating.gfunction}
	g_I(\pmb \lambda_i)= 1/h(\pmb \lambda_i), \ \pmb \lambda_i\in \Lambda.
\end{equation}
%where $N$ is the order of the graph ${\mathcal G}$.
%	For  the above polynomial $g_I$, we have that
%\vspace{-.4em}\begin{equation*}\epsilon=\max_{{\pmb \lambda}_i\in \Lambda} |1- h({\pmb \lambda}_i) g_I({\pmb \lambda}_i)|=0. \vspace{-.4em}\end{equation*}
	Take ${\bf G}_I= g_I({\bf S}_1, \ldots, {\bf S}_d)$.
	Then all eigenvalues of ${\bf I}-{\bf G}_I {\bf  H}$ are zero and
${\bf I}-{\bf G}_I {\bf  H}$ is similar to a strictly upper triangular matrix. Therefore $\rho({\bf I}-{\bf G}_I {\bf H})=0$
	and  the iterative approximation algorithm \eqref{iterativedistributedalgorithm.eqn1}
	and \eqref{iterativedistributedalgorithm.eqn2} converges in at most $N$ steps. %\qs{answer Reviewer 1, comment 12 ``is Remark 4.1 similar to a strictly upper triangular matrix?'' }
	
\begin{remark} {\rm We remark that the polynomial filter  ${\bf G}_I$ constructed above
is the inverse filter ${\bf H}^{-1}$ when all elements  $\pmb \lambda_i, 1\le i\le N$, in the joint spectrum  $\Lambda$
 in \eqref{jointspectrum.def} of graph shifts ${\bf S}_1, \ldots, {\bf S}_k$ are distinct.
	The above conclusion can be proved by following the argument used in the proof of Theorem  \ref{polynomialfilter.thm} in Appendix \ref{polynomialfilters.appendix}
	and the observation that the matrix
	${\bf I}-{\bf G}_I {\bf H}$ has all eigenvalues being zero and it commutes  with ${\bf S}_k, 1\le k\le d$.
However  in general,
 the above conclusion does not hold  without the distinct assumption on the joint spectrum $\Lambda$. For instance,
one may verify that for the polynomial filter ${\bf H}={\bf I}+{\bf A}$
 on the directed line graph of order $N$,
the identity matrix ${\bf I}$ can be chosen to be the polynomial filter ${\bf G}_I$ and it is not the same as the inverse filter ${\bf H}$, where the graph shift ${\bf A}$ is   the adjacent matrix associated with the directed line graph 
 and has all eigenvalues being zero.
% we do not know whether
%the polynomial filter ${\bf G}_I$ is the inverse filter ${\bf H}^{-1}$ .
%or not , or equivalently
% ${\bf I}-{\bf G}_I{\bf  H}$ is not necessarily the zero matrix, even
% all its  eigenvalues  are zero.
% However,
  }
	\end{remark}

	%by using Lagrange interpolation method we can find a polynomial $g$ of degree at most $d(|V|-1)$,
	%
	%\begin{equation}\label{LagrangeInterpolationG}
	%g({\bf x})=\sum_{i\in V}\frac{1}{h({\bf \lambda}_i)}\prod_{1\leq k\leq d}l_i^k({\bf x}),
	%\end{equation}
	%where
	%$$l_i^k({\bf x})=\prod_{j\neq i}\frac{x_k-\lambda_j^k}{\lambda_i^k-\lambda_j^k},\ i\in V,$$
	% ${\bf \lambda}_i=(\lambda_i^1,...,\lambda_i^d)\in \Lambda$.
	
%\medskip
		\begin{figure}[t]  %[h]
		\begin{center}
			\includegraphics[width=68mm, height=48mm]{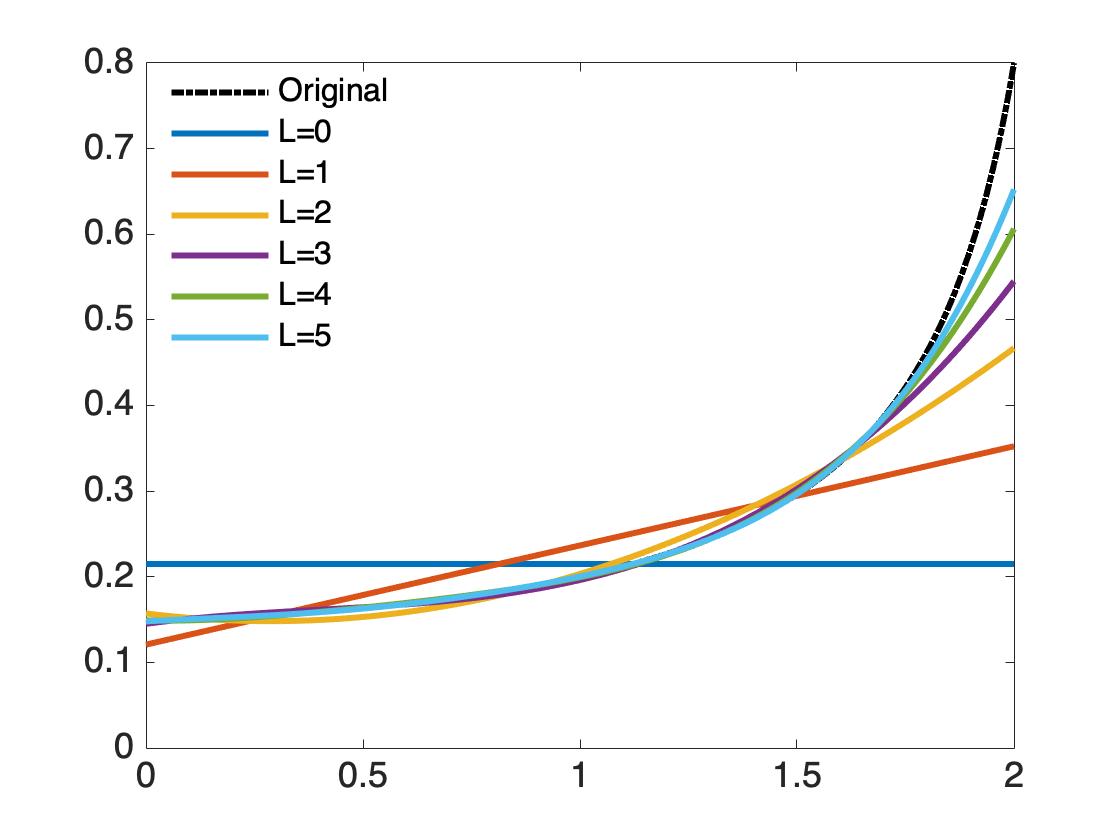}
			\includegraphics[width=68mm, height=48mm]{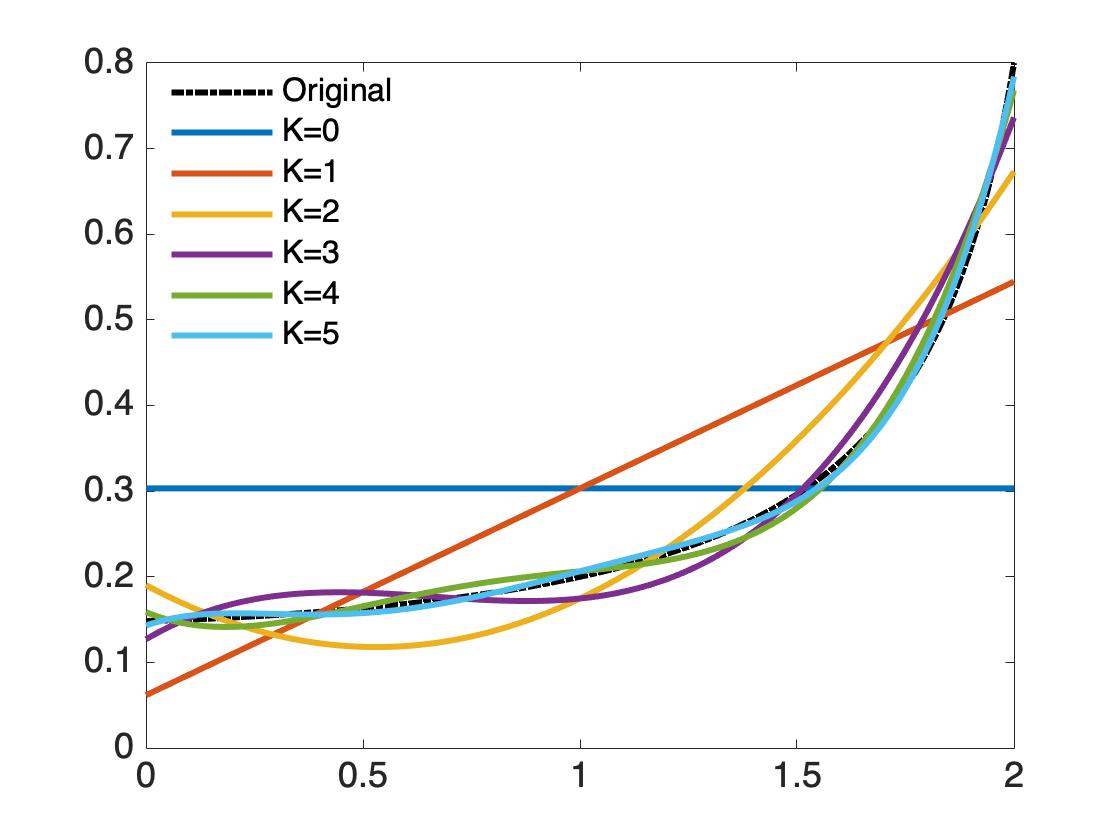}
			%\hskip0.2cm
			\caption{Plotted on the left are the original function $1/h_1$ on $[0, 2]$ (marked as ``Original") and its  optimal polynomial approximations ${\tilde g}_L, 0\le L\le 5$ (marked with different $L$),				while on the right are the original function $1/h_1$ on $[0, 2]$ and its  Chebyshev polynomial approximations  $g_K, 0\le K\le 5$ (marked with different $K$), where $ h_1(t)=(9/4-t)(3+t)$ is the polynomial in \eqref{h1filter.def},
the underlying graph  is the circulant graph ${\mathcal C}(1000, Q_0)$ in \eqref{circulant.edgedef} generated by $Q_0=\{1, 2, 5\}$ and the graph shift is the symmetric normalized Laplacian matrix on the circulant graph.
The approximation errors  $a_L$ in \eqref{optimal.condition} to measure the approximation property between
$\tilde g_L, 0\le L\le 5$ and $1/h_1$ are $0.4502$,
$0.1852$, $0.0612$, $0.0212$, $0.0072$, $0.0025$
%$0.4501$, $0.1850$, $0.0608$, $0.0210$, $0.0060$, $0.0023$
respectively, while
 approximation errors  $b_K$ in \eqref{chebyshevapproximation.con} to measure the approximation property between
$g_K, 0\le K\le 5$ and $1/h_1$
are $1.0463$,  $0.5837$, $0.2924$, $0.1467$, $0.0728$,
$0.0367$  respectively.
%$1.0463$, $0.5837$, $0.2924$, $0.1467$, $0.0728$, $0.0367$ respectively.
				This confirms  the observation numerically that optimal/Chebyshev polynomials with higher degrees  provide
				better  approximations to the function $1/h_1$ either on the spectrum of normalized Laplacian on the circulant graph  or on the interval $[0, 2]$ containing the spectrum.	}
			\label{approximation.fig}
		\end{center}
	\end{figure}

	For $L\ge 0$, denote the set of all polynomials of degree at most $L$  by ${\mathcal P}_L$.
	In  practice, we may not use
	the interpolation polynomial  $g_I$ in \eqref{interpolating.gfunction}, and hence the polynomial filter ${\bf G}=g_I({\bf S}_1, \ldots, {\bf S}_d)$
	in the iterative approximation algorithm \eqref{iterativedistributedalgorithm.eqn1}
	and \eqref{iterativedistributedalgorithm.eqn2},
	as  it is  of high degree in general.
	By  \eqref{desiredestimate}, the  convergence rate of the iterative approximation algorithm \eqref{iterativedistributedalgorithm.eqn1}
	and \eqref{iterativedistributedalgorithm.eqn2} depends on
	the spectral radius in \eqref{epsilon.def}.
Due to the above observation, we
select	${\tilde g}_L\in {\mathcal P}_L$ such that
\begin{equation}\label{minimalg*.def}
	{\tilde g}_L= \arg\!\min_{g\in {\mathcal P}_L} \sup_{\pmb \lambda_i\in \Lambda} |1- g(\pmb \lambda_i) h(\pmb \lambda_i)|,
\end{equation}
see Figure \ref{approximation.fig} for the approximation property of ${\tilde g}_L, L\ge 0$ to the reciprocal $1/h_1$ of the polynomial $ h_1(t)=(9/4-t)(3+t)$
in \eqref{h1filter.def}.
	For a multivariate polynomial $g\in {\mathcal P}_L$, we write
	$$ g ({\bf t})= \sum_{|{\bf k}|\le L} c_{\bf k}  {\bf t}^{\bf k},$$
where $|{\bf k}|=k_1+\cdots+k_d$ and
 ${\bf t}^{\bf k}=t_1^{k_1}\cdots t_d^{k_d} $
for  ${\bf t}=(t_1, \ldots, t_d)$  and ${\bf k}=(k_1, \ldots, k_d)$.
 Set ${\bf c}=(c_{\bf k})_{|{\bf k}|\le L}$.
 	Then for the case that all eigenvalues of ${\bf S}_k, 1\le k\le d$, are real, i.e.,  $\Lambda\subset \R^d$, the minimization problem \eqref{minimalg*.def} can be reformulated as a linear programming,
\begin{equation}
	%s_L^*=\arg
	\min \ s \ \ {\rm subject \ to} \  -(s-1) {\bf 1} \le {\bf P} {\bf c}\le (s+1){\bf 1},
	\end{equation}
	where  ${\bf P}=( h({\pmb \lambda}_i) {\pmb \lambda}_i^{\bf k})_{1\le i\le N, |{\bf k}|\le L}$,
${\bf 1}$ is the vector with all entries taking value 1, and we use standard componentwise ordering for real vectors.
% and for two vectors ${\bf u} and  ${\bf v}$  we say that
%${\bf u}\le {\bf v}$ if

Taking
polynomial filters
	\begin{equation}
\label{GLstar.def}
{\widetilde {\bf G}}_L={\tilde g}_L({\bf S}_1, \ldots, {\bf S}_d), \ L\ge 0,\end{equation}
to approximate the inverse filter ${\bf H}^{-1}$,  the iterative approximation algorithm \eqref{iterativedistributedalgorithm.eqn1}
	and \eqref{iterativedistributedalgorithm.eqn2}
	with the graph filter ${\bf G}$ replaced by ${\widetilde {\bf G}}_L$
becomes
	\begin{equation} \label{optimaliterativedistributedalgorithm.eqn1}
	\left\{ \begin{array}{l}
	{\bf z}^{(m)}= {\widetilde {\bf G}}_L {\bf e}^{(m-1)}, \\
	{\bf e}^{(m)}={\bf e}^{(m-1)}-{\bf H} {\bf z}^{(m)}, \\
	{\bf x}^{(m)}={\bf x}^{(m-1)}+ {\bf z}^{(m)}, \ m\ge 1,
	\end{array} \right.
	\end{equation}
	with initials  ${\bf e}^{(0)}$ and ${\bf x}^{(0)}$ given  in \eqref{iterativedistributedalgorithm.eqn2}.
We call the above iterative algorithm \eqref{optimaliterativedistributedalgorithm.eqn1}
  by the {\em iterative optimal polynomial approximation algorithm}, or IOPA in abbreviation.

     \begin{algorithm}[t]
\caption{The IOPA algorithm  to implement the inverse filtering procedure ${\bf b}\longmapsto {\bf H}^{-1}{\bf b}$
%for a polynomial filter
  at a vertex $i\in V$. }
\label{IOPA.algorithm}
\begin{algorithmic}  %[1]

\STATE {\bf Inputs}: Polynomial coefficients of ${\bf H}$ and ${\widetilde {\bf G}}_L$,  entries $S_k(i,j), j\in {\mathcal N}_i$ in the $i$-th row of the shift ${\bf S}_k, 1\le k\le d$,
the  value $b(i)$  of the input signal ${\bf b}=(b(i))_{i\in V}$ at the vertex $i$, and number $M$ of iteration.

%\STATE {\bf Operation}: Evaluate $m_k=\mu(B(k, r))$, compute ${\bf F}_k= {\bf H}_{0,k}^T{\bf H}_{0,k}+ {\bf H}_{1,k}^T{\bf H}_{1,k}$,
% find its inverse  $({\bf F}_k)^{-1}$, and then compute $ {\bf G}^L_{l; k}:=({\bf F}_k)^{-1} {\bf H}_{l,k}^T, l=0, 1$.

%  , and a local approximation
%${\bf G}_k=(\tilde f_k(i,j))_{i,j\in B(k, 2r)}$  to the matrix ${\bf H}$
% and compute ${\bf F}_k= {\bf H}_{0,k}^T{\bf H}_{0,k}+ {\bf H}_{1,k}^T{\bf H}_{1,k}$ and
%$({\bf F}_k)^{-1}=(\tilde f_k(i,j))_{i,j\in B(k, 2r)}$

\STATE {\bf Initialization}:  Initial $e^{(0)}(i)=b(i)$, $x^{(0)}(i)=0$ and $n=0$.

\STATE{\bf Iteration}: \\
\hspace{0.3cm}
{\bf For} $m=1, 2, \ldots, M$

\hspace{0.5cm}  {\bf Step 1}: \ Use Algorithm \ref{singleshiftprocedure.algorithm} for $d=1$ and Algorithm \ref{distributed_Hx.algorithm}
for $d\ge 2$ to implement the filtering procedure ${\bf e}^{(m-1)}\longmapsto {\bf z}^{(m)}={\widetilde {\bf G}}_L {\bf e}^{(m-1)}$ at the vertex $i$, and the output is the $i$-th entry  $z^{(m)}(i)$ of the vector ${\bf z}^{(m)}$.

\hspace{0.5cm} {\bf Step 2}: \ Use Algorithm \ref{singleshiftprocedure.algorithm} for $d=1$ and Algorithm \ref{distributed_Hx.algorithm} for $d\ge 2$
to implement the filtering procedure ${\bf z}^{(m)}\longmapsto {\bf w}^{(m)}={\bf H} {\bf z}^{(m)}$ at the vertex $i$, and the output is the $i$-th entries $w^{(m)}(i)$ of the vector ${\bf w}^{(m)}$.

\hspace{0.5cm} {\bf Step 3}:  Update $i$-th entries of ${\bf e}^{(m)}$ and ${\bf x}^{(m)}$ by
$e^{(m)}(i)=e^{(m-1)}(i)-{w}^{(m)}(i)$ and
$x^{(m)}(i)=x^{(m-1)}(i)+ z^{(m)}(i)$ respectively.

\hspace{0.3cm} {\bf end}

\STATE {\bf Output}: The approximated  value $ x(i)\approx x^{(M)}(i)$  is  the output signal ${\bf H}^{-1}{\bf b}=( x(i))_{i\in V}$ at the vertex $i$.
\end{algorithmic}  %\vspace{-.03in}
\end{algorithm}

  Presented in Algorithm
  \ref{IOPA.algorithm} is the implementation of IOPA algorithm at the vertex level in a distributed network.
 In each iteration of  Algorithm
  \ref{IOPA.algorithm},  each vertex/agent  of the distributed network
 needs about $O((L+1)^{d-1}+ \prod_{k=1}^{d-1} (L_k+1))$ steps
containing data exchanging among adjacent vertices and weighted sum  %inear combination
of values at adjacent vertices in each iteration.
The memory requirement for each vertex is about
$O\big( (\deg {\mathcal G}+L_k+1) \prod_{k=1}^{d-1}(L_k+1)+ (\det {\mathcal G})+L+1)(L+1)^{d-1})\big)$.
The total   operations of addition and
multiplication in each iteration   to  implement the inverse filtering procedure   ${\bf b}\longmapsto {\bf H}^{-1} {\bf b}$ via Algorithm   \ref{IOPA.algorithm}
  in a distributed network  and   procedure \eqref{optimaliterativedistributedalgorithm.eqn1}
 in a central facility are almost
the same, which are both about $O\big(N (\deg {\mathcal G}+1) (\prod_{k=1}^{d}(L_k+1)+(L+1)^{d})\big)$.

 By \eqref{minimalg*.def}, we have
 	\begin{equation}\label{optimal.conditionold}
	\rho({\bf I}-{\widetilde {\bf G}}_L{\bf H})= \sup_{\pmb \lambda_i\in \Lambda} |1- {\tilde g}_L(\pmb \lambda_i) h(\pmb \lambda_i)|
\end{equation}
and
 $\rho({\bf I}-{\widetilde {\bf G}}_L{\bf H}), 0\le L\le N-1$,  is a  nonnegative decreasing sequence with
 the last term $\rho({\bf I}-{\widetilde {\bf G}}_{N-1}{\bf H})$ being the same as $\rho({\bf I}-{\bf G}_I{\bf H})= 0$ by \eqref{interpolating.gfunction},
 i.e.,
\begin{equation}\label{monotone.eq00}
0 =  \rho({\bf I}-{\bf G}_I{\bf H})= \rho({\bf I}-{\widetilde {\bf G}}_{N-1}{\bf H})\le
\rho({\bf I}-{\widetilde {\bf G}}_{L+1}{\bf H}) \le  \rho({\bf I}-{\widetilde {\bf G}}_L{\bf H})\le
\rho({\bf I}-{\widetilde {\bf G}}_0{\bf H}), \ 0\le L\le N-1.
\end{equation}
%\begin{eqnarray}\label{monotone.eq00} \hskip-0.10in
%0& \hskip-0.08in = &  \hskip-0.08in \rho({\bf I}-{\bf G}_I{\bf H})= \rho({\bf I}-{\widetilde {\bf G}}_{N-1}{\bf H})\le
%\rho({\bf I}-{\widetilde {\bf G}}_{L+1}{\bf H})\nonumber\\
% \hskip-0.10in & \hskip-0.08in \le & \hskip-0.08in \rho({\bf I}-{\widetilde {\bf G}}_L{\bf H})\le
%\rho({\bf I}-{\widetilde {\bf G}}_0{\bf H}), \ 0\le L\le N-1.
%\end{eqnarray}
%for $0\le L\le N-1$.
This implies that the polynomial filters ${\bf G}_L$ with larger $L$ provide better approximation to the inverse filter ${\bf H}^{-1}$
and hence the corresponding  IOPA algorithm  \eqref{optimaliterativedistributedalgorithm.eqn1} has faster convergence.
 In the following theorem, we show that
 the IOPA algorithm  \eqref{optimaliterativedistributedalgorithm.eqn1}
	converges exponentially when $L$ is appropriately chosen, see Subsection \ref{circulant.demon.subsection} for the numerical demonstration.

	\begin{theorem}
		\label{IOPAconvergence.thm}
{\rm
		%Let   ${\bf b}$ be a graph signal,
		 Let
		${\bf S}_1,...,{\bf S}_d$ be  commutative graph shifts, % satisfying \eqref{commutativityS}.
		${\bf H}=h({\bf S}_1, \ldots, {\bf S}_d)$ be an invertible polynomial graph filter  for some multivariate polynomial $h$,
		and
		let degree $L\ge 0$ be so chosen that
\begin{equation}\label{optimal.condition}
	a_L:= \sup_{\pmb \lambda_i\in \Lambda} |1- {\tilde g}_L(\pmb \lambda_i) h(\pmb \lambda_i)|<1.
\end{equation}
		Then for any graph signal $\bf b$,  the sequence
		${\bf x}^{(m)}, m\ge 1$, in the IOPA  algorithm  \eqref{optimaliterativedistributedalgorithm.eqn1}
		converges exponentially to
		% the   % optimal
		%solution
		${\bf H}^{-1} {\bf b}$.  Moreover, for any $r\in (a_L, 1)$, there exists a positive constant $C$  such that
\eqref{desiredestimate} holds.
%		\vspace{-.4em}\begin{equation*}\label{iopadesiredestimate}
%		\|{\bf x}^{(m)}- {\bf H}^{-1} {\bf b} \|_2 \leq   C \|{\bf x}\|_2
%		r^m, \ m\ge 1.
%		%\varepsilon \|\tilde{\bf x}-{\bf x}^{(m-1)}\|_2.
%			\vspace{-.2em}\end{equation*}
		%where $r\in (\epsilon, 1)$ and
	}
\end{theorem}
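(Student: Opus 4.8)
The plan is to recognize that the IOPA algorithm \eqref{optimaliterativedistributedalgorithm.eqn1} is \emph{precisely} the iterative approximation algorithm \eqref{iterativedistributedalgorithm.eqn1} and \eqref{iterativedistributedalgorithm.eqn2} with the generic filter ${\bf G}$ specialized to ${\widetilde {\bf G}}_L={\tilde g}_L({\bf S}_1,\ldots,{\bf S}_d)$. Consequently the entire conclusion, both the exponential convergence to ${\bf H}^{-1}{\bf b}$ and the quantitative estimate \eqref{desiredestimate}, will follow directly from Theorem \ref{convergence_ICPA.thm} once I verify that its single hypothesis \eqref{sigmaHG.eq} holds for this particular choice of ${\bf G}$.

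The one substantive step is therefore to check the spectral condition $\rho({\bf I}-{\bf H}{\widetilde {\bf G}}_L)<1$. Since ${\bf H}=h({\bf S}_1,\ldots,{\bf S}_d)$ and ${\widetilde {\bf G}}_L={\tilde g}_L({\bf S}_1,\ldots,{\bf S}_d)$ are both polynomials in the commutative shifts ${\bf S}_1,\ldots,{\bf S}_d$, they commute, and by the simultaneous upper-triangularization afforded by the unitary matrix ${\bf U}$ of \eqref{upperdiagonalization} the matrix ${\bf U}({\bf I}-{\bf H}{\widetilde {\bf G}}_L){\bf U}^{\rm H}$ is upper triangular with diagonal entries $1-h(\pmb \lambda_i){\tilde g}_L(\pmb \lambda_i)$, $\pmb \lambda_i\in\Lambda$. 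Reading these diagonal entries off as the eigenvalues yields the identity \eqref{optimal.conditionold}, namely $\rho({\bf I}-{\widetilde {\bf G}}_L{\bf H})=\sup_{\pmb \lambda_i\in\Lambda}|1-{\tilde g}_L(\pmb \lambda_i)h(\pmb \lambda_i)|=a_L$. The standing hypothesis \eqref{optimal.condition} says exactly that this spectral radius equals $a_L<1$, so \eqref{sigmaHG.eq} is satisfied with ${\bf G}={\widetilde {\bf G}}_L$.

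With the hypothesis confirmed, I would invoke Theorem \ref{convergence_ICPA.thm} verbatim: for any graph signal ${\bf b}$ the sequence ${\bf x}^{(m)}$ converges exponentially to ${\bf H}^{-1}{\bf b}$, and for any $r\in(\rho({\bf I}-{\bf H}{\widetilde {\bf G}}_L),1)=(a_L,1)$ there is a constant $C$ for which \eqref{desiredestimate} holds. This is precisely the asserted conclusion, so no further estimation is needed.

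I do not anticipate a genuine obstacle, as the analytic difficulty has already been absorbed into Theorem \ref{convergence_ICPA.thm} and into the joint-spectrum construction developed in the appendix. The only point demanding care is the spectral identity \eqref{optimal.conditionold}: it relies essentially on the commutativity \eqref{commutativityS} to produce a common triangularizing ${\bf U}$, and it is important that $a_L$ is defined as a supremum over the joint spectrum $\Lambda$ itself, so that $a_L$ is \emph{equal to} the spectral radius rather than merely an upper bound for it. Granting that identity, the theorem reduces to a one-line application of the earlier result.
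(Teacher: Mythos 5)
Your proposal is correct and follows exactly the paper's argument: the paper's own proof cites \eqref{optimal.conditionold} and \eqref{optimal.condition} and then invokes Theorem \ref{convergence_ICPA.thm} with ${\bf G}$ replaced by ${\widetilde {\bf G}}_L$, which is precisely what you do. Your additional care about the spectral identity \eqref{optimal.conditionold} being an equality over the joint spectrum $\Lambda$ matches the derivation the paper gives just before the theorem statement.
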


\begin{proof} The conclusion follows from \eqref{optimal.conditionold}, \eqref{optimal.condition} and Theorem \ref{convergence_ICPA.thm}
with ${\bf G}$ replaced by ${\widetilde {\mathbf G}}_L$. % and the observation that $a_L:=\rho({\bf I}-{\widetilde {\bf G}}_L{\bf H})$.
\end{proof}

%	 In the following  theorem, we show that
% the IOPA algorithm  \eqref{optimaliterativedistributedalgorithm.eqn1}
%	converges exponentially when $L$ is so chosen  that
%	\vspace{-.4em}\begin{equation}\label{optimal.condition}
%	a_L:=\rho({\bf I}-{\widetilde {\bf G}}_L{\bf H}):= \sup_{\pmb \lambda_i\in \Lambda} |1- {\tilde g}_L(\pmb \lambda_i) h(\pmb \lambda_i)|<1.\vspace{-.4em}\end{equation}

\smallskip

Let $L_0$  be the minimal nonnegative integer so that  $a_{L_0}<1$.
  By  \eqref{monotone.eq00} and  Theorem  \ref{IOPAconvergence.thm},
	the inverse filtering procedure \eqref{inverseprocedure}
	can be implemented by applying the IOPA  algorithm  \eqref{optimaliterativedistributedalgorithm.eqn1}
with   $L\ge L_0$ and
 the IOPA  algorithm  \eqref{optimaliterativedistributedalgorithm.eqn1}
 converges faster  when the higher degree  $L$ of the  optimal polynomial ${\tilde g}_L$ is selected, see Subsection
 \ref{circulant.demon.subsection} for the numerical demonstration.
 However, the implementation of  IOPA  algorithm  \eqref{optimaliterativedistributedalgorithm.eqn1} with larger $L$ at every agent/vertex in a distributed network
 has higher  computational cost in each iteration
and requires more memory for each agent/vertex, and also it takes higher computational cost to solve the
the  minimization problem
\eqref{minimalg*.def} for larger $L$.

	%The corresponding iterative polynomial approximation algorithm \eqref{iterativedistributedalgorithm.eqn1}
	%  and \eqref{iterativedistributedalgorithm.eqn2} converges to the solution in  a finite step.
	%  % by  \eqref{cm.eqn} and \eqref{epsilon.zero}.
	
We finish this subsection with a remark on the
IOPA algorithm \eqref{optimaliterativedistributedalgorithm.eqn1} and
			the gradient descent method
			\eqref{gradientdescent.al}.
	
	\begin{remark}\label{iopa.re}{\rm
 For  the case that the graph filter ${\bf H}$ has its spectrum contained in  $[\alpha_1, \alpha_2]$,
			the solution of the minimization problem  \eqref{minimalg*.def} with $L=0$
			is given by
$			\tilde g_0= 2/(\alpha_1+\alpha_2)$, %,\end{equation*}
			where $\alpha_1=\min_{\pmb \lambda_i\in \Lambda} h(\pmb \lambda_i)$ and
			$\alpha_2=\max_{\pmb \lambda_i\in \Lambda} h(\pmb \lambda_i)$ are the minimal and maximal eigenvalues of ${\bf H}$ respectively.
			Therefore,			to implement  the inverse filtering procedure \eqref{inverseprocedure},
			the gradient descent method
			\eqref{gradientdescent.al}
			with zero initial and optimal step length $\gamma=2/(\alpha_1+\alpha_2)$ is the {\bf same} as the
			proposed  IOPA algorithm \eqref{optimaliterativedistributedalgorithm.eqn1} with $L=0$, cf. Remark \ref{remark.grad}.
By \eqref{monotone.eq00}, we see that the  IOPA algorithm with $L\ge 1$ has faster convergence than the gradient descent method does,
 at the cost of heavier computational cost at each iteration,
 see Table \ref{ComparisonOneShift} and Figure \ref{TotalTime.fig} in Subsection \ref{circulant.demon.subsection}
%   see Table \ref{ComparisonOneShift} in Section
% \ref{circulant.demon.subsection}
 for numerical demonstrations.
}\end{remark}

	\subsection{Chebyshev polynomial approximation}
	\label{icpa.subsection}

	In this subsection, we assume that   commutative graph shifts ${\bf S}_1,...,{\bf S}_d$
have  their joint spectrum $\Lambda$ contained in the cubic
$[{\pmb\mu}, {\pmb \nu}]=[\mu_1, \nu_1]\times \cdots\times [\mu_d, \nu_d]$,
	\begin{equation}\label{jointspectral.interval}
	{\pmb \lambda}_i\in [{\pmb \mu}, {\pmb \nu}] \ {\rm for \ all} \ {\pmb \lambda}_i\in \Lambda,
	\end{equation}
	%Let ${\bf S}$  be a graph shift with its spectrum contained in $[a, b]$ and
%	Let  ${\bf S}_1,...,{\bf S}_d$ be symmetric commutative graph shifts %satisfying \eqref{commutativityS}
%	such that
	and  $h$ be a multivariate polynomial satisfying
	\begin{equation}\label{h.cond1}
	h({\bf t})\ne 0 \ \ {\rm for \ all}\ \ {\bf t}\in [{\pmb \mu}, {\pmb \nu}].
	\end{equation}
	Define  Chebyshev polynomials $T_k, k\ge 0$,  by
	\begin{equation*} \label{shifted_Cheby.eqn}
	T_k(s)=\left\{ \begin{array}{lc} 1 & {\rm if} \hskip0.1cm k=0,
\\ s & {\rm if} \hskip0.1cm k=1, \\
%	2s^2-1 & {\rm if} \hskip0.1cm k=2, \\
	2s {T}_{k-1}(s)-{T}_{k-2}(s) & {\rm if} \hskip0.1cm k \geq 2, \end{array} \right.
	\end{equation*}
	and %{\color{red} (Normalization should be with 1/2, $T_{0}(s)=\frac{1}{2}$)}
shifted multivariate Chebyshev polynomials $\bar T_{\bf k}, {\bf k}=(k_1, \ldots, k_d)\in \Z_+^d$, on $[{\pmb \mu}, {\pmb \nu}]$ by
	$$\Bar{T}_{{\bf k}}({\bf t})=\prod_{i=1}^{d} T_{k_i}\Big(\frac{2t_i-\mu_i-\nu_i}{\nu_i-\mu_i}\Big),
	\ \  {\bf t}=(t_1,...,t_d)\in [{\pmb \mu}, {\pmb \nu}].$$
	By \eqref{h.cond1},
	$1/h$ is an analytic function on $[{\pmb \mu}, {\pmb \nu}]$,
	and  hence it  has Fourier expansion % represented by  linear combinations
	in term of  shifted Chebyshev polynomials $\Bar{T}_{{\bf k}}, {\bf k}\in \Z_{+}^d$,
	\begin{equation*}\label{h.expansion}
	\frac{1}{h({\bf t})}=\sum_{{\bf k}\in \Z_+^d}c_{{\bf k}}\Bar{T}_{{\bf k}}({\bf t}), \ {\bf t}\in [{\pmb \mu}, {\pmb \nu}],
	\end{equation*}
	where
	\begin{equation*}\label{MultiVarPol-ChebCoeff}
	c_{\bf k} %=\int_{{\bf t}\in [{\pmb \mu}, {\pmb \nu}]} f({\bf t})\bar  T_{\bf k}({\bf t}) d{\bf t}, \ {\bf k}\in \Z_+^d.
	=\frac{2^{d-p({\bf k})}}{\pi^d} %\frac{2^{d}}{\pi^d}
	\int_{[0,\pi]^d}\frac{  \bar T_{\bf k}(t_1( {\pmb \theta}), \ldots, t_d({\pmb \theta}))}
	%\prod_{i=1}^{d} \cos(k_i\theta_i)}
	{h(t_1( {\pmb \theta}), \ldots, t_d({\pmb \theta}))}d \boldsymbol{\theta},\  {\bf k}\in \Z_+^d,
	\end{equation*}
%	{\color{red} (it should be $\frac{2^{d-l}}{\pi^d}$ instead of $\left(\frac{2}{\pi}\right)^d$, where $l$ is a number of nonzero entries in ${\bf k}$)}
$p({\bf k})$ is the number of zero components in ${\bf k}\in \Z^d_+$, and
	$t_i({\pmb \theta})= \frac{\nu_i+\mu_i}{2}+\frac{\nu_i-\mu_i}{2}\cos(\theta_i), 1\le i\le d,
	$
	for  $\boldsymbol{\theta}=(\theta_1,...,\theta_d)$.
	Define  partial sum of the  % Fourier
expansion \eqref{h.expansion} by
	\begin{equation}\label{MultiVariablePol-TrucatedChebEq}
	g_K({\bf t})=\sum_{|{\bf k}|\leq K}c_{{\bf k}}\Bar{T}_{{\bf k}}({\bf t}),
	\end{equation}
where $|{\bf k}|=\sum_{i=1}^d k_i$ for  ${\bf k}=(k_1,...,k_d)^T\in \mathbb{Z}_+^d$.
Due to the analytic property of the polynomial $h$,  the partial sum $g_K, K\ge 0$,  converges to  $1/h$  exponentially \cite{phillips03},
	\begin{equation}\label{exponentialdecay}
	b_K:=\sup_{{\bf t}\in [{\pmb \mu}, {\pmb \nu}]}|1-h({\bf t})g_K({\bf t})|\leq Cr_0^K, \ K\geq 0,
	\end{equation}
	for some  positive constants $C\in (0, \infty)$ and $r_0\in (0, 1)$, see
 Figure \ref{approximation.fig} for the approximation property of $g_K, K\ge 0$ to the reciprocal $1/h_1$ of the polynomial $ h_1(t)=(9/4-t)(3+t)$
in \eqref{h1filter.def}.

Set \begin{equation}\label{GK.def000}
{\bf G}_K=g_K({\bf S}_1, \ldots, {\bf S}_d),\ K\ge 0,\end{equation}
and call the iterative approximation algorithm \eqref{iterativedistributedalgorithm.eqn1}
	and \eqref{iterativedistributedalgorithm.eqn2}
	with the graph filter ${\bf G}$ replaced by ${\bf G}_K$ by the {\em iterative Chebyshev polynomial approximation algorithm}, or ICPA in abbreviation,
	\begin{equation} \label{Chebysheviterativedistributedalgorithm.eqn1}
	\left\{ \begin{array}{l}
	{\bf z}^{(m)}= {\bf G}_K {\bf e}^{(m-1)},  \\
	{\bf e}^{(m)}={\bf e}^{(m-1)}-{\bf H} {\bf z}^{(m)}, \\
	{\bf x}^{(m)}={\bf x}^{(m-1)}+ {\bf z}^{(m)}, \ m\ge 1,
	\end{array} \right.
	\end{equation}
	with initials  ${\bf  e}^{(0)}$ and ${\bf x}^{(0)}$ given in \eqref{iterativedistributedalgorithm.eqn2}.
In the following theorem, we show that the  ICPA algorithm \eqref{Chebysheviterativedistributedalgorithm.eqn1} converges exponentially,
	 when the degree $K$ is  so chosen that \eqref{chebyshevapproximation.con} holds, see Subsection  \ref{circulant.demon.subsection} for the demonstration. %, see  Appendix \ref{icpa.thm.appendix} for the proof.

	\begin{theorem}\label{icpa.thm}
	{\rm 	Let  ${\bf S}_1,...,{\bf S}_d$ be  commutative graph shifts, % satisfying \eqref{commutativityS}.
		${\bf H}$ be a polynomial graph filter  of the graph shifts,
		and
		let degree $K\ge 0$ of Chebyshev polynomial approximation be so chosen that
\begin{equation}\label{chebyshevapproximation.con}
b_K:=\sup_{{\bf t}\in [{\pmb \mu}, {\pmb \nu}]}|1-h({\bf t})g_K({\bf t})| <1.
\end{equation}
	Then for any graph signal $\bf b$,
		${\bf x}^{(m)}, m\ge 0$, in the  ICPA  algorithm  \eqref{Chebysheviterativedistributedalgorithm.eqn1}
		converges exponentially to
		% the   % optimal
		%solution
		${\bf H}^{-1} {\bf b}$. Moreover for any $r\in (b_K, 1)$, there exists a positive constant $C$  such that
%\eqref{desiredestimate} holds,
\begin{equation}\label{iopadesiredestimate}
		\|{\bf x}^{(m)}- {\bf H}^{-1} {\bf b} \|_2 \leq   C \|{\bf x}\|_2
		r^m, \ m\ge 1.
		%\varepsilon \|\tilde{\bf x}-{\bf x}^{(m-1)}\|_2.
\end{equation}
}	\end{theorem}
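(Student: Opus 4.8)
The plan is to deduce this result directly from the general convergence criterion in Theorem \ref{convergence_ICPA.thm}, exactly as Theorem \ref{IOPAconvergence.thm} was obtained. Observe first that the ICPA recursion \eqref{Chebysheviterativedistributedalgorithm.eqn1} is literally the iterative approximation algorithm \eqref{iterativedistributedalgorithm.eqn1}--\eqref{iterativedistributedalgorithm.eqn2} with the approximating filter ${\bf G}$ taken to be the Chebyshev filter ${\bf G}_K = g_K({\bf S}_1,\ldots,{\bf S}_d)$ from \eqref{GK.def000}. Hence, once I verify the spectral-radius condition $\rho({\bf I}-{\bf H}{\bf G}_K)<1$, Theorem \ref{convergence_ICPA.thm} applied with ${\bf G}$ replaced by ${\bf G}_K$ delivers both the exponential convergence of ${\bf x}^{(m)}$ to ${\bf H}^{-1}{\bf b}$ and the quantitative estimate. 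The whole task therefore reduces to bounding the spectral radius by $b_K$.

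To that end, since both ${\bf H}=h({\bf S}_1,\ldots,{\bf S}_d)$ and ${\bf G}_K=g_K({\bf S}_1,\ldots,{\bf S}_d)$ are polynomials in the commutative shifts, they commute, and the joint-spectrum computation leading to \eqref{epsilon.def} applies verbatim. Thus ${\bf U}({\bf I}-{\bf H}{\bf G}_K){\bf U}^{\rm H}$ is upper triangular with diagonal entries $1-h({\pmb\lambda}_i)g_K({\pmb\lambda}_i)$, ${\pmb\lambda}_i\in\Lambda$, whence
\begin{equation*}
\rho({\bf I}-{\bf H}{\bf G}_K)=\sup_{{\pmb\lambda}_i\in\Lambda}\bigl|1-h({\pmb\lambda}_i)g_K({\pmb\lambda}_i)\bigr|.
\end{equation*}
The key step is now the containment \eqref{jointspectral.interval} of the joint spectrum in the cube $[{\pmb\mu},{\pmb\nu}]$: enlarging the supremum from the finite set $\Lambda$ to the whole cube can only increase it, so
\begin{equation*}
\rho({\bf I}-{\bf H}{\bf G}_K)=\sup_{{\pmb\lambda}_i\in\Lambda}\bigl|1-h({\pmb\lambda}_i)g_K({\pmb\lambda}_i)\bigr|\le \sup_{{\bf t}\in[{\pmb\mu},{\pmb\nu}]}\bigl|1-h({\bf t})g_K({\bf t})\bigr|=b_K<1,
\end{equation*}
the last inequality being the hypothesis \eqref{chebyshevapproximation.con}. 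This settles the required spectral bound.

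With $\rho({\bf I}-{\bf H}{\bf G}_K)\le b_K<1$ in hand, Theorem \ref{convergence_ICPA.thm} yields the exponential convergence of ${\bf x}^{(m)}$ to ${\bf H}^{-1}{\bf b}$; moreover, for any $r\in(b_K,1)$ one has $r\in(\rho({\bf I}-{\bf H}{\bf G}_K),1)$, so the estimate \eqref{desiredestimate} of that theorem is precisely \eqref{iopadesiredestimate}. I would emphasize that the single genuinely new ingredient here, as opposed to the IOPA case, is the direction of the inequality in the displayed bound: the optimal polynomial of Subsection \ref{iopa.subsection} is tuned to the discrete spectrum $\Lambda$ and gives $a_L=\rho({\bf I}-\widetilde{\bf G}_L{\bf H})$ exactly, whereas the Chebyshev error $b_K$ is measured over the entire cube and only dominates the spectral radius. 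Since control on the cube is strictly stronger than control on $\Lambda\subset[{\pmb\mu},{\pmb\nu}]$, this dominance is exactly what is needed, and no additional difficulty arises; the exponential decay \eqref{exponentialdecay} of $b_K$ guarantees that the hypothesis $b_K<1$ is met for all sufficiently large $K$.
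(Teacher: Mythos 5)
Your proposal is correct and follows essentially the same route as the paper: reduce to Theorem \ref{convergence_ICPA.thm} by establishing $\rho({\bf I}-{\bf H}{\bf G}_K)=\sup_{\pmb\lambda_i\in\Lambda}|1-h(\pmb\lambda_i)g_K(\pmb\lambda_i)|\le b_K<1$ via the simultaneous upper-triangularization argument of \eqref{optimal.conditionold} together with the spectrum containment \eqref{jointspectral.interval}. You merely spell out the triangularization step and the $\Lambda\subset[{\pmb\mu},{\pmb\nu}]$ enlargement more explicitly than the paper does.
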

	
\begin{proof}
Following the argument used in \eqref{optimal.conditionold}, one may verify  that
\begin{equation}\label{icpa.thm.pf.eq1}\rho({\bf I}-{\bf G}_K {\bf H})=\sup_{\pmb \lambda_i\in \Lambda} |1- g_K(\pmb \lambda_i) h(\pmb \lambda_i)|
\le b_K,
\end{equation}
where the inequality holds by \eqref{jointspectral.interval} and
the definition \eqref{exponentialdecay} of $b_K, K\ge 0$.
Then the desired conclusion follows from
\eqref{icpa.thm.pf.eq1} and Theorem \ref{convergence_ICPA.thm}
with ${\bf G}$ replaced by ${\mathbf G}_K$. % and the observation that $\mu_L:=\rho({\bf I}-{\widetilde {\bf G}}_L{\bf H})$.
\end{proof}

\begin{remark}\label{icpaconvergencerate.remark}  {\rm
We remark that the convergence conclusion \eqref{iopadesiredestimate} in Theorem \ref{icpa.thm}  can be
improved as
\begin{equation}\label{iopadesiredestimate.rem}
\|{\bf x}^{(m)}-{\bf H}^{-1}{\bf b}\|_2\le \frac{\|{\bf H}\|_2\|{\bf G}_K\|_2}{1-b_K} (b_K)^m \|{\bf H}^{-1}{\bf b}\|_2, \ m\ge 1.
\end{equation}
provided that  the commutative graph shifts  ${\bf S}_1, \ldots, {\bf S}_d$ are symmetric.
 Under the assumption that ${\bf S}_1, \ldots, {\bf S}_d$ are symmetric,
  there exists a unitary matrix $\bf U$ such that
          they can be diagonalized simultaneously and hence
	$ {\bf U}^T({\bf I}-{\bf H}{\bf G}_K) {\bf U}$ is a diagonal matrix with diagonal entries $1- h({\pmb \lambda}_i) g_K({\pmb \lambda}_i), 1\le i\le N$, where ${\pmb \lambda_1}, \ldots, {\pmb \lambda}_N\in \Lambda$. Therefore
\begin{equation}\label{v20.eq0}
\|{\bf I}-{\bf G}_K {\bf H}\|_2 = \rho({\bf I}-{\bf G}_K {\bf H})
=\sup_{1\le i\le N} |1- h({\pmb \lambda}_i) g_K({\pmb \lambda}_i)|\le b_K
\end{equation}
	where the last inequality follows from \eqref{jointspectral.interval} and
\eqref	{exponentialdecay}.
The desired exponential convergence can be obtained by applying
the similar argument used in
Theorem \ref{convergence_ICPA.thm} with \eqref{sigmaHG.eq} and 	\eqref{sigmaHG.eq2} replaced by
\eqref{chebyshevapproximation.con} and \eqref{v20.eq0}.
}\end{remark}

\begin{remark} \label{ICPAalgorithm.remark}
{\em   We remark that each iteration in the ICPA algorithm \eqref{Chebysheviterativedistributedalgorithm.eqn1}
can be implemented at vertex level, % at the vertex level,
see Algorithm \ref{ICPA.algorithm}.
%to the distributed  implementation at the vertex level. % in a distributed network.
     \begin{algorithm}[t]
\caption{The ICPA algorithm  to implement the inverse filtering procedure ${\bf b}\longmapsto {\bf H}^{-1}{\bf b}$
  at a vertex $i\in V$. }
\label{ICPA.algorithm}
\begin{algorithmic}  %[1]

\STATE {\bf Inputs}: Polynomial coefficients of polynomial filters ${\bf H}$ and ${\bf G}_K$,  entries $S_k(i,j), j\in {\mathcal N}_i$ in the $i$-th row of the shifts ${\bf S}_k, 1\le k\le d$,
the  value $b(i)$  of the input signal ${\bf b}=(b(i))_{i\in V}$ at the vertex $i$, and number $M$ of iteration.

%\STATE {\bf Operation}: Evaluate $m_k=\mu(B(k, r))$, compute ${\bf F}_k= {\bf H}_{0,k}^T{\bf H}_{0,k}+ {\bf H}_{1,k}^T{\bf H}_{1,k}$,
% find its inverse  $({\bf F}_k)^{-1}$, and then compute $ {\bf G}^L_{l; k}:=({\bf F}_k)^{-1} {\bf H}_{l,k}^T, l=0, 1$.

%  , and a local approximation
%${\bf G}_k=(\tilde f_k(i,j))_{i,j\in B(k, 2r)}$  to the matrix ${\bf H}$
% and compute ${\bf F}_k= {\bf H}_{0,k}^T{\bf H}_{0,k}+ {\bf H}_{1,k}^T{\bf H}_{1,k}$ and
%$({\bf F}_k)^{-1}=(\tilde f_k(i,j))_{i,j\in B(k, 2r)}$

\STATE {\bf Initialization}:  Initial $e^{(0)}(i)=b(i)$, $x^{(0)}(i)=0$ and $n=0$.

\STATE{\bf Iteration}:  Use the iteration in Algorithm \ref{IOPA.algorithm} except replacing  ${\widetilde {\bf G}}_L$  by
 ${\bf G}_K$ in
\eqref{GK.def000}, and the output is
$ x^{(M)}(i)$.

\STATE {\bf Output}: The approximated  value $ x(i)\approx x^{(M)}(i)$  is  the output signal ${\bf H}^{-1}{\bf b}=( x(i))_{i\in V}$ at the vertex $i$.
\end{algorithmic}  %\vspace{-.03in}
\end{algorithm}
In  each iteration of the ICPA algorithm \eqref{Chebysheviterativedistributedalgorithm.eqn1},  every agent in a distributed network (vertex of the graph)
needs about $O((K+1)^{d-1}+ \prod_{k=1}^{d-1} (L_k+1))$ steps with each step
containing data exchanging among adjacent vertices and  weighted linear combination
of values at adjacent vertices.
The memory requirement for each agent is about
$O\big( (\deg {\mathcal G}+L_d+1) \prod_{k=1}^{d-1}(L_k+1)+ (\deg {\mathcal G}+K+1)(K+1)^{d-1}\big)$.
The total  operations of addition and
multiplication to implement each iteration of Algorithm \ref{ICPA.algorithm}
in a distributed network  and
to implement \eqref{Chebysheviterativedistributedalgorithm.eqn1} in a central facility are almost
the same, which are both about $O\big(N (\deg {\mathcal G}+1) (\prod_{k=1}^{d}(L_k+1)+(K+1)^{d})\big)$.
}	
\end{remark}

\begin{remark}\label{Chebyshev.remark}
{\rm  By \eqref {exponentialdecay},   an inverse filtering procedure \eqref{inverseprocedure}
	can be approximately implemented by the filter procedure
	${\bf G}_K {\bf x}$ with large $K$, i.e.,
%	\begin{equation}
${\bf H}^{-1} {\bf x}\approx {\bf G}_K {\bf x}$ for large $K$.  %\end{equation}
	The above implementation of the inverse filtering has been discussed in \cite{Emirov19, shuman18} for the case that % where
	${\bf H}$ is a polynomial graph filter of {\bf one}  shift, %  ${\bf S}$.
 and it is known as
	the  Chebyshev polynomial approximation algorithm (CPA).
%{\color{blue} Mention our sampta paper for iterative Chebyshev polynomial approximation algorithm with one shift. }
We remark that in the single graph shift setting,  the approximation ${\bf G}_K {\bf x}$
	in the   CPA is the same as the {\bf first} term ${\bf x}^{(1)}$ in
the  ICPA  algorithm  \eqref{Chebysheviterativedistributedalgorithm.eqn1}.	
	To  implement the inverse filtering with high accuracy, the CPA requires Chebyshev polynomial approximation
	of {\bf high} degree, which  means more integrals involved in coefficient calculations.
	On the other hand, we can select Chebyshev polynomial approximation
	of lower degree in  the ICPA  algorithm  \eqref{Chebysheviterativedistributedalgorithm.eqn1}
to reach the same accuracy with few iterations. By Theorem \ref{icpa.thm},
the  ICPA  algorithm  \eqref{Chebysheviterativedistributedalgorithm.eqn1}
has exponential convergent rate $b_K$, which has limit zero as $K\to \infty$. This indicates that the ICPA algorithm  converges faster for large $K$, however for each agent in a distributed network, its data processing system need more memory  to store data  and time to  process data,
 and its communication system costs more for larger $K$ too.
 Our simulation in the next section confirms  the above observation, see Table \ref{ComparisonOneShift} and Figure \ref{TotalTime.fig} in Subsection \ref{circulant.demon.subsection}.}
%		 Hence
%	\vspace{-.3em}\begin{equation}\label{chebyshevapproximation.conff}
%	b_K<1
%	\vspace{-.3em}\end{equation}
%	for large $K$.
 %and for a set $E$ we use $\# E$ to denote its cardinality.
\end{remark}
	% {\color{red} Trade-off for higher order approximation and more iteration}
	
	\section{Numerical simulations}\label{Numeric.section}

%		In the first part of Section
%\ref{Numeric.section},
% we	demonstrate the implementation of the   proposed  iterative   algorithms for inverse filtering
%on a  circulant graph,
%	and compare their performances with the  gradient descent method with zero initial \cite{Shi15}
%	and the autoregressive moving average %(ARMA)
%algorithm   \cite{Leus17}.
%In the second and third parts of Section
%\ref{Numeric.section}, we apply the proposed iterative algorithms to denoise time-varying signals governed by some differential equations and
% a US hourly
%temperature data set % on  August 1st,  2010
%respectively.

%\begin{figure}[t]   %[h]
%		\begin{center}
%			%\includegraphics[width=38mm, height=38mm]{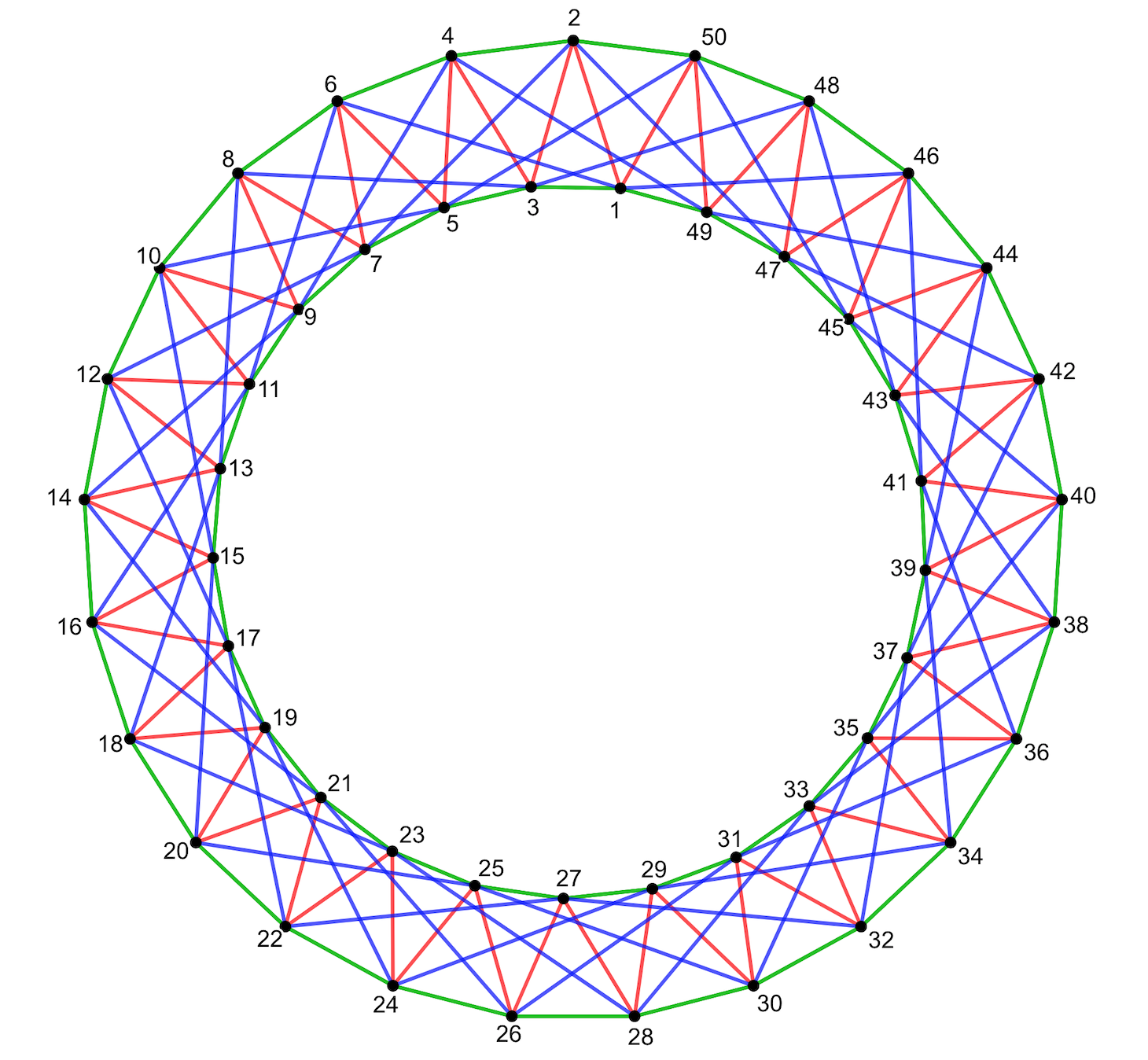}
%			%\includegraphics[width=38mm, height=38mm]{CirculantGraph}\\
%			%\includegraphics[width=38mm, height=38mm]{CirculantGraph}
%			%\includegraphics[width=38mm, height=38mm]{CirculantGraph}
%			%\\
%			\includegraphics[width=55mm, height=49mm]{CirculantGraph}
%			%\hskip0.2cm
%			\caption{ The circulant graph with 50 nodes and generating set $Q=\{1,2,5\}$, where edges in  red/green/blue  are also edges of the
% cycle graphs $\mathcal{C}_1$,
%  $\mathcal{C}_2$ and $\mathcal{C}_5$ generated by $\{1\}, \{2\}, \{5\}$ respectively.
%			}
%			\label{CirculantGraph}
%			\vspace{-2em}
%		\end{center}
%	\end{figure}

In this section, we
	demonstrate
	the iterative optimal polynomial approximation (IOPA)  algorithm \eqref{optimaliterativedistributedalgorithm.eqn1}
	and
	the iterative  Chebyshev polynomial approximation (ICPA)  algorithm  \eqref{Chebysheviterativedistributedalgorithm.eqn1}
	to implement an  inverse  filtering procedure,
	and compare their performances with the gradient descent method \eqref{gradientdescent.al} with zero initial  \cite{Shi15},
	and the autoregressive moving average (ARMA) algorithm  \eqref{ARMA00} and \eqref{arma.convergence}  \cite{Leus17}.
	
	Let $N\ge 1$ and $Q=\{q_1, \ldots, q_M\}$ be a set of integers ordered so that $1\le q_1<\ldots<q_M<  N/2$.
 The {\em circulant} graph ${\mathcal C}(N, Q)$  generated by $Q$
has the vertex set  $V_N=\{0, 1, \ldots, N-1\}$ and the edge set
\begin{equation}\label{circulant.edgedef}
E_N(Q)=\{(i,i\pm q\ {\rm mod}\ N),\  i\in V_N, q\in Q\}, \end{equation}
where  $a=b\ {\rm mod }\ N$ if $(a-b)/N$ is an integer.
% \cite{ ekambaram13, vnekambaram13, dragotti19a, dragotti19,  valsesia19}.
Shown in Figure \ref{CirculantGraph} is a circulant graph with $N=50$ and  $Q=\{1,2,5\}$.
   Circulant graphs are widely used in image processing
%\cite{ ekambaram13, vnekambaram13}--\cite{valsesia19}.
 {	\cite{ekambaram13, vnekambaram13, dragotti19a,dragotti19,valsesia19}}.   In Subsection \ref{circulant.demon.subsection}, we demonstrate the performance of the proposed IOPA
and ICPA  algorithms
on the implementation of the inverse filtering on circulant graphs.

Graph signal denoising  is one of  the most popular applications in graph filtering %\cite{sandryhaila14}--\cite{Waheed18}, \cite{jiang19, mario19, segarra17, Coutino17, Leus17}
{\cite{cheng_SDS16,mario19,Coutino17, Leus17, jiang19,sandryhaila14, segarra17, Waheed18,moura14}}. and in some cases, it can be recasted as an inverse filtering procedure. In Subsection  \ref{denoising.subsection},
 we consider denoising  noisy sampling  % \cc{delete the sampling?}
 data
	\begin{equation}
	{\bf b}_{i}= {\bf x}(t_i)+\pmb \eta_i, \ 1\le i\le M,
	\end{equation}
	of some time-varying graph signal
	${\bf x}(t)$  on  random geometric graphs, which is
		governed by a  differential equation
	\begin{equation}\label{de.eq00}
	{\bf x}^{\prime\prime}(t)= {\bf P} {\bf x}(t),
	\end{equation}
	where  ${\pmb \eta}_i, 1\le i\le M$, are  noises with noise level
$\eta=\max_{1\le i\le M}\|{\pmb \eta}_i\|_\infty$,
 the sampling procedure is taken uniformly at $t_i=t_1+(i-1)\delta, 1\le i\le M$, with uniform sampling gap $\delta>0$,
	%at  $t_1<t_2<\ldots< t_M$ with maximal gap $\delta=\max_{1\le i\le M-1} |t_{i+1}-t_i|$,
		 and
	${\bf P}$ is a graph  filter  with small geodesic-width.

Finally in Subsection  \ref{denoisingweather.subsection}, we  apply
the proposed IOPA and ICPA algorithms to denoise  the  hourly temperature  dataset collected at $218$ locations in the United States.

	\subsection{Iterative approximation algorithms on circulant graphs}
\label{circulant.demon.subsection}

\begin{table*}[t]
		\renewcommand\arraystretch{1.2}
		\centering
		\caption{ 			Average   relative iteration error
			over 1000 trials for the ARMA method,  GD0 algorithm,  and
			 IOPA  and  ICPA algorithms with different degrees
to implement the inverse filtering  ${\bf b}\longmapsto {\bf H}_1^{-1} {\bf b}$ on the circulant graph ${\mathcal C}(1000, Q_0)$. % with $N=1000$. %(done) {\color{red} Nazar: order (ARMA, GD0, ICPA0, ICPA1, ICPA1.... Delete the column IOPA0}
		}
		%\label{CirculantGraphICPA.Table}			
       \begin{tabular} {|c|c|c|c|c|c|c|c|c|c|c|c|}			
			\hline
			\backslashbox{Alg.}{AE}{m}& 1 & 2 & 3 & 4 & 5 & 7 & 9 &  11 & 14 &  17 & 20     \\
			\hline
			ARMA & .3259 & .2583 & .1423 & .1098 & .0718 & .0381 & .0207 & .0113 & .0047 & .0019 & .0008 \\
			 \hline
			GD0   &  .2350 & .0856 & .0349 & .0147 & .0063 & .0012 & .0002 & .0000 & .0000 & .0000 & .0000 \\
			 \hline
			ICPA0 & .5686	&	.4318 & .3752 & .3521 & .3441 & .3460 & .3577 & .3743 & .4061 & .4451 & .4913 \\
			 \hline
			ICPA1 & .4494 & .2191 & .1103 & .0566 & .0295 & .0082 & .0024 & .0007 & .0001 & .0000 & .0000 \\
			 \hline
			ICPA2 & .1860 & .0412 & .0098 & .0024 & .0006 & .0000 & .0000 & .0000 & .0000 & .0000 & .0000 \\
			 \hline
			IOPA1 & .1545 & .0266 & .0047 & .0008 & .0002 & .0000 & .0000 & .0000 & .0000 & .0000 & .0000 \\
			 \hline
			ICPA3 & .0979 & .0113 & .0014 & .0002 & .0000 & .0000 & .0000 & .0000 & .0000 & .0000 & .0000 \\
			 \hline
			ICPA4 & .0499 & .0030 & .0002 & .0000 & .0000 & .0000 & .0000 & .0000 & .0000 & .0000 & .0000 \\
			 \hline
			IOPA2 & .0365 & .0019 & .0001 & .0000 & .0000 & .0000 & .0000 & .0000 & .0000 & .0000 & .0000 \\
			 \hline
			ICPA5 & .0225 & .0007 & .0000 & .0000 & .0000 & .0000 & .0000 & .0000 & .0000 & .0000 & .0000 \\
			 \hline
			IOPA3 & .0167 & .0003 & .0000 & .0000 & .0000 & .0000 & .0000 & .0000 & .0000 & .0000 & .0000 \\
			 \hline
			IOPA4 & .0044 & .0000 & .0000 & .0000 & .0000 & .0000 & .0000 & .0000 & .0000 & .0000 & .0000 \\
			 \hline
			IOPA5 & .0019 & .0000 & .0000 & .0000 & .0000 & .0000 & .0000 & .0000 & .0000 & .0000 & .0000 \\
			 \hline
		\end{tabular}
		\label{ComparisonOneShift}
	\end{table*}

	\begin{figure}[t]   %[h]
		\begin{center}
			%\includegraphics[width=38mm, height=38mm]{CirculantGraph}
			%\includegraphics[width=38mm, height=38mm]{CirculantGraph}\\
			%\includegraphics[width=38mm, height=38mm]{CirculantGraph}
			%\includegraphics[width=38mm, height=38mm]{CirculantGraph}
			%\\
			\includegraphics[width=55mm, height=49mm]{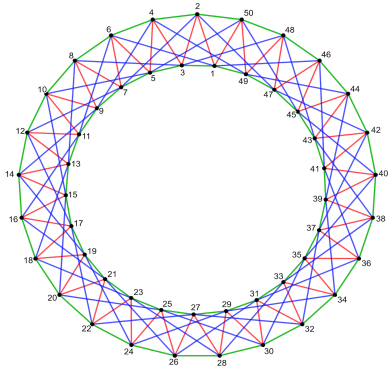}
			%\hskip0.2cm
			\caption{ The circulant graph with 50 nodes and generating set $Q_0=\{1,2,5\}$, where edges in  red/green/blue  are also edges of the
 circulant graphs $\mathcal{C}_1$,
  $\mathcal{C}_2$ and $\mathcal{C}_5$ generated by $\{1\}, \{2\}, \{5\}$ respectively.
			}
			\label{CirculantGraph}
			\vspace{-2em}
		\end{center}
	\end{figure}
% \cite{ ekambaram13, vnekambaram13, dragotti19a, dragotti19,  valsesia19}.
In this section, we consider
 the circulant graph $%{\mathcal C}:=
 {\mathcal C}(N, Q_0)$ generated by $Q_0=\{1,2,5\}$,
 the input graph  signal  ${\bf x}$ with entries randomly  selected in the interval $[-1, 1]$, and  the graph signal ${\bf b}={\bf H}_1{\bf x}$ as the observation, where \begin{equation}\label{h1filter.def}
 h_1(t)=(9/4-t)(3+t)\end{equation}
  and
  $${\bf H}_1=h_1({\mathbf L}^{\rm sym}_{{\mathcal C}(N, Q_0)})$$
   is a polynomial graph filter of
the symmetric normalized Laplacian ${\bf L}^{\rm sym}_{{\mathcal C}(N, Q_0)}$
	on  ${\mathcal C}(N, Q_0)$, see Remark 	\ref{circulantgraph.appendix} for commutative graph shifts on circulant graphs.
	We implement the  inverse filtering ${\bf b}\longmapsto {\bf H}_1^{-1}{\bf b}$ through the  IOPA algorithm \eqref{optimaliterativedistributedalgorithm.eqn1} and ICPA algorithm  \eqref{Chebysheviterativedistributedalgorithm.eqn1}  on the circulant graph ${\mathcal C}(N, Q_0)$.
%For $N=50$, the \st{numerical} evaluations for $\mu_L^*, \ 0\le L\le  5$  in \eqref{optimal.condition}
%are   $0.4501$, $0.1850$, $0.0608$, $0.0210, 0.0060, 0.0023$,
%	and for
%	$b_K,\ 0\le K\le 5,$ in  \eqref{chebyshevapproximation.con}
%	are  $1.0463$, $0.5837, 0.2880, 0.1431, 0.0719, 0.0367$ respectively.
By Theorems \ref{IOPAconvergence.thm}  and \ref{icpa.thm}, the IOPA algorithm with $L\ge 0$ and  the ICPA algorithm  with  $K\ge 1$ converge, and we denote those algorithms by IOPA$L$  and ICPA$K$ for abbreviation. Notice that the filter ${\bf H}_1$ is positive definite, and \vspace{-.4em}\begin{equation*}\frac{1}{h_1(t)}= \frac{4/21}{9/4-t} +\frac{4/21}{3+t} \vspace{-.4em}\end{equation*}
 meets the requirement \eqref{arma.eq1} for the ARMA. For the circulant graph $\mathcal C(N, Q_0)$ with $N=1000$, we also implement the inverse filtering  ${\bf b}\longmapsto {\bf H}_1^{-1}{\bf b}$ by the gradient descent method with zero initial, GD0 in abbreviation, with the optimal step length $\gamma= 2/(6.7500+2.5588)$, and the ARMA method, where $2.5588$ and  $6.7500$ are the minimal and maximal eigenvalues for ${\bf H}_1$ respectively.

% with the optimal step length $\gamma= 2/(6.75+2.56)$, and the ARMA method, where $2.56$ and  $6.75$ are the minimal and maximal eigenvalues for ${\bf H}_1$ respectively.
% \cc{we define the ${\mathcal C}:={\mathcal C}(N, Q_0)$, then we could just use $\mathcal C$ with $N=?$ or not use the notation ${\mathcal C}$. Now we mix the use of two.  }
	
	Set the relative  iteration error
	\begin{equation*} E(m, {\bf x})= {\|{\bf x}^{(m)}-{\bf x}\|_2}/{\|{\bf x}\|_2}, \  m\ge 1,
	\end{equation*}
	where ${\bf x}^{(m)}, m\ge 1$,
	are the output at $m$-th iteration. Shown in Table \ref{ComparisonOneShift} are the comparisons of the   ARMA algorithm, the GD$0$ algorithm,  and
			 IOPA$L$  and  ICPA$K$ algorithms regard to the average of the relative iteration error for implementing the inverse filtering on the circulant graph ${\mathcal C}(1000, Q_0)$  over 1000 trials, where $0\le L, K\le 5$.
This confirms  that  exponential convergence
	and  applicability of the inverse filtering procedure  ${\bf b}\longmapsto {\bf H}_1^{-1}{\bf b}$ of IOPA$L$, $0\le L\le 5$
	and ICPA$K$, $1\le K\le 5$
on the circulant graph ${\mathcal C}(1000, Q_0)$.
 The  average
 exponential convergence rates of
IOPA$L$, $0\le L\le 5$,  % in   Table \ref{ComparisonOneShift}
over 1000 trials are $ 0.4401,  0.1820,  0.0593 ,  0.0208, 0.0067, 0.0023$ respectively,
%$ 0.4369,  0.1820,  0.0593 ,  0.0207, 0.0066, 0.0023$ respectively,
%which is close to the theoretical bound $a_L= 0.4502$, $0.1852$, $0.0612$, $0.0212$, $0.0072$, $0.0025$ for $0\le L\le 5$, see
%\eqref{optimal.condition} in Theorem  \ref{IOPAconvergence.thm}.  Similarly,
and the average
 exponential convergence rates
 of  ICPA$K$, $1\le K\le 5$, are
$0.5485$, $0.2804$, $0.1459$, $0.0685$, $0.0334$ respectively.
%, while
%$0.5392, 0.2743, 0.1456, 0.0685,    0.0334$ respectively, while
%  The average exponential convergence rates of
%IOPA$L$, $0\le L\le 5$  in Table \ref{ComparisonOneShift}  are $ 0.4169,  0.1770,  0.0541 ,  0.0200, 0.0052, 0.0021$ respectively, and them of  ICPA$K$, $1\le L\le 5$, are
%$0.5268, 0.2449, 0.1231, 0.0634,    0.0286$ respectively.
It is observed that average
 exponential convergence rates
 of  IOPA$L$, $0\le L\le 5$ and ICPA$K$, $1\le K\le 5$,
 are close to their theoretical bounds $a_L, 0\le L\le 5$ in \eqref{optimal.condition} % in Theorem  \ref{IOPAconvergence.thm}
 and $b_K, 1\le K\le 5$
 in  \eqref{chebyshevapproximation.con} %of Theorem \ref{icpa.thm}
  respectively, which are listed in the caption of  Figure
\ref{approximation.fig}.
   % for   ICPA$K$, $1\le L\le 5$,
%	 are   $0.5837$, $0.2924$, $0.1467$, $0.0728$, $0.0367$ respectively.
By  %the black  line in Figure \ref{approximation2.fig} and %which is also confirmed numerically by the data listed in
	the  third %column
	row in Table \ref{ComparisonOneShift}, we see that
	the  ICPA0 %  algorithm  \eqref{Chebysheviterativedistributedalgorithm.eqn1}
%	with $K=0$
 does not yield the desired inverse filtering result. The reason for the divergence is that  the theoretical bound $b_0=1.0463$
in  \eqref{chebyshevapproximation.con} is  strictly larger than one.
%From Table \ref{ComparisonOneShift}, we also observe that
%the average exponential convergence rates of GD0 algorithm and  ARMA method over 1000 trials   % in Table \ref{ComparisonOneShift}
%are $0.4401$ and $0.7431$, while the theoretical convergence rates in
%%are $0.4369$ and $0.7386$, while the theoretical convergence rates in
%Remarks  \ref{remark.grad}  and \ref{remark.arma} are $0.4502$ and $0.7584$ respectively.
%The average exponential convergence rates of GD0 algorithm and  ARMA method    in Table \ref{ComparisonOneShift} are $0.4169$ and $0.7326$, while the theoretical convergence rates in
%Remarks  \ref{remark.grad}  and \ref{remark.arma} are $0.4501$ and $0.7582$ respectively.
 	From Table
	\ref{ComparisonOneShift}, % and Figure \ref{approximation.fig},
we  observe that
	the IOPA$L$ algorithms  with higher
	degree $L$ (resp. the ICPA$K$  with higher degree $K$) have  faster convergence,
and the IOPA$L$ algorithm outperforms the ICPA$K$ algorithm when the same degree $L=K$ is selected.
 Comparing with the ARMA algorithm and the GD0 algorithm, we
	observe
that the proposed  IOPA$L$ algorithms   with $L\ge 1$
	and ICPA$K$ algorithms with $K\ge 2$ have  faster convergence, while
	the GD$0$=IOPA$0$ algorithm outperforms the ICPA$K$ when $K=1$ and the ARMA has slowest convergence. %  and  the ICPA$K$ with $K=0$ does not converge.

	\begin{figure}[t]  %[h]
		\begin{center}
			\includegraphics[width=68mm, height=45mm]{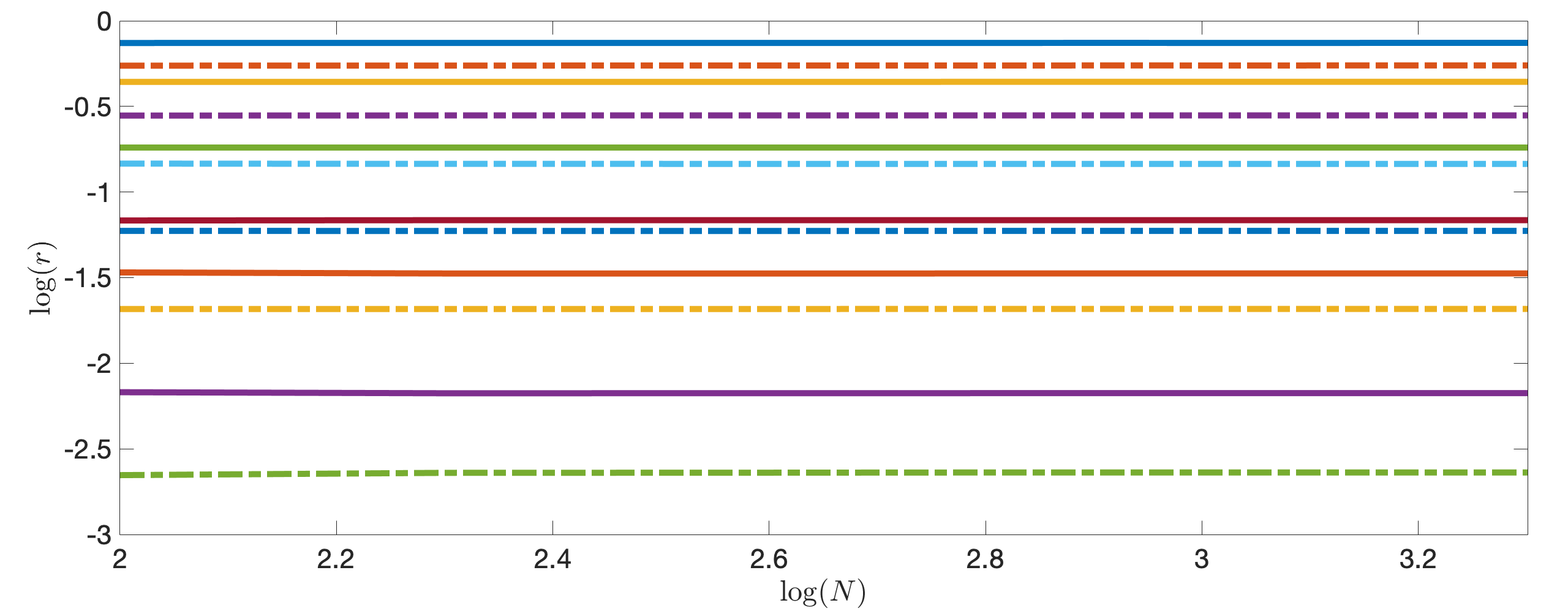}%{Conv_Plot.png} %{TimePlot_log_new.png}
			%\hskip0.2cm
			\caption{ %The average exponential convergence rate $r$ in the logarithmic scale over 1000 trials
%with respect to the
%order $100\le N\le 2000$ of the circulant graph ${\mathcal C}(N, Q_0)$ in the logarithmic scale. \cc{
Plotted from top to  bottom are the average exponential convergence rate $r$ in the logarithmic scale over 1000 trials by ARMA, ICPA1, GD0, ICPA2, IOPA1, ICPA3, ICPA4, IOPA2, ICPA5, IOPA3, IOPA4, IOPA5 to implement the inverse filtering  ${\bf b}\longmapsto {\bf H}_1^{-1}{\bf b}$ on circulant graphs  ${\mathcal C}(N, Q_0)$ with $100\le N\le 2000$, respectively.
%for the ARMA algorithm, the GD$0$,			the IOPA$L$  and  ICPA$K$ algorithms, $1\le L, K\le 5$
  % is almost independent on $N\ge 100$,
	}
			\label{ConvPlot.fig}
		\end{center}
		\vspace{-2em}
	\end{figure}

		\begin{figure}[t]  %[h]
		\begin{center}
			\includegraphics[width=68mm, height=45mm]{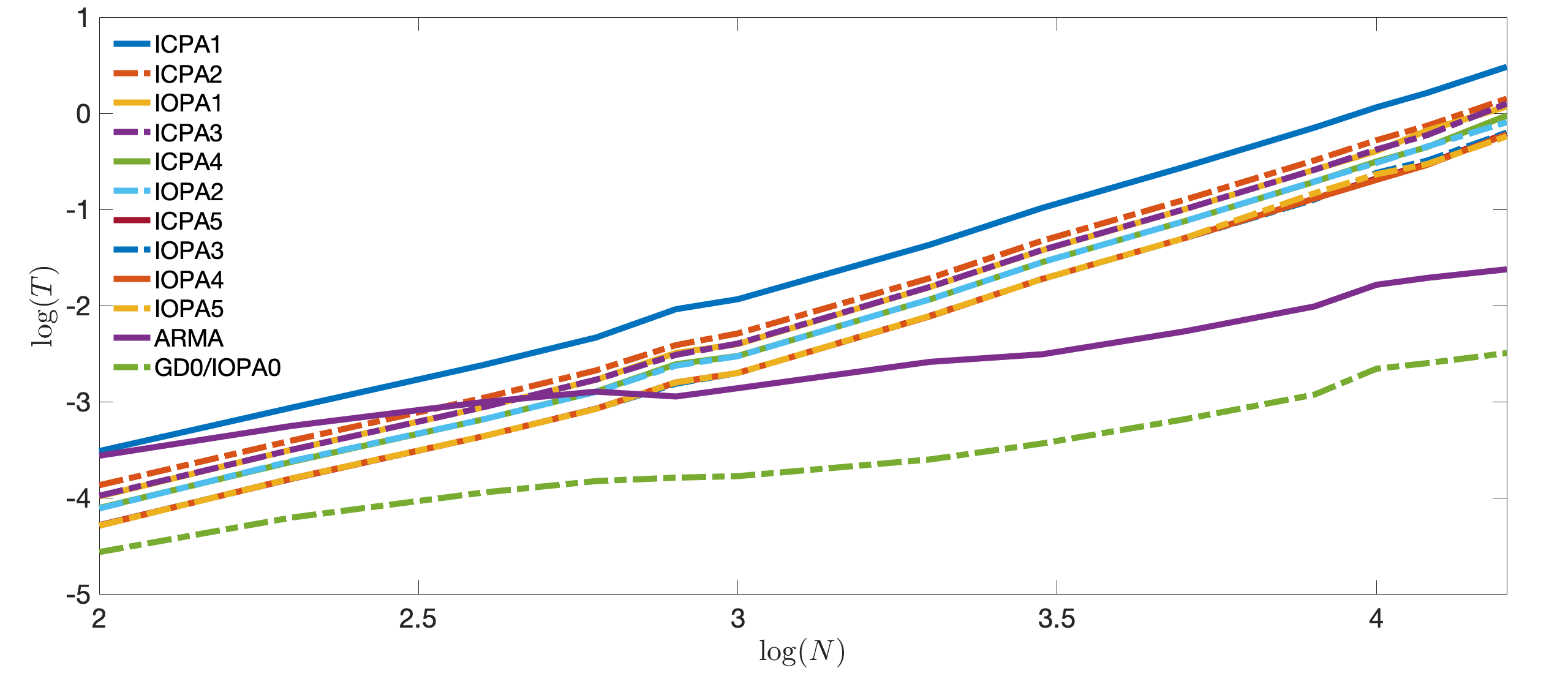}%{Fig1_jiang_new.png}%{Fig1_jiang.png} %{TimePlot_log_new.png}
			%\hskip0.2cm
			\caption{ %\cc{Plotted are the average of total running time $T$ for using the IOPA$L$, ICPA$K$, GD0 and ARMA to  implement inverse filtering and reach the same  relative error level  on circulant graph ${\mathcal C}(N, Q_0)$ with $100\le N\le 16000$ over 1000 trials, where $1\le K, L\le 5$. }
 Plotted are  the average of total running time $T$ in the logarithmic scale  for
 the  GD0, ARMA and  the   IOPA$L$ and ICPA$K$ algorithms with $1\le L, K\le 5$ to  implement the inverse filtering
 on circulant graphs ${\mathcal C}(N, Q_0)$ with $100\le N\le 16000$.
 % the  relative iteration
%error to  reach the relative error less than $ 10^{-3}$ %to implement the IOPA algorithm,  the ICPA algorithm, the GD0
%				algorithm and the ARMA algorithm.
% in the logarithmic scale $\log T$
%with respect to the logarithmic scale $\log N$ of the order $N$ of the circulant graph  ${\mathcal C}(N, Q_0)$ for $100\le N\le 16000$
%, where
%$N=50,100,200,400,600,800,1000,1200,1400,1600,1800,2000$.				
			}
			\label{TotalTime.fig}
		\end{center}
		\vspace{-2em}
	\end{figure}

We also apply ARMA, GD0, and IOPA$L$  and ICPA$K$ with $1\le L, K\le 5$
to implement inverse filtering procedure associated with ${\bf H}_1=h_1({\mathbf L}^{\rm sym}_{{\mathcal C}(N, Q_0)})$
on the circulant graph ${\mathcal C}(N, Q_0)$ with  $h_1$  in \eqref{h1filter.def} and $N\ge 100$.
  All experiments were performed on MATLAB R2017b, running on a DELL T7910 workstation with two Intel Core E5-2630 v4 CPUs (2.20 GHz) and 32GB memory.
 From the simulations, we observe that
the exponential convergence rate  $r$ for the proposed algorithms is almost independent on $N\ge 100$, see Figure  \ref{ConvPlot.fig},
and the number of iterations
 to ensure the relative iteration error  $E(m, {\bf x})\le 10^{-3}$
  are $ 20,  8,  11,   5,   4 ,   4,   3,    3,  2,   2,   2,   2$  for ARMA,  GD0,  ICPA1, ICPA2, IOPA1, ICPA3,   ICPA4,  IOPA2,  ICPA5, IOPA3,  IOPA4,
 IOPA5 respectively.
  Shown in Figure  \ref{TotalTime.fig} is the average running time  $T$ in the logarithmic scale
 over 1000 trials, where  the running time $T$ is measured in seconds
 to ensure the relative iteration error  $E(m, {\bf x})\le 10^{-3}$.
%This  indicates that % to arrive  the same accuracy level,
%  GD0 (IOPA0) are the fastest iterative algorithm to reach the desired accuracy,   ICPA5, IOPA3, IOPA4 and IOPA5 follow
%   for  $100\le N\le 650$, and ARMA for  $N\ge 650$,
%the next are ICPA4 and IOPA2,  IOPA1 and ICPA3 have comparable running time.
From our simulations, we see that
 there is a complicated
trade-off between the convergence rate and the running time to apply our proposed algorithms, ARMA and GD0 for the implementation of an inverse filtering procedure.

	\subsection{Denoising time-varying signals}
\label{denoising.subsection}

	In this section, we consider denoising  noisy sampling  % \cc{delete the sampling?}
 data
	of some time-varying graph signal
	${\bf x}(t)$  on  random geometric graphs, which is
		governed by the  differential equation
\eqref{de.eq00}.
	Discretizing the differential equation \eqref{de.eq00} gives
	\begin{equation}\label{de.eq01}
	\delta^{-2} \big({\bf x}( t_{i+1})+{\bf x}(t_{i-1})-2 {\bf x}(t_i)\big)\approx   {\bf P} {\bf x}(t_i),
	\end{equation}
	where $i=1, \ldots, M$.
	Applying the trivial extension ${\bf x}(t_0)={\bf x}(t_1)$ and ${\bf x}(t_{M+1})= {\bf x}(t_{M})$ around the boundary, we can
	reformulate  \eqref{de.eq01} in a recurrence relation,
	\begin{equation}\label{dis.eq-1}
	{\bf x}( t_{i})\approx (2{\bf I}+\delta^2 {\bf P}) {\bf x}(t_{i-1})-{\bf x}(t_{i-2}), 2\le i\le M,
	\end{equation}
with 	$ {\bf x}(t_0)={\bf x}(t_1)$.
Let ${\mathcal T}=(T, F)$  be the line graph with the vertex set $T=\{t_1, \cdots, t_{M}\}$
	and edge set $F=\{(t_1, t_2), %(t_1, t_0),
	%(t_2, t_3),  %(t_2, t_1),
	\ldots, (t_{M-1}, t_{M})\}\cup \{(t_{M}, t_{M-1}),  %(t_2, t_1),
 \ldots, (t_2, t_1) %(t_{M}, t_{M-1}) %(t_2, t_{2\pm 1}), \ldots,
	\}$.
Denote
Kronecker product of  two matrices ${\bf A}$ and ${\bf B}$
by ${\bf A}\otimes{\bf B}$, and
 the Laplacian  matrix of the line graph ${\mathcal T}$ with vertices $\{t_1, \ldots, t_M\}$ by ${\bf L}_{\mathcal T}$.
Then we can reformulate the recurrence relation \eqref{dis.eq-1}
	in the matrix form
	\begin{equation} \label {de.eq02}
	(\delta^{-2} {\bf L}_{\mathcal T}\otimes {\bf I}+   {\bf I}\otimes {\bf P}){\bf X}\approx {\bf 0},
	\end{equation}
	where
${\bf X}$ is the vectorization of discrete time signals
${\bf x}(t_1), \ldots, {\bf x}(t_M)$.
	In most of applications \cite{ekambaram13,  Grassi18,  Loukas16,  Waheed18},  the time-varying signal ${\bf x}(t)$ at every moment  $t$ has certain smoothness  in the vertex domain,
	which is usually described by
	\begin{equation}\label{timevaryingobservation2}
	({\bf x}(t_i))^T {\bf L}_{\mathcal G}^{\rm  sym} {\bf x}(t_i)\approx 0,\ 1\le i\le M,
	\end{equation}
where ${\bf L}_{\mathcal G}^{\rm  sym}$ is the symmetric normalized Laplacian on the connected, undirected and unweighted
 graph ${\mathcal G}=(V, E)$. Based on the observations \eqref{de.eq02} and \eqref{timevaryingobservation2}, we propose
	the following Tikhonov
regularization approach
	%\vspace{-.5em}
	\begin{equation}\label{ExtendedBasicEnergyModel}
	\widehat {\bf X} := {\rm arg}\min_{\bf Y}  \|{\bf Y}-{{\bf B}}\|_2^2+\alpha {\bf Y}^T ( {\bf I}\otimes {{\bf L}^{\rm sym}_{\mathcal{G}}}
	) {\bf Y} +\beta  {\bf Y}^T (\delta^{-2} {\bf L}_{\mathcal T}\otimes {\bf I} +  {\bf I}\otimes {\bf P})  {\bf Y},
%	\vspace{-.6em}
	\end{equation}
%	\vspace{-.5em}\begin{eqnarray}\label{ExtendedBasicEnergyModel}
%	\widehat {\bf X}&\hskip-0.08in :=& \hskip-0.08in {\rm arg}\min_{\bf Y}  \|{\bf Y}-{{\bf B}}\|_2^2+\alpha {\bf Y}^T ( {\bf I}\otimes {{\bf L}^{\rm sym}_{\mathcal{G}}}
%	) {\bf Y}\nonumber\\
%	& & \quad +\beta  {\bf Y}^T (\delta^{-2} {\bf L}_{\mathcal T}\otimes {\bf I} +  {\bf I}\otimes {\bf P})  {\bf Y},
%	\vspace{-.6em}\end{eqnarray}
	where ${\bf B}$ is  the vectorization of the observed noisy  data
	${\bf b}_1, \ldots, {\bf b}_M$,   and $\alpha, \beta$ are penalty constants in the vertex  and ``temporal" domains to be appropriately chosen \cite{Kurokawa17}.

	Set
	\begin{equation*}
	{\bf D}_{\alpha, \beta}= {\bf I}+ \alpha  {\bf I}\otimes {{\bf L}^{\rm sym}_{\mathcal{G}}}
	+\beta   ( \delta^{-2}{\bf L}_{\mathcal T}\otimes {\bf I} + {\bf I}\otimes {\bf P}), \ \alpha, \beta\ge 0.
	\end{equation*}
	%{\color{red} \st{(Size of each ${\bf I}$ is different here, do we need to mention it?)}}
The minimization problem \eqref{ExtendedBasicEnergyModel}
	has an explicit solution
	\begin{equation}\label{denoisingprocedure}
	\widehat {\bf X}= ({\bf D}_{\alpha, \beta})^{-1} {\bf B},
	\end{equation}
	when ${\bf I}+\alpha {\bf L}^{\rm sym}_{\mathcal G}+\beta  {\bf P}$ is positive definite.
%	\begin{equation} {\bf I}+\alpha {\bf L}^{\rm sym}_{\mathcal G}+\beta  {\bf P}\succ {\bf 0}.
%	\end{equation}
Set
${\bf S}_1=
	{\bf I}\otimes {{\bf L}^{\rm sym}_{\mathcal{G}}}$ and ${\bf S}_2=\frac{1}{2}{\bf L}_{\mathcal T}\otimes {\bf I}$.
As shown in Proposition \ref{cartesianproduct.pr} of Appendix \ref{productgraph.appendix},
   ${\bf S}_1$ and ${\bf S}_2$ are commutative graph shifts on the
	Cartesian product graph ${\mathcal T} \times {\mathcal G}$. Moreover one may obtain from
\eqref{kroneckerproductrule}  and \eqref{jointspectrum.def} that  ${\bf S}_1$ and ${\bf S}_2$ have their joint spectrum
 contained in $[0, 2]^2$.
	Therefore for the case that
	${\bf P}=
	p({\bf L}^{\rm sym}_{\mathcal G})$ for some polynomial $p$,
	${\bf D}_{\alpha, \beta}=h_{\alpha, \beta}({\bf S}_1, {\bf S}_2)$ is a polynomial graph filter of commutative graph filters ${\bf S}_1$ and ${\mathbf S}_2$,
% of    shifts	${\bf S}_1$ and ${\bf S}_2$,
where $h_{\alpha, \beta}(t_1, t_2)= 1+ \alpha t_1+\beta p(t_1)+ 2\beta \delta^{-2} t_2$.
Moreover, one may verify that ${\bf D}_{\alpha, \beta}$ is positive definite if
$$h_{\alpha, \beta}(t_1, t_2)>0, \ \ 0\le t_1, t_2\le 2,$$ %\vspace{-.4em}\end{equation*}
%The above requirement on $h_{\alpha, \beta}$ can be reformulated as $1+ \alpha t_1+\beta p(t_1)>0$ for all $0\le t_1\le 2$,
which is  satisfied if
$1+\beta p(t_1)> 0$ for all $0\le t_1\le 2$.
% \cc{I did not get the idea of the following sentence. I thought we only need to mention``
%We use
%	the IOPA  algorithm \eqref{optimaliterativedistributedalgorithm.eqn1}
%	and
%	the ICPA  algorithm  \eqref{Chebysheviterativedistributedalgorithm.eqn1}
%to implement the denoising procedure \eqref{denoisingprocedure} as  ${\bf D}_{\alpha, \beta}$ is a polynomial of commutative graph shifts.
%''}
Hence
	we may use
	the IOPA  algorithm \eqref{optimaliterativedistributedalgorithm.eqn1}
	and
	the ICPA  algorithm  \eqref{Chebysheviterativedistributedalgorithm.eqn1}
with the polynomial filter ${\bf H}$ being replaced by ${\bf D}_{\alpha, \beta}$
	to implement the denoising procedure \eqref{denoisingprocedure}.
%, \ 0\le t_1, t_2\le 2.\vspace{-.4em}\end{equation*}.
By the  exponential convergence  of the proposed algorithms, we may use
their outputs   at $m$-th iteration with large $m$ as  denoised time-varying signals.

\begin{figure}[t]  %[t]
		\begin{center}
		\includegraphics[width=60mm, height=50mm]{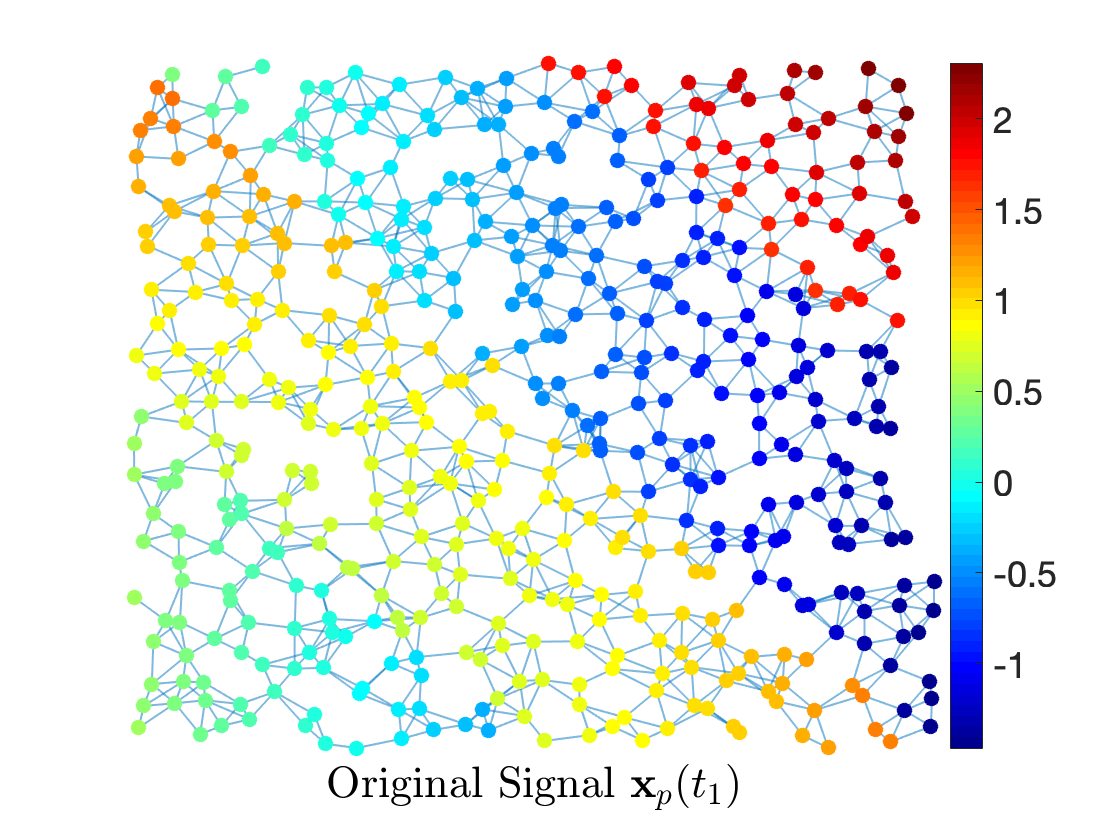}
	\includegraphics[width=60mm, height=50mm]{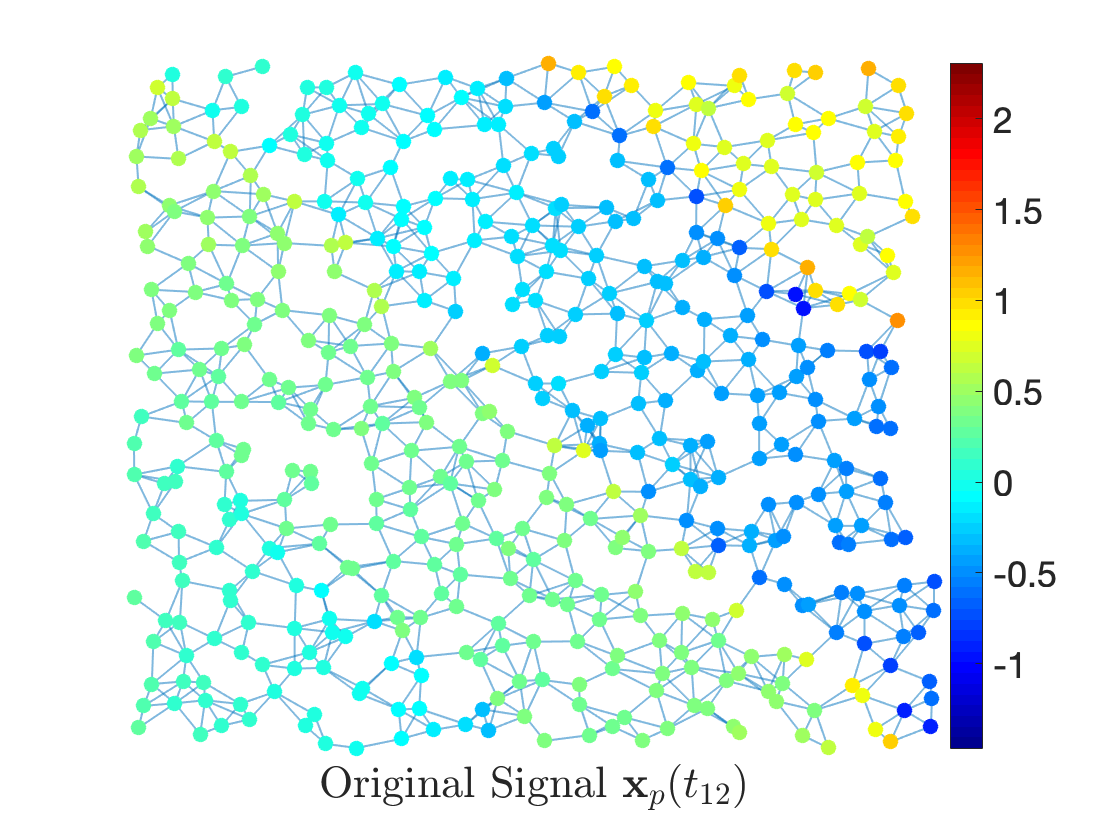}
			\caption{
Presented on the left and right  are the first snapshot ${\bf x}_p(t_1)$ and the middle snapshot ${\bf x}_p(t_{12})$
				of a time-varying signal ${\bf x}_p(t_m), 1\le m\le 24$, on  the random geometric graph ${\mathcal G}_{512}$ respectively.
The qualities
$({\bf x}_p(t_m))^T {\bf L}_{{\mathcal G}_{512}}^{\rm  sym} {\bf x}_p(t_{m})$
to measure smoothness of ${\bf x}_p(t_m)$ in the vertex domain
 are $84.1992$ and $42.4746$ for $m=1, 12$  respectively,
% to measure their  smoothness  of those snapshots are
%??? and ??? respectively,
 which indicates that the graph signal ${\bf x}(t_{12})$ in the last snapshot is smoother than the
initial graph signal ${\bf x}(t_1)$.
 }			\label{RGG_512}
		\end{center}
		\vspace{-1em}
	\end{figure}

Let ${\mathcal G}_{512}$ be the random geometric graph reproduced by the GSPToolbox, which has
 512 vertices randomly deployed in the region $[0, 1]^2$ and
  an edge existing between two vertices if their physical
distance is not larger than
$\sqrt{2/512}=1/16$  \cite{jiang19, Nathanael2014}.
Denote the symmetric normalized Laplacian matrix on ${\mathcal  G }_{512}$ by ${\bf L}^{\rm sym}_{{\mathcal G}_{512}}$ and
 the coordinates of a vertex $i$ in ${\mathcal G}_{512}$ by $(i_x, i_y)$.
%{\color{blue} In our simulations of this section, we will
%denoise the time-varying signal ${\bf x}(t)$
%		governed by the differential equation \eqref{de.eq00}, where
%				the governing filter  is given by ${\bf P} =-{\bf I}+ {\bf L}^{\rm sym}_{{\mathcal G}_{512}}/2$ and
%				the initial graph signal ${\bf x}_p(t_1)$ is a blockwise polynomial consisting of four strips and imposing  $(0.5-2 i_x)$ on the first and third diagonal strips and $(0.5 + i_x^2 + i_y^2)$ on the second and fourth strips respectively  \cite{jiang19}.
%Shown in Figure \ref{RGG_512} are two snapshots of the time-varying graph signal in \eqref{de.eq00},
%where $M=24, \delta=0.1$, $t_{m}= (m-1)\delta$ for $1\le m\le M$.}
For the simulations in this section, the time-varying signal ${\bf x}(t_m), 1\le m
\le M$, is given in  \eqref{de.eq01}, where  $M=24, \delta=0.1$, % $t_{m}= (m-1)\delta$ for $1\le m\le M$,
				the governing filter  is given by ${\bf P} =-{\bf I}+ {\bf L}^{\rm sym}_{{\mathcal G}_{512}}/2$, and
				the initial graph signal ${\bf x}_p(t_1)$ is a blockwise polynomial consisting of four strips and imposing  $(0.5-2 i_x)$ on the first and third diagonal strips and $(0.5 + i_x^2 + i_y^2)$ on the second and fourth strips respectively  \cite{jiang19}. Shown in Figure \ref{RGG_512} are two snapshots of the above time-varying graph signal.
% It is observe that the  snapshots ${\bf x}_p(t_m), 1\le m\le M$, have different smoothness at different moments.

 Appropriate selection of the penalty constants $\alpha, \beta$ in the vertex
and temporal domains is crucial to have a satisfactory denoising performance.
In the simulations,
we let noise entries of  $\pmb\eta_i, 1\le i\le 24$ in \eqref{de.eq00}, be  i.i.d. variables  uniformly selected  in the range $[-\eta,\eta]$, and
we take %penalty constants $\alpha$ and $\beta$ on model fitting of time-varying signals
%in the vertex
%and temporal domains by
\begin{equation} \label{alpha0.def01}
\alpha =  \frac{ {\mathbb E} \|{\bf B}-{\bf X}\|_2^2}
%(NM \eta^2}
{  {\mathbb E} \big({\bf B}^T ( {\bf I}\otimes {\bf L}^{\rm sym}_{{\mathcal G}_{512}}) {\bf B}\big)}
 =
\frac{MN \eta^2/3}
{  {\bf X}^T ( {\bf I}\otimes {\bf L}^{\rm sym}_{{\mathcal G}_{512}}) {\bf X}+  MN\eta^2/3}
 \approx
\frac{\eta^2}{0.2306+\eta^2},
\end{equation}
%\vspace{-.4em}
%\begin{eqnarray} \label{alpha0.def01}
%\alpha =  \frac{ {\mathbb E} \|{\bf B}-{\bf X}\|_2^2}
%%(NM \eta^2}
%{  {\mathbb E} \big({\bf B}^T ( {\bf I}\otimes {\bf L}^{\rm sym}_{{\mathcal G}_{512}}) {\bf B}\big)}
%&\hskip-0.08in = & \hskip-0.08in
%\frac{MN \eta^2/3}
%%(NM \eta^2}
%{  {\bf X}^T ( {\bf I}\otimes {\bf L}^{\rm sym}_{{\mathcal G}_{512}}) {\bf X}+  MN\eta^2/3}
%\nonumber\\
%&  \hskip-0.13in \approx & \hskip-0.13in
%\frac{\eta^2}{0.2306+\eta^2},
%\vspace{-.6em}\end{eqnarray}
and
  \begin{eqnarray}\label{beta0.def01}
  \beta&  \hskip-0.13in = & \hskip-0.13in \frac{ {\mathbb E} \|{\bf B}-{\bf X}\|_2^2}
%(NM \eta^2}
{ 2 {\mathbb E} \big({\bf B}^T
(\delta^{-2} {\bf L}_{\mathcal T}\otimes {\bf I}+   {\bf I}\otimes {\bf P}) {\bf B}\big)}\approx 0.0026
% \nonumber\\
%&  \hskip-0.13in= & \hskip-0.13in
%\frac{NM}{2{\rm tr} (\delta^{-2} {\bf L}_{\mathcal T}\otimes {\bf I}+   {\bf I}\otimes {\bf P})}
%=\frac{2M\delta^2}{4 M-4 -M\delta^2}
%\approx 0.0026
\end{eqnarray}
to balance
the  fidelity term and the regularization terms on the vertex and temporal domains
 in  the Tikhonov
regularization approach
\eqref{ExtendedBasicEnergyModel}.
%where ${\rm tr}({\bf A})$ is the trace of   a matrix $\bf A$.

We use the IOPA  algorithm \eqref{optimaliterativedistributedalgorithm.eqn1} with $L=1$,
	the ICPA  algorithm  \eqref{Chebysheviterativedistributedalgorithm.eqn1} with $K=1$
	and the gradient descent method  \eqref{gradientdescent.al} with zero initial
to implement the inverse filter procedure  ${\bf B}\longmapsto \widehat {\bf X}={\bf D}_{\alpha, \beta}^{-1} {\bf B}$, denoted  by
IOPA1$(\alpha, \beta)$,  ICPA1$(\alpha, \beta)$ and GD0$(\alpha, \beta)$ respectively.
Let $\widehat {\bf X}^{(m)}, m\ge 1$, be the  outputs of either the IOPA1$(\alpha, \beta)$ algorithm, or the  ICPA1$(\alpha, \beta)$ algorithm, or the  GD0$(\alpha, \beta)$ method
 at $m$-th iteration.
To measure the denoising performance of our approaches, we define
the input
 signal-to-noise ratio
 \begin{equation*}{\rm ISNR}=-20 \log_{10} %\frac
{\| {\bf B}-{\bf X}\|_2}/{\|{\bf X}\|_2}, \end{equation*}
and the output signal-to-noise ratio
 \begin{equation*}{\rm SNR}(m)=-20 \log_{10} %\frac
{\|\widehat {\bf X}^{(m)}-{\bf X}\|_2}/{\|{\bf X}\|_2},\  m\ge 1,\end{equation*}
and
 \begin{equation*}{\rm SNR}(\infty)=-20 \log_{10} %\frac
{\|\widehat {\bf X}-{\bf X}\|_2}/{\|{\bf X}\|_2}.
\end{equation*}
%%%%%%%%%%%%%%%%%%%%%%%%%%%%%%%%%%%%%%%%
%%%%%%%%%%%%%%%%%%%%%%%%%%%%%%%%%%% Table (Nazar) %%%%%%%%%%%%%%%%%%%%%%%%%%%%%%%%%%%%%%%%%%%%%%
\begin{table}[ht]
		\centering
		\caption{
			The average   of the signal-to-noise ratio
${\rm SNR}(m), m=1, 2, 4, 6,  \infty$ for  the noise level $\eta=3/4, 1/2, 1/4, 1/8$ over 1000 trials, where
penalty constants $\alpha$ and $\beta$ are given in \eqref{alpha0.def01} and \eqref{beta0.def01} respectively.		}
		\begin{tabular} {|c|c|c|c|c|c|}
		 \cline{1-6}
			  \backslashbox{Alg.}{SNR} {m}  & 1 & 2 & 4  & 6  %&8
 &$\infty $    \\
		 \hline
		 \hline
%			 \multirow{4}{*}{$\eta =\frac{1}{8}, \alpha=0.0271,\beta=1.2385*10^{-5}$}
     \multicolumn{6}{c} {$\eta$=3/4, ISNR=  3.3755} %\vline
     \\
\hline\hline

 IOPA1($\alpha$, 0)  & 6.5777 &	6.8047	&6.7927  &	6.7926 &	6.7926\\
IOPA1(0, $\beta$)& 6.0597 &	6.0907  &	 6.0735  &	6.0735 &	6.0735\\
IOPA1($\alpha$, $\beta$)& 7.4797 &	8.5330  & 	8.4942 &		8.4931 & 8.4930\\

\hline
			 ICPA1($\alpha$, 0) &6.4581  &   6.8169&     6.7928 &    6.7926  &   6.7926\\
            ICPA1(0, $\beta$)&6.0433 &    6.0899&     6.0735   &  6.0735  &  6.0735\\
			 ICPA1($\alpha$, $\beta$)&    7.4036 &     8.4602 &    8.4924   &  8.4930 &    8.4930\\
 \cline{1-6}
			 GD0($\alpha$, 0)& 4.9399  &  6.7283   &  6.8062 &   6.7943   &  6.7926\\
           GD0(0, $\beta$)&     5.0027&     6.3873   &  6.1225   &  6.0787   &  6.0735\\
			 GD0($\alpha$, $\beta$)&     4.1778  &   6.9998   &  8.3432 &    8.4750  &   8.4930\\
		\hline	
\hline
		    \multicolumn{6}{c} {$\eta$=1/2, ISNR=6.8975
} \\
\hline\hline
			 IOPA1($\alpha$, 0)&   9.2211&    9.3576    & 9.3544 &    9.3544    & 9.3544\\
            IOPA1(0, $\beta$)&    9.4981 &    9.6116 &    9.5949 &    9.5949 &    9.5949\\
  		    IOPA1($\alpha$, $\beta$)&  10.0425  &  11.0678   & 11.0624   & 11.0620 &   11.0620\\
			
 \cline{1-6}
			 ICPA1($\alpha$, 0)&   9.1525  &   9.3617  &   9.3544   &  9.3544  &   9.3544\\
           ICPA1(0, $\beta$)&    9.5037   &  9.6110  &   9.5949&     9.5949  &   9.5949\\
			 ICPA1($\alpha$, $\beta$)&   9.7218 &   11.0092   & 11.0613 &   11.0620  & 11.0620\\
 \cline{1-6}
			 GD0($\alpha$, 0)&         7.1610   &  9.2163  &   9.3568   &  9.3546&     9.3544\\
            GD0(0, $\beta$)&  6.8746  &   9.5953 &    9.6392  &   9.6000   &  9.5949\\
			 GD0($\alpha$, $\beta$)&  5.3263   &  8.9866  &  10.8804 &  11.0423   & 11.0620\\
\hline\hline
		    \multicolumn{6}{c} {$\eta$=1/4, ISNR= 12.9164} %\vline
 \\  \hline\hline
			 IOPA1($\alpha$, 0)& 13.8837   & 13.9053  &  13.9053 &   13.9053  &  13.9053\\
            IOPA1(0, $\beta$)&      15.0923   & 15.6251  &  15.6109  &  15.6108  &  15.6108\\
			 IOPA1($\alpha$, $\beta$)& 14.6334&    15.9121  &  15.9192  &  15.9192  &  15.9192\\
 \cline{1-6}
			 ICPA1($\alpha$, 0)& 13.8693 &   13.9055 &   13.9053   & 13.9053 &   13.9053\\
            ICPA1(0, $\beta$) & 15.2045   & 15.6255 &   15.6109  &  15.6108  &  15.6108\\
			 ICPA1($\alpha$, $\beta$)&  14.1329  &  15.8756  &  15.9190 &   15.9192  &  15.9192\\
 \cline{1-6}
			 GD0($\alpha$, 0)&  12.2195  &  13.8694 &   13.9052  &  13.9053 &   13.9053\\
            GD0(0, $\beta$) &   8.5703  &  14.2275&    15.6302 &   15.6153   & 15.6108\\
			 GD0($\alpha$, $\beta$)&   7.2800 &   12.7687  &  15.7309 &   15.9044  &  15.9192\\
 \hline\hline
 		    \multicolumn{6}{c} {$\eta$=1/8, ISNR= 18.9370} %\vline
 \\  \hline\hline
			 IOPA1($\alpha$, 0)& 19.2287   & 19.2299  &  19.2299 &   19.2299  &  19.2299\\
            IOPA1(0, $\beta$)& 19.7355   & 21.6233 &   21.6187  &  21.6187  &  21.6187\\
			 IOPA1($\alpha$, $\beta$)& 19.1231&    21.6012  &  21.6335  &  21.6336  &  21.6336\\
 \cline{1-6}
			 ICPA1($\alpha$, 0)& 19.2275 &   19.2299 &   19.2299   & 19.2299 &   19.2299\\
            ICPA1(0, $\beta$) & 20.1427   & 21.6275 &   21.6187  &  21.6187  &  21.6187\\
			 ICPA1($\alpha$, $\beta$)&  18.9606  &  21.5903  &  21.6335 &   21.6336  &  21.6336\\
 \cline{1-6}
			 GD0($\alpha$, 0)&  18.5134  &  19.2279 &   19.2299  &  19.2299 &   19.2299\\
            GD0(0, $\beta$) &   9.1228  &  17.0384 &    21.5398 &   21.6206   & 21.6187\\
			 GD0($\alpha$, $\beta$)&   8.6071 &   16.0831  &  21.3347 &   21.6140  &  21.6336\\
 \hline\hline
		\end{tabular}
		\label{ProductDenoising.table}
		%\vspace{-1em}
	\end{table}
Presented in Table \ref{ProductDenoising.table}
	are
the average  over 1000 trials of
${\rm ISNR}$ and  ${\rm SNR}(m),
m=1, 2, 4, 6, \infty$.
%for  penalty constant  pairs $(\alpha, \beta)$  being either
%$(\alpha_0, 0)$, or  $( 0, \beta_0)$ or $(\alpha_0, \beta_0)$
%for the IOPA$(\alpha, \beta)$,  IOPA$(\alpha, \beta)$ and GD$(\alpha, \beta)$ algorithms
%at different iteration steps,
%	where  $N=512$ and ${\bf X}$ is  the vectorization of the  time-varying signal  ${\bf x}_p(t_i), 1\le i\le 24$ in Figure
%\ref{RGG_512}.
From Table \ref{ProductDenoising.table}, we observe
that the denoising procedure ${\bf B} \longmapsto \widehat {\bf X}={\bf D}_{\alpha, \beta}^{-1}{\bf B}$  via
Tikhonov regularization \eqref{ExtendedBasicEnergyModel}  on  the temporal-vertex domain
can improve
the  signal-to-noise ratio in the range from 2dBs to 5dBs, depending on the noise level $\eta$. Also we see that
  the denoising procedure ${\bf B} \longmapsto \widehat {\bf X}^{(m)}$  via the output   of  the $m$-th iteration in
IOPA1($\alpha, \beta$) algorithm with $m\ge 2$,  the
GD0$(\alpha, \beta)$ method  and the ICPA1$(\alpha, \beta)$ algorithm with  $m\ge 4$ have similar denoising performance.
 Due to  the correlation of time-varying signals across the joint temporal-vertex domains,
 it is expected that
 the  Tikhonov regularization \eqref{ExtendedBasicEnergyModel} on  the temporal-vertex domain
has  better denoising performance than
  Tikhonov regularization either only on  the vertex domain (i.e., $\beta=0$ in \eqref{ExtendedBasicEnergyModel})
  or only on the temporal domain
(i.e., $\alpha=0$ in \eqref{ExtendedBasicEnergyModel}) do.
The above  performance  expectation is confirmed in Table \ref{ProductDenoising.table}.  We remark that denoising approach via
the  Tikhonov regularization  on  the temporal-vertex domain is an inverse filtering procedure of
a polynomial graph filter of {\bf two} shifts, while
the one either on  the vertex domain  or on the temporal domain only is an inverse filtering procedure of
a polynomial graph filter of {\bf one} shift.

	\vspace{-0.2em}	
\subsection{Denoising  an hourly temperature dataset}
\label{denoisingweather.subsection}

%For a vector ${\bf x}=(x(i))_{i\in V}$ we denote its maximum norm  by $\|{\bf x}\|_\infty$ respectively,
%	
%
%
%time-varying signal: two shifts vs single shifts.
%
%real-world problem
%
%{\color{red} Do more large $N$ to demonstrate the convergence rate is independent on the size of graph $N$}
%	

	%\begin{figure}[t]  %[t]
			\begin{figure*}[t]  %[t]
\begin{center}		
%		\begin{subfigure}[b]{0.49\linewidth}        %% or \columnwidth
        \includegraphics[width=65mm, height=42mm]{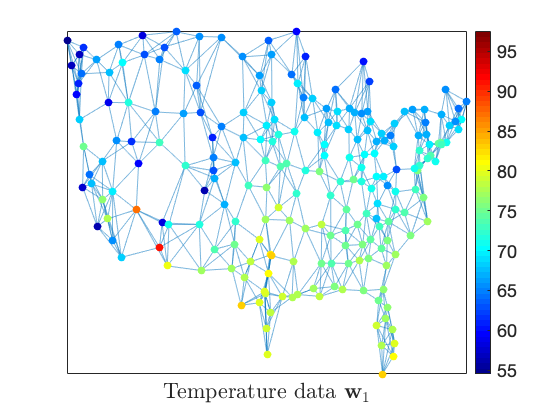}
 %   \end{subfigure}
  %  \begin{subfigure}[b]{0.49\linewidth}        %% or \columnwidth
        \includegraphics[width=65mm, height=42mm]{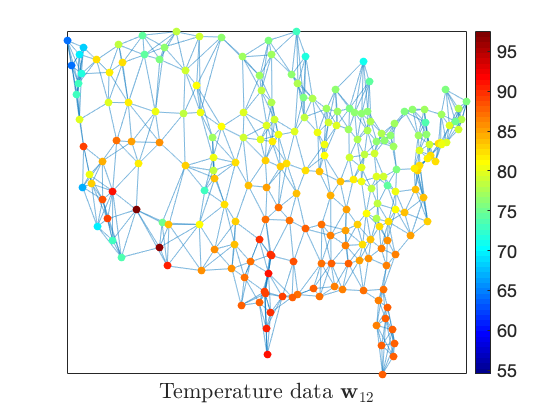}
%    \end{subfigure}
			\caption{Presented on the left and right sides are the temperature data ${\bf w}_1$ and ${\bf w}_{12}$,
  where ${\bf w}_i$, $1\le i\le 24$, are the hourly temperature of 218 locations in the United States on August 1st, 2010.
			%For the above time-varying signal, its average energy $(\sum_{i=1}^{24}\|{\bf w}(i))\|_2^2)/24= {\bf W}^T {\bf W}/24$ is $268.81$, where ${\bf X}$ is the vectorization of the time-varying signal ${\bf w}(i), 1\le i\le 24$.
			}		
			\label{UShourly_temp}
		\end{center}
			\vspace{-1em}
	\end{figure*}

In the section, we consider denoising  the  hourly temperature  dataset collected at $218$ locations in the United States on  August 1st, 2010,  measured in Fahrenheit \cite{cheng2021, zeng17}.
The above real-world dataset is of size  $218 \times 24$, and it can be modelled as a time-varying signal ${\bf w}(i), 1\le i\le 24$,
on the product graph ${\mathcal C}\times {\mathcal W}$,  where
${\mathcal C}:={\mathcal C}(24, \{1\})$ is the circulant graph  with 24 vertices and  generator $\{1\}$, and
 ${\mathcal W}$ is the undirected graph with  $218$ locations as vertices and
 edges  constructed by the  5 nearest neighboring algorithm, see Figure \ref{UShourly_temp} for two snapshots of the dataset.
Unlike in the  simulation in the last subsection, the above time-varying signal ${\bf w}(t_i), 1\le i\le 24$, is  not necessarily to be governed by a different equation of the form \eqref{de.eq00}.

Given noisy temperature data
\begin{equation*}
\widetilde {\bf w}_i={\bf w}_i+{\pmb \eta}_i, \  i=1, \ldots, 24,
\end{equation*}
we propose
	the following denoising approach,
	\begin{equation}\label{Weatherminimization}
	\widehat {\bf W} := {\rm arg}\min_{\bf Z}  \|{\bf Z}-{\widetilde {\bf W}}\|_2^2+\tilde\alpha {\bf Z}^T ( {\bf I}\otimes {\bf L}^{\rm sym}_{\mathcal{W}}
	) {\bf Z} +\tilde \beta  {\bf  Z}^T  ({\bf L}_{\mathcal C}^{\rm sym}\otimes {\bf I}) {\bf Z},
	\end{equation}
%	\vspace{-.3em}\begin{eqnarray}\label{Weatherminimization}
%	\widehat {\bf W}&\hskip-0.08in :=&\hskip-0.08in  {\rm arg}\min_{\bf Z}  \|{\bf Z}-{\widetilde {\bf W}}\|_2^2+\tilde\alpha {\bf Z}^T ( {\bf I}\otimes {\bf L}^{\rm sym}_{\mathcal{W}}
%	) {\bf Z}\nonumber\\
%	& & \qquad\quad +\tilde \beta  {\bf  Z}^T  ({\bf L}_{\mathcal C}^{\rm sym}\otimes {\bf I}) {\bf Z},
%	\vspace{-.6em}\end{eqnarray}
	where $\widetilde {\bf W}$ is  the vectorization of the  noisy temperature data
	$\widetilde {\bf w}_1, \ldots, \widetilde {\bf w}_{24}$  with  noises ${\pmb \eta}_i, 1\le i\le 24$ in \eqref{de.eq00}
having their components randomly selected in $[-\eta, \eta]$ in a uniform distribution, ${\bf L}^{\rm sym}_{\mathcal W}$ and
${\bf L}_{\mathcal C}^{\rm sym}$ are normalized Laplacian matrices on the graph ${\mathcal W}$ and ${\mathcal C}$ respectively,
 and $\tilde \alpha, \tilde\beta\ge 0$ are penalty constants in the vertex and temporal domains to be appropriately selected.

%on observation noises and penalties  on the vertex and temporal domains.

  Set
 $\tilde {\bf S}_1={\bf I}\otimes {\bf L}^{\rm sym}_{\mathcal{W}}$, $\tilde {\bf S}_2={\bf L}_{\mathcal C}^{\rm sym}\otimes {\bf I}
 $
 and
 ${\bf F}_{\tilde\alpha, \tilde \beta}={\bf I}+\tilde\alpha \tilde {\bf S}_1+ \tilde\beta \tilde {\bf S}_2, \
\tilde \alpha, \tilde \beta\ge 0$.
%
% $$f_{\tilde \alpha,\tilde \beta}(t_1, t_2)= 1+\tilde \alpha t_1+\tilde \beta t_2,\ 0\le t_1, t_2\le 2.$$
 One may verify that %  $\tilde {\bf S}_1$ and $\tilde {\bf S}_2$ are commutative graph shifts on the
%	Cartesian product graphs  ${\mathcal C}\times{\mathcal G}$
%	 with their joint spectrum
% eigenvalues being contained in $[0, 2]^2$, and
  the explicit solution of the minimization problem  \eqref{Weatherminimization} is given by
$\widehat {\bf W}= ({\bf F}_{\tilde\alpha, \tilde\beta})^{-1} {\widetilde
{\bf W}}$, and
 the proposed approach to  denoise  the temperature dataset becomes an inverse filtering procedure \eqref{inverseprocedure}
with ${\bf H}$   and ${\bf b}$ replaced by ${\bf F}_{\tilde\alpha,\tilde \beta}$ and $\widetilde {\bf W}$ respectively.
	In absence of notation, we still denote  the IOPA  algorithm \eqref{optimaliterativedistributedalgorithm.eqn1} with $L=1$,
	the ICPA  algorithm  \eqref{Chebysheviterativedistributedalgorithm.eqn1} with $K=1$
	and the gradient descent method  \eqref{gradientdescent.al} with initial zero
to implement the inverse filter procedure  $\widetilde {\bf W}\longmapsto {\bf F}_{\tilde\alpha, \tilde\beta}^{-1} \widetilde {\bf W}$ by
IOPA1$(\tilde\alpha, \tilde\beta)$,  ICPA1$(\tilde \alpha, \tilde\beta)$ and GD0$(\tilde\alpha, \tilde\beta)$ respectively.

 In our simulations, we
 % let
 %noise entries of  ${\pmb \eta}_i, 1\le i\le 24$ be  i.i.d. variables  uniformly selected  in the range $[-\eta,\eta]$, and we
  take %the penalty  constants  $\tilde \alpha$ and $\tilde \beta$ as
%\vspace{-0.6em}\begin{equation*}
%\tilde\alpha  =  \frac{\mathbb{E}\|{\bf Z}-{\widetilde {\bf W}}\|_2^2}
%{\mathbb{E} \big({\widetilde {\bf W}^T} \tilde {\bf S}_1 {\widetilde{\bf W}})}
% =
%\frac{ 4096\eta^2}
%{ {\bf W}^T \tilde {\bf S}_1 {\bf W}+ 4096\eta^2}
%,
%\vspace{-0.4em}\end{equation*}
\begin{equation*}
\tilde\alpha  =  \frac{\mathbb{E}\|{\bf Z}-{\widetilde {\bf W}}\|_2^2}
{\mathbb{E} \big({\widetilde {\bf W}^T} \tilde {\bf S}_1 {\widetilde{\bf W}})}
 =
\frac{ 1744\eta^2}
{ {\bf W}^T \tilde {\bf S}_1 {\bf W}+ 1744\eta^2}
\end{equation*}
and
% \vspace{-0.6em} \begin{equation*} \tilde\beta =   \frac{\mathbb{E}\|{\bf Z}-{\widetilde {\bf W}}\|_2^2}
%  { \mathbb{E} (\widetilde {\bf W}^T \tilde {\bf S}_2 \widetilde{\bf W})}= \frac{4096\eta^2}{  {\bf W}^T \tilde {\bf S}_2 {\bf W}+ 4096\eta^2}
% \vspace{-0.6em} \end{equation*}
 \begin{equation*} \tilde\beta =   \frac{\mathbb{E}\|{\bf Z}-{\widetilde {\bf W}}\|_2^2}
  { \mathbb{E} (\widetilde {\bf W}^T \tilde {\bf S}_2 \widetilde{\bf W})}= \frac{1744\eta^2}{  {\bf W}^T \tilde {\bf S}_2 {\bf W}+ 1744\eta^2}
 \end{equation*}
  to balance three terms in  the regularization approach \eqref{Weatherminimization}.
Presented in Table \ref{denoising_UShourlytemperature.Table}
	are
the average  over 1000 trials of the input signal-to-noise ratio
${\rm ISNR}$ and
the output signal-to-noise ratio %{\color{red} (need to replace $\widetilde {\bf W}$ with ${\bf W}$)}
	\vspace{-0.5em}$${\rm SNR}(m)=-20 \log_{10} \frac{\|\widehat{\bf W}^{(m)}-{\bf W}\|_2}{\|{\bf W}\|_2},\  m\ge 1,\vspace{-0.6em}$$
	which are used to measure the denoising performance of the IOPA1$(\tilde\alpha, \tilde\beta)$,  ICPA1$(\tilde\alpha, \tilde\beta)$ and GD0$(\tilde\alpha, \tilde\beta)$ at the $m$th iteration,  where
$\widehat {\bf W}^{(\infty)}:=\widehat {\bf W}$ and
$\widehat {\bf W}^{(m)}, m\ge 1$, are  outputs of
the IOPA1$(\tilde\alpha, \tilde\beta)$ algorithm, or the  ICPA1$(\tilde\alpha, \tilde\beta)$, or the  GD0$(\tilde\alpha, \tilde\beta)$  at $m$-th iteration.
%Shown in Table \ref{denoising_UShourlytemperature.Table}
%	are our denoising performance of the IOPA$(\tilde\alpha, \tilde\beta)$,  IOPA$(\tilde\alpha, \tilde\beta)$ and GD$(\tilde\alpha, \tilde\beta)$
%at different iteration steps,
%	where   $\widehat {\bf W}$ is  the vectorization of the  time-varying signals  ${\bf w}(t_i), 1\le i\le 24$,
%observation noises ${\pmb \eta}_i, 1\le i\le M$ in \eqref{de.eq00}
%have their components randomly selected in $[-\eta, \eta]$ in a uniform distribution,
%$\widetilde {\bf W}^{(m)}, m\ge 1$ are  outputs of
%the IOPA$(\tilde\alpha, \tilde\beta)$ algorithm, or the  IOPA$(\alpha, \beta)$, or the  GD$(\alpha, \beta)$  at $m$-th iteration,
%the average  over 200 trial of the signal-to-noise ratio
%$${\rm SNR}(m)=-20 \log_{10} \frac{\|\widetilde{\bf W}^{(m)}-\widetilde {\bf W}\|_2}{\|\widetilde {\bf W}\|_2},\  m\ge 1,$$
%
%  {\color{red} Jiang: I am not sure whether the above selection of $\alpha, \beta$ are in good shape or not, but theoretically it is.
% For the simulation, please do
%the noise level at $\eta=1/2, 1/4, 1/8$, then explain your simulation results. 			
%			 is the Tikhonov denoising performance with different weighted factors and different noise levels.
%}
% the output signal-to-noise ratio ${\rm SNR}(m),
%m=1, 2, 4, 6, \infty$, where noise entries of  $\eta_i, 1\le i\le 24$ are  i.i.d. variables  uniformly selected  in the range $[-\eta,\eta]$ with $\eta=5, 10, 20, 35$.
 From Table \ref{denoising_UShourlytemperature.Table}, we see that
   the Tikhonov regularization on the
temporal-vertex domain has better  performance on denoising the hourly temperature dataset than
the Tikhonov regularization only either on the vertex domain  (i.e. $\tilde\beta=0$) or
on the temporal domain   (i.e. $\tilde \alpha=0$)
do.  Also we observe that  the temporal correlation has larger influence than the vertex correlation for small noise corruption $\eta\le 10$, while
the influence of the vertex correlation is more significant than the temporal correlation for the moderate and  larger noise corruption.
%}

%  with the penalty constants $\tilde{\alpha}, \tilde{\beta}$ both in vertex and temporal domain is  better than the denoising performance with  penalty constants only on vertex domain $\tilde{\beta}=0$ or only on the temporal domain $\tilde{\alpha}=0$ for moderate and large noise corruption $\eta\ge 10$. {\color{red} From Table \ref{denoising_UShourlytemperature.Table}, the denoising performance with the nonzero penalty constants $\tilde{\alpha}, \tilde{\beta}$ is better than  that of other parameters scenarios, owing to the  exploitation of  the joint correlation over the vertex and temporal domains. Therefore, the proposed filter with multi-shift and the corresponding algorithms for the inverse filtering have the ability to perform denoising of the real-world datasets. }

%{\color{blue} the 2nd observation is wrong} \st{2) for $m\ge 1$, applying the output of $m$-th iteration of  the IOPA algorithm to denoise the weather dataset has better denoising performance
%that using the output of $m$-th iteration of the gradient descent method  and the ICPA algorithm;}

%%%%%%%%%%%%%%%%%%%%%%%%%%%%%%%%% Table (Nazar) %%%%%%%%%%%%%%%%%%%%%%%%%

		\begin{table}[ht]
			\centering
			\caption{	The average  over 1000 trials of the signal-to-noise ratio
${\rm SNR}(m), m=1, 2, 4, 6, \infty$  denoise
 the US hourly temperature  dataset collected at $218$ locations on  August 1st, 2010, where  $\eta=35, 20, 10$.
  } \label{denoising_UShourlytemperature.Table}
			\begin{tabular} {|c|c|c|c|c|c|}
		 \cline{1-6}
			  \backslashbox{Alg.}{SNR} {m}  & 1 & 2 & 4&6  &$\infty $    \\
		 \hline
		 \hline
%			 \multirow{4}{*}{$\eta =\frac{1}{8}, \alpha=0.0271,\beta=1.2385*10^{-5}$}
     \multicolumn{6}{c} {$\eta$=35, ISNR=     11.5496} %\vline
     \\
\hline\hline
			 IOPA1($\tilde{\alpha}$, 0)&  14.8906 &   16.2623  &  16.2499  &  16.2497 &   16.2497 \\
            IOPA1(0, $\tilde{\beta}$)&  13.3792  &  15.7143 &   15.6925  &  15.6911   & 15.6911\\
     		 IOPA1($\tilde{\alpha}$, $\tilde{\beta}$)& 11.2985  &  18.1294&    19.0536  &  19.0491&    19.0487\\
\hline
			 ICPA1($\tilde{\alpha}$, 0)&   14.2783 &   16.3118&    16.2509 &   16.2498&    16.2497\\
           ICPA1(0, $\tilde{\beta}$)&     14.0451   & 15.7475   & 15.6925  &  15.6911 &   15.6911\\
			 ICPA1($\tilde{\alpha}$, $\tilde{\beta}$)&   9.8634  &  16.9294&    19.0281   & 19.0486&    19.0487\\
 \cline{1-6}
			 GD0($\tilde{\alpha}$, 0)&  7.2407 &   13.2001 &   16.1692   & 16.2523  &  16.2497\\
            GD0(0, $\tilde{\beta}$)&   5.7453  &  10.8805  &  15.3374   & 15.7069 &    15.6911\\
			 GD0($\tilde{\alpha}$, $\tilde{\beta}$)&    3.9579    & 7.8606  &  14.4865 &   17.9663 &   19.0487\\
		\hline	
\hline
		    \multicolumn{6}{c} {$\eta$=20, ISNR=   16.4086 } \\
\hline\hline
			 IOPA1($\tilde{\alpha}$, 0)&   18.3271&    20.2473   & 20.2470 &   20.2470   & 20.2470\\
            IOPA1(0, $\tilde{\beta}$)&   15.4936  &  20.4129   & 20.5195&    20.5183   & 20.5183\\
			 IOPA1($\tilde{\alpha}$, $\tilde{\beta}$)&  12.3927  &  21.0773 &   22.8075  &  22.8097  &  22.8095\\
 \cline{1-6}
			 ICPA1($\tilde{\alpha}$, 0)&  17.5792  &  20.2654  &  20.2474  &  20.2470   & 20.2470\\
            ICPA1(0, $\tilde{\beta}$)& 16.73029  &  20.5223  &  20.5196  &  20.5183  &  20.5183\\
			 ICPA1($\tilde{\alpha}$, $\tilde{\beta}$)&   10.7460  &  19.4217  &  22.7759 &   22.8092  &  22.8095\\
 \cline{1-6}
			 GD0($\tilde{\alpha}$, 0)&    8.4637&   15.7834 &    20.1310&    20.2470  &  20.2470 \\
            GD0(0, $\tilde{\beta}$)&  5.9817 &   11.7217  &  19.1824 &    20.4607  &  20.5183\\
			 GD0($\tilde{\alpha}$, $\tilde{\beta}$)&     4.2594  &  8.4753  &  16.1761 &   21.0514&    22.8095\\
\hline\hline
 \multicolumn{6}{c} {$\eta$=10, ISNR=22.4320} %\vline
 \\  \hline\hline
			 IOPA1($\tilde{\alpha}$, 0)& 23.3572  &  24.5564 &   24.5565&    24.5565  &  24.5565\\
            IOPA1(0, $\tilde{\beta}$)&   16.9511  &  25.9123 &   26.4291 &   26.4284 &    26.4284 \\
		     IOPA1($\tilde{\alpha}$, $\tilde{\beta}$)&    14.2863 &   24.9125   & 26.9961  &  26.9990  &  26.9990\\
 \cline{1-6}
			 ICPA1($\tilde{\alpha}$, 0)&     22.5720  &  24.5572 &   24.5565&    24.5565  &  24.5565\\
            ICPA1(0, $\tilde{\beta}$) &     18.6319  &  26.2493  &  26.4294 &   26.4285 &   26.4284\\
			 ICPA1($\tilde{\alpha}$, $\tilde{\beta}$)& 12.7428  &  23.3488 &   26.9816   & 26.9989   & 26.9990\\
 \cline{1-6}
			 GD0($\tilde{\alpha}$, 0)&  11.7089 &   21.2276 &   24.5387 &   24.5566   & 24.5565\\
            GD0(0, $\tilde{\beta}$) &      6.2342  &  12.3916 &   22.7545 &   26.1414 &   26.4284\\
			 GD0($\tilde{\alpha}$, $\tilde{\beta}$)&  4.9806  &  9.9239  &  19.2003  &  25.2121 &   26.9990\\
	 \hline	\hline
%\multicolumn{6}{c} {$\eta$=5, ISNR=28.4515} %\vline
% \\  \hline\hline
%			 IOPA1($\tilde{\alpha}$, 0)& 29.0542  &  29.2064 &   29.2064 &    29.2064  &  29.2064\\
%            IOPA1(0, $\tilde{\beta}$)&   18.7842  &  31.1069 &   32.0763 &   32.0763 &    32.0763 \\
%		     IOPA1($\tilde{\alpha}$, $\tilde{\beta}$)&    17.3155 &   30.1455   & 31.9043  &  31.9053  &  31.9053\\
% \cline{1-6}
%			 ICPA1($\tilde{\alpha}$, 0)&     28.8442  &  29.2065 &   29.2064&    29.2064  &  29.2064\\
%            ICPA1(0, $\tilde{\beta}$) &     20.4974  &  31.6507  &  32.0766 &   32.0763 &   32.0763\\
%			 ICPA1($\tilde{\alpha}$, $\tilde{\beta}$)& 16.5458  &  29.5321 &   31.9029   & 31.9053   & 31.9053\\
% \cline{1-6}
%			 GD0($\tilde{\alpha}$, 0)&  18.4949 &   28.6438 &   29.2064 &   29.2064   & 29.2064\\
%            GD0(0, $\tilde{\beta}$) &      6.9379  &  13.8403 &   26.4300 &   31.6320 &   32.0763\\
%			 GD0($\tilde{\alpha}$, $\tilde{\beta}$)&  6.2628  &  12.4936  &  24.2288  &  30.8303 &   31.9053\\
%	 \hline		
	
		\end{tabular}
			\vspace{-1.5em}
	\end{table}

		\vspace{-1em}
\section{Conclusions and further works}

Polynomial graph filters of multiple shifts are preferable for denoising
and extracting  features  along different dimensions/directions
 for
multidimensional graph signals, such as video or
time-varying signals.
A necessary condition is derived in this paper for a graph filter to be a polynomial  of multiple  shifts
and the necessary condition is shown to be sufficient if the elements in the joint spectrum of multiple shifts are distinct.
%, see Theorem \ref{polynomialfilter.thm}.
The design methodology of  polynomial  filters of multiple graph shifts and their inverses
%with certain
%spectral characteristic. % in engineering applications.
with
specific features and physical interpretation for engineering applications will be discussed in our future works.

Some  Tikhonov regularization approaches on the temporal-vertex domain to denoise a time-varying signal
 can be reformulated as an inverse filtering procedure for a polynomial graph filter of two shifts which represent the features on the temporal  and vertex domain respectively.
%To implement an inverse filtering directly,
% a centralized implementation may suffer from high
%computational burden  as the inverse graph filter usually has
%full geodesic-width.
Two exponentially convergent  iterative  algorithms
are introduced for
the inverse filtering procedure of a polynomial graph filter, and  each iteration of the proposed algorithms can be implemented
%in a centralized facility with linear complexity and also
in  a distributed
network,  where each vertex is
equipped with systems for limited data storage, computation
power and data exchanging facility to its adjacent vertices.
The proposed iterative algorithms are demonstrated to implement the
inverse filtering procedure  effectively and to have satisfactory performance
on denoising multidimensional graph signals.
\appendices
\setcounter{equation}{0}
\setcounter{theorem}{0}
\setcounter{section}{0}
\setcounter{subsection}{0}
\renewcommand{\thesection}{Appendix \Alph{section}}
\renewcommand{\thesubsection}{A.\arabic{subsection}}
\renewcommand{\theequation}{\Alph{section}.\arabic{equation}}
\renewcommand{\thetheorem}{\Alph{section}.\arabic{theorem}}

\section{Commutative graph  shifts}\label{commutative.section}

 Graph shifts
 are building blocks of a polynomial filter and the concept of commutative graph shifts  ${\mathbf S}_1, \ldots, {\mathbf S}_d$
 is similar to  the  one-order delay $z_1^{-1}, \ldots, z_d^{-1}$ in classical multi-dimensional  signal processing.
 %\cc{can we put this sentence at beginning?}
   In Appendices \ref{circulantgraph.appendix} and  \ref{productgraph.appendix}, we introduce two illustrative  families of commutative graph shifts
 on circulant/Cayley graphs and product graphs respectively, see also Subsections \ref{denoising.subsection} and \ref{denoisingweather.subsection} for
 %two families of
 commutative graph shifts with specific features.
 %The concept of commutative graph shifts  ${\bf S}_1, \dots,  {\bf S}_d$
%   plays a similar role  in graph signal processing
%   as the  one-order delay $z_1^{-1}, \ldots, z_d^{-1}$ in classical multi-dimensional signal processing,
%  and
 For commutative  graph shifts  ${\mathbf S}_1, \ldots, {\mathbf S}_d$, we define
 their joint spectrum \eqref{jointspectrum.def} in Appendix \ref{jointspectrum.appendix}, which     %are commutative
 is crucial
   for us to  develop the IOPA and ICPA algorithms  % for inverse filtering
     in Section  \ref{ipaa.section}.

%Let  ${\bf S}_1,...,{\bf S}_d$ be commutative graph shifts.
%
%Commutativity of graph shifts ${\bf S}_1,...,{\bf S}_d$
% are building blocks to design
%	polynomial graph  filters  ${\bf H}$
%%	\vspace{-0.6em}\begin{equation}\label{MultiShiftPolynomial}
%%	{\bf H}=h({\bf S}_1, \ldots, {\bf S}_d)=\sum_{ l_1=0}^{L_1} \cdots \sum_{ l_d=0}^{L_d}  h_{l_1,\dots,l_d}{\bf S}_1^{l_1}\cdots {\bf S}_d^{l_d}
%%	\vspace{-0.6em}\end{equation}
%with certain spectral characteristic.  	
%In  this  appendix,  we give a sufficient condition on graph filters  being polynomial of multiple graph shifts.
	
	Commutativity of multiple graph shifts %${\bf S}_1,...,{\bf S}_d$
 are  essential to design
	polynomial graph  filters  % ${\bf H}$
%	\vspace{-0.6em}\begin{equation}\label{MultiShiftPolynomial}
%	{\bf H}=h({\bf S}_1, \ldots, {\bf S}_d)=\sum_{ l_1=0}^{L_1} \cdots \sum_{ l_d=0}^{L_d}  h_{l_1,\dots,l_d}{\bf S}_1^{l_1}\cdots {\bf S}_d^{l_d}
%	\vspace{-0.6em}\end{equation}
with certain spectral characteristic. If a graph filter ${\bf H}$ is a polynomial of  commutative multiple graph  shifts ${\bf S}_1,...,{\bf S}_d$, then
		it  commutes with  ${\bf S}_k, 1\le k\le d$, i.e., commutators $[{\bf H}, {\bf S}_k]:={\bf H}{\bf S}_k-{\bf S}_k {\bf H}$
 between ${\bf H}$ and ${\bf S}_k, 1\le k\le d$ are always the zero matrix,
		\vspace{-0.6em}\begin{equation}\label{polynomialfilter.thm.eq1} %\label{CommutativityHS}
		[{\bf H}, {\bf S}_k]={\bf 0},\ 1\leq k\leq d.
		\vspace{-0.6em}\end{equation}
The above necessary condition for a graph filter ${\bf H}$ to be   a polynomial of  ${\bf S}_1,...,{\bf S}_d$
 is not sufficient in general. For instance, one may verify that
 any filter ${\bf H}$ satisfies
\eqref{polynomialfilter.thm.eq1} with $d=1$ and ${\bf S}_1={\bf I}$, while  ${\bf H}$ is not necessarily a polynomial  $h({\bf I})= h(1){\bf I}$ of the identity matrix ${\bf I}$.
	For $d=1$, it is shown in \cite[Theorem 1]{aliaksei13}
		that any filter satisfying \eqref{polynomialfilter.thm.eq1}
		is a polynomial filter if the graph shift has distinct  eigenvalues.
In  Theorem \ref{polynomialfilter.thm} of  Appendix \ref{polynomialfilters.appendix},   we show that the  necessary condition  \eqref{polynomialfilter.thm.eq1} is also sufficient
 under the  additional assumption that elements in the joint spectrum of multiple graph shifts ${\mathbf S}_1, \ldots, {\mathbf S}_d$ are distinct. 	
% this  appendix,  we give a sufficient condition on graph filters  being polynomial of multiple graph shifts.

%In the following theorem,  we show that the  necessary condition  \eqref{polynomialfilter.thm.eq1} is also sufficient
% under the  additional assumption that
%		the joint eigenvalues $\pmb \lambda_i, 1\le i\le N$, in  the
%		joint spectrum  $\Lambda$ in \eqref{jointspectrum.def} are distinct.

Let ${\mathcal A}$ be a Banach algebra of graph filters with its norm denoted by $\|\cdot\|_{\mathcal A}$.  Our representative examples
 are the algebra of graph filters with Frobenius norm $\|\cdot\|_F$,   operator algebras ${\mathcal B}(\ell^p), 1\le p\le \infty$, on the space $\ell^p$ of all $p$-summable graph signals,
Gr\"{o}chenig-Schur algebras, Wiener algebra, Beurling algebras,  Jaffard algebras and  Baskakov-Gohberg-Sj\"{o}strand algebras, see   % the survey papers
 \cite{grochenig2010, krishtal2011, shinsun2013, shinsun2019,  shinsun2020} for historical remarks and various applications. %ecent advances.
Denote the set of polynomials of  commutative graph shifts ${\bf S}_1,...,{\bf S}_d$  by ${\mathcal P}:={\mathcal P}({\bf S}_1, \ldots, {\bf S}_d)$. Under the assumption that
${\mathbf S}_k\in {\mathcal A}, 1\le k\le d$, one may verify that all  polynomials of  ${\bf S}_1,...,{\bf S}_d$ reside in the Banach algebra ${\mathcal A}$ too, i.e.,  ${\mathcal P}\subset {\mathcal A}$. For any filter ${\mathbf H}\in {\mathcal A}$, define
its distance to the polynomial set ${\mathcal P}$ of graph shifts ${\bf S}_1,...,{\bf S}_d$  by
\begin{equation}\label{distance.def}
{\rm dist}({\bf H}, {\mathcal P})=\inf_{{\bf P}\in {\mathcal P}} \|{\bf H}-{\bf P}\|_{\mathcal A}.\end{equation}
Under the assumption that the  elements of joint spectrum of multiple graph shifts ${\mathbf S}_1, \ldots, {\mathbf S}_d$ are distinct, we obtain from Theorem \ref{polynomialfilter.thm} that
${\rm dist}({\bf H}, {\mathcal P})=0$ for any filter ${\bf H}\in {\mathcal A}$ satisfying \eqref{polynomialfilter.thm.eq1}.
 In Theorem \ref{distance.thm} of Appendix \ref{distance.appendix}, we establish some quantitative estimates
 to the distance  ${\rm dist}({\bf H}, {\mathcal P}), {\mathbf H}\in {\mathcal A}$, in terms of
 norms of  commutators  $[{\bf H}, {\bf S}_k], 1\le k\le d$, on an unweighted and undirected finite graph.

  % \ref{iterativeapproximation.section}
% to implement the inverse filtering procedure
%${\bf b}\longmapsto {\bf H}^{-1}{\bf b}$ associated with a polynomial filter ${\bf H}$ of
%${\bf S}_1,...,{\bf S}_d$.
%The crucial advantage is that  we can define  of commutative graph shifts.

%		In this Appendix, we   introduce  two illustrative  families of commutative graph shifts
%%	${\bf S}_1,\dots , {\bf S}_d $
%on circulant graphs and Cartesian product graphs. %on an undirected graph ${\mathcal G}$.
%		{\bf Connection to engineering literatures is quite weak.}
%	\cite{aliaksei14}

% another is the  graph signals  residing on  circulant graphs but  with multiple features. For example, the  time-varying graph signal, images, and Netflix movie ratings. Product graph has been used to represent the graph on which multi-dimensional graph signal resides, such as time-varying graph signals are residing on a graph being the product of a sensor network graph and a time period graph, see Fig \ref{Time_varying_graph},  and images are residing  on a graph  being the product of two line graphs \cite{aliaksei14}, see Fig \ref{ImageGraph}.
%	
	%\vspace{-1em}
	\subsection{Commutative graph shifts on circulant graphs and  Cayley graphs}\label{circulantgraph.appendix}

%	\subsection{Commutative shifts on circulant graphs and product graphs}

%\begin{figure}[t]   %[h]
%		\begin{center}
%			%\includegraphics[width=38mm, height=38mm]{CirculantGraph}
%			%\includegraphics[width=38mm, height=38mm]{CirculantGraph}\\
%			%\includegraphics[width=38mm, height=38mm]{CirculantGraph}
%			%\includegraphics[width=38mm, height=38mm]{CirculantGraph}
%			%\\
%			\includegraphics[width=55mm, height=48mm]{CirculantGraph}
%			%\hskip0.2cm
%			\caption{ The circulant graph with 50 nodes and generating set $S=\{1,2,5\}$, where edges in  red/green/blue  are also edges of the
% circulant graphs with 50 nodes and generating sets $\{1\}, \{2\}, \{5\}$ respectively.
%			}
%			\label{CirculantGraph}
%			\vspace{-2em}
%		\end{center}
%	\end{figure}

Let  ${\mathcal C}(N, Q)=(V_N, E_N(Q))$ be the circulant graph of order $N$ generated by
$Q=\{q_1, \ldots, q_M\}$, where $1\le q_1<\ldots<q_M< N/2$, see \eqref{circulant.edgedef}  %{cir.graph}
and Figure \ref{CirculantGraph}.
% is the  circulant graph  ${\mathcal C}(N, Q)$ with $N=50$ nodes and generating set $Q=\{1,2,5\}$.
 Observe that
$$E_N(Q)=\cup_{1\le k\le d}\big \{(i,i\pm q_k\ {\rm mod}\ N), i\in V_N\big\}.$$
%=\cup_{1\le k\le d} E_N(S_k),$$
%where $S_k=\{s_k\},  1\le k\le d$.
Then the circulant graph ${\mathcal C}(N, Q)$ can be  decomposed into a family of circulant graphs
	${\mathcal C}(N, Q_k)$  generated by $Q_k=\{q_k\},  1\le k\le d$,
	and
the symmetric normalized
 Laplacian matrix ${\bf L}_{{\mathcal C}(N,Q)}^{\rm sym}$ on ${\mathcal C}(N, Q)$ is the  average of symmetric normalized
 Laplacian matrices ${\bf L}_{{\mathcal C}(N,Q_k)}^{\rm sym}$ on ${\mathcal C}(N, Q_k), 1\le k\le d$, i.e.,
	\vspace{-.6em}\begin{equation*}
{\bf L}_{{\mathcal C}(N, Q)}^{\rm sym} = \frac{1}{d}
 \sum_{k=1}^d {\bf L}_{{\mathcal C}(N,Q_k)}^{\rm sym},
	\vspace{-.4em}\end{equation*}
where $Q_k=\{q_k\}, 1\le k\le d$.
%For the case that $Q$ contains only one element, we call the  circulant graph  ${\mathcal C}(N, Q)$ by a cycle graph.
In the following proposition, we establish the  commutativity of ${\bf L}_{{\mathcal C}(N,Q_k)}^{\rm sym}, 1\le k\le d$.

% normalized Laplacian matrices
%${\bf L}_{k}^{\rm sym}, %$ on the cycle graphs ${\mathcal C}(S_k),
% 1\le k\le d$,
%are commutative graph shifts on the circulant graph ${\mathcal C}(S)$.
% and	they are commutative.

	\begin{proposition}\label{circulantgraphshift.pr}
		{\rm The symmetric normalized Laplacian matrices ${\bf L}_{{\mathcal C}(N,Q_k)}^{\rm sym}$
		of the circulant graphs  ${\mathcal C}(N, Q_k), 1\le k\le d$,
		are commutative  graph shifts on the circulant graph ${\mathcal C}(N, Q)$.
%		\vspace{-.4em}\begin{equation*}\label{cyclecommutative.eq}
%{\bf L}_{{\mathcal C}(N,Q_{k'})}^{\rm sym} {\bf L}_{{\mathcal C}(N,Q_k)}^{\rm sym}
%= {\bf L}_{{\mathcal C}(N,Q_k)}^{\rm sym} {\bf L}_{{\mathcal C}(N,Q_{k'})}^{\rm sym}
%%{\bf L}^{\rm sym}_k{\bf L}^{\rm sym}_{k'}={\bf L}^{\rm sym}_{k'}{\bf L}^{\rm sym}_k,
%\ 1\leq k,k'\leq d.\vspace{-.4em}\end{equation*}
}	\end{proposition}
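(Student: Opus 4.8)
The plan is to exploit the classical fact that all circulant matrices of a fixed order $N$ are simultaneously diagonalized by the discrete Fourier matrix, and hence pairwise commute. First I would observe that because each generating set $Q_k=\{q_k\}$ is a singleton with $1\le q_k< N/2$, every vertex of ${\mathcal C}(N, Q_k)$ has exactly two neighbours, namely $i\pm q_k \bmod N$ (distinct since $2q_k<N$). Thus the degree matrix of ${\mathcal C}(N, Q_k)$ is the constant $2{\bf I}$, and the symmetric normalized Laplacian reduces to
\begin{equation*}
{\bf L}_{{\mathcal C}(N, Q_k)}^{\rm sym}={\bf I}-\tfrac12\, {\bf A}_{{\mathcal C}(N, Q_k)},
\end{equation*}
where ${\bf A}_{{\mathcal C}(N, Q_k)}$ is the adjacency matrix of ${\mathcal C}(N, Q_k)$.

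Next I would express ${\bf A}_{{\mathcal C}(N, Q_k)}$ in terms of the basic cyclic shift matrix ${\bf P}=(P(i,j))_{i,j\in V_N}$ defined by $P(i,j)=1$ if $j\equiv i+1 \pmod N$ and $P(i,j)=0$ otherwise. Since the neighbours of $i$ in ${\mathcal C}(N, Q_k)$ are $i\pm q_k \bmod N$, and ${\bf P}^N={\bf I}$, one has ${\bf A}_{{\mathcal C}(N, Q_k)}={\bf P}^{q_k}+{\bf P}^{N-q_k}$, so each ${\bf L}_{{\mathcal C}(N, Q_k)}^{\rm sym}$ is a polynomial in the single matrix ${\bf P}$. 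As powers of ${\bf P}$ commute with one another, it follows at once that
\begin{equation*}
{\bf L}_{{\mathcal C}(N, Q_k)}^{\rm sym}{\bf L}_{{\mathcal C}(N, Q_{k'})}^{\rm sym}={\bf L}_{{\mathcal C}(N, Q_{k'})}^{\rm sym}{\bf L}_{{\mathcal C}(N, Q_k)}^{\rm sym},\quad 1\le k,k'\le d,
\end{equation*}
which is the commutativity requirement \eqref{commutativityS}. (Equivalently, one may diagonalize them all simultaneously by the Fourier matrix ${\bf F}$ with entries $F(i,j)=N^{-1/2}e^{2\pi\sqrt{-1}\,ij/N}$, since ${\bf F}^{\rm H}{\bf P}{\bf F}$ is diagonal.)

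Finally I would verify that each ${\bf L}_{{\mathcal C}(N, Q_k)}^{\rm sym}$ is a graph shift on the ambient circulant graph ${\mathcal C}(N, Q)$, i.e.\ that it has geodesic-width one there. Its only nonzero entries sit on the diagonal and at the positions $(i,i\pm q_k\bmod N)$, and these off-diagonal positions correspond to edges of ${\mathcal C}(N, Q_k)$; since $q_k\in Q$, such edges also belong to $E_N(Q)$, so the relevant vertices are adjacent in ${\mathcal C}(N, Q)$ and the geodesic-width condition is met. I do not expect any serious obstacle here: the only point requiring a little care is the reduction of the normalized Laplacian to ${\bf I}-\frac12{\bf A}$, which rests on the constant degree $2$ of single-generator circulant graphs guaranteed by $q_k<N/2$. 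Once this regularity is in hand, the commutativity is simply the standard statement that circulant matrices form a commutative algebra generated by ${\bf P}$.
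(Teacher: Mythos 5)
Your proposal is correct and follows essentially the same route as the paper: the paper also writes ${\bf L}_{{\mathcal C}(N,Q_k)}^{\rm sym}={\bf I}-\tfrac12({\bf B}^{q_k}+{\bf B}^{-q_k})$ for the cyclic shift matrix ${\bf B}$ (your ${\bf P}$) and derives commutativity from the fact that powers of ${\bf B}$ commute. Your explicit check of the $2$-regularity and of the geodesic-width-one condition on ${\mathcal C}(N,Q)$ only spells out details the paper treats as immediate.
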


	\begin{proof}
Clearly ${\bf L}_{{\mathcal C}(N,Q_k)}^{\rm sym}, 1\le k\le d$, are graph shifts
on the circulant graph ${\mathcal C}(N, Q)$.
Define
$${\bf B}=(b({i-j\ {\rm mod}\  N}))_{1\le i,j\le N},$$
 where
$b(0)=\cdots=b({N-2})=0$ and  $b({N-1})=1$.  Then one may verify that
		\vspace{-.4em}\begin{equation*} {\bf L}_{{\mathcal C}(N,Q_k)}^{\rm sym}= {\bf I}-\frac{1}{2}({\bf B}^{q_k}+{\bf B}^{-q_k})=
-\frac{1}{2}{\bf B}^{-q_k} ({\bf B}^{q_k}-{\bf I})^2,
 \vspace{-.4em}\end{equation*}
		where $1\le k\le d$.
%$${\bf A}=\begin{pmatrix} 0 & 1 & \cdots & 0 & 0\\
%0 & 0 & 1 &  & 0\\
%\vdots & 0 & 0 & \ddots& 0\\
%0 &  & \ddots & \ddots & 1\\
%1 & 0 & \cdots & 0 & 0 \end{pmatrix}$$,
Therefore for $1\le k, k'\le d$,
%\begin{eqnarray*}
% {\bf L}_{{\mathcal C}(N,Q_{k'})}^{\rm sym} {\bf L}_{{\mathcal C}(N,Q_k)}^{\rm sym}
%& \hskip-0.08in = & \hskip-.08in \frac{1}{4} {\bf B}^{-q_k-q_{k'}} ({\bf B}^{q_k}-{\bf I})^2  ({\bf B}^{q_{k'}}-{\bf I})^2
% \\
%& \hskip-0.08in = & \hskip-.08in  {\bf L}_{{\mathcal C}(N,Q_k)}^{\rm sym} {\bf L}_{{\mathcal C}(N,Q_{k'})}^{\rm sym}.
%\end{eqnarray*}
\begin{equation*}
 {\bf L}_{{\mathcal C}(N,Q_{k'})}^{\rm sym} {\bf L}_{{\mathcal C}(N,Q_k)}^{\rm sym}
=  \frac{1}{4} {\bf B}^{-q_k-q_{k'}} ({\bf B}^{q_k}-{\bf I})^2  ({\bf B}^{q_{k'}}-{\bf I})^2
=  {\bf L}_{{\mathcal C}(N,Q_k)}^{\rm sym} {\bf L}_{{\mathcal C}(N,Q_{k'})}^{\rm sym}.
\end{equation*}
% the adjacent matrix associated with  the cycle graph ${\mathcal C}(N, \{1\})$.
%which is .
		This completes the proof. %\eqref{cyclecommutative.eq} follows.
	\end{proof}

	Following the proof of  Proposition \ref{circulantgraphshift.pr}, we have that
the adjacent matrices  $2({\bf I}-  {\bf L}_{{\mathcal C}(N,Q_{k})}^{\rm sym})={\bf B}^{q_k}+{\bf B}^{-q_k} $
		of the circulant graphs  ${\mathcal C}(N, Q_k), 1\le k\le d$, (and their linear combinations)
		are commutative  graph shifts on the circulant graph ${\mathcal C}(N, Q)$.

\bigskip
Connected circulant graphs are regular undirected Cayley graphs of finite cyclic groups.
In general, for an Abelian group  $G$ generated by a finite set $S$ of non-identity elements
 and  a color assignment   $c_s$ to
 each element $s\in S$,  the Cayley graph
 ${\mathcal G}$ is defined to have elements in the
 group $G$ as its vertices
and  directed  edges of color $c_s$ between  vertices $g$ to $gs$ in $G$.
For the case that  the generator $S$ is symmetric (i.e.,  $S^{-1}= S$) and the same color is assigned for
any element in the generator $S$ and its inverse (i.e., $c_s=c_{s^{-1}}, s\in S$), one may verify that  the Cayley graph ${\mathcal G}$ is a regular undirected graph,
it can be decomposed into a family of  regular subgraphs  ${\mathcal G}_s, s\in S_1$ with the same colored edges,
$$ {\mathcal G}=\cup_{s\in S_1} {\mathcal G}_s,$$
and the adjacent matrix of the Cayley graph is the
summation of  the adjacent matrices associated with the regular subgraphs ${\mathcal G}_s, s\in S_1$,
where the subset $S_1\subset S$ is chosen so that different colors are assigned for distinct elements in $S_1$ and
 all colors $c_s, s\in S$ are represented in $S_1$.
Furthermore, similar to the commutativity for symmetric normalized Laplacian matrices  in Proposition \ref{circulantgraphshift.pr}, we can show that
the adjacent matrices, Laplacian matrices, symmetric normalized Laplacian matrices
 associated with the regular subgraphs ${\mathcal G}_s, s\in S_1$,
 are commutative graph shifts of the Cayley graph ${\mathcal G}$. 

	\vspace{-1em}
	\subsection{Commutative graph shifts on Cartesian product graphs} %and time-varying graph signals}
\label{productgraph.appendix}

	Let ${\mathcal G}_1=(V_1, E_1)$ and ${\mathcal G}_2=(V_2, E_2)$ be two finite  graphs with adjacency matrices ${\bf A}_1$ and ${\bf A}_2$.
	Their Cartesian product graph  ${\mathcal G}_1\times {\mathcal G}_2$ %=(V_1\times V_2, E_1 \otimes E_2)$
  has  vertex set $V_1\times V_2$ and  adjacency matrix  %${\bf C}$
	given by
	%\begin{equation}\label{cartesian.prod}
	${\bf A}={\bf A}_1\otimes {\bf I}_{\# V_2}+ {\bf I}_{\# V_1}\otimes {\bf A} _2$
	%\end{equation}
\cite{ Grassi18, Loukas16}. Shown in  Figure \ref{Time_varying_graph} is an illustrative example of product graphs
and
 the time-varying graph signal ${\bf X}$ can be considered as a  signal on the  Cartesian product graph  ${\mathcal T} \times {\mathcal G}$, see Subsection \ref{denoising.subsection}.
 	\begin{figure}[t]  %[h]
		\begin{center}
			\includegraphics[width=88mm, height=42mm]{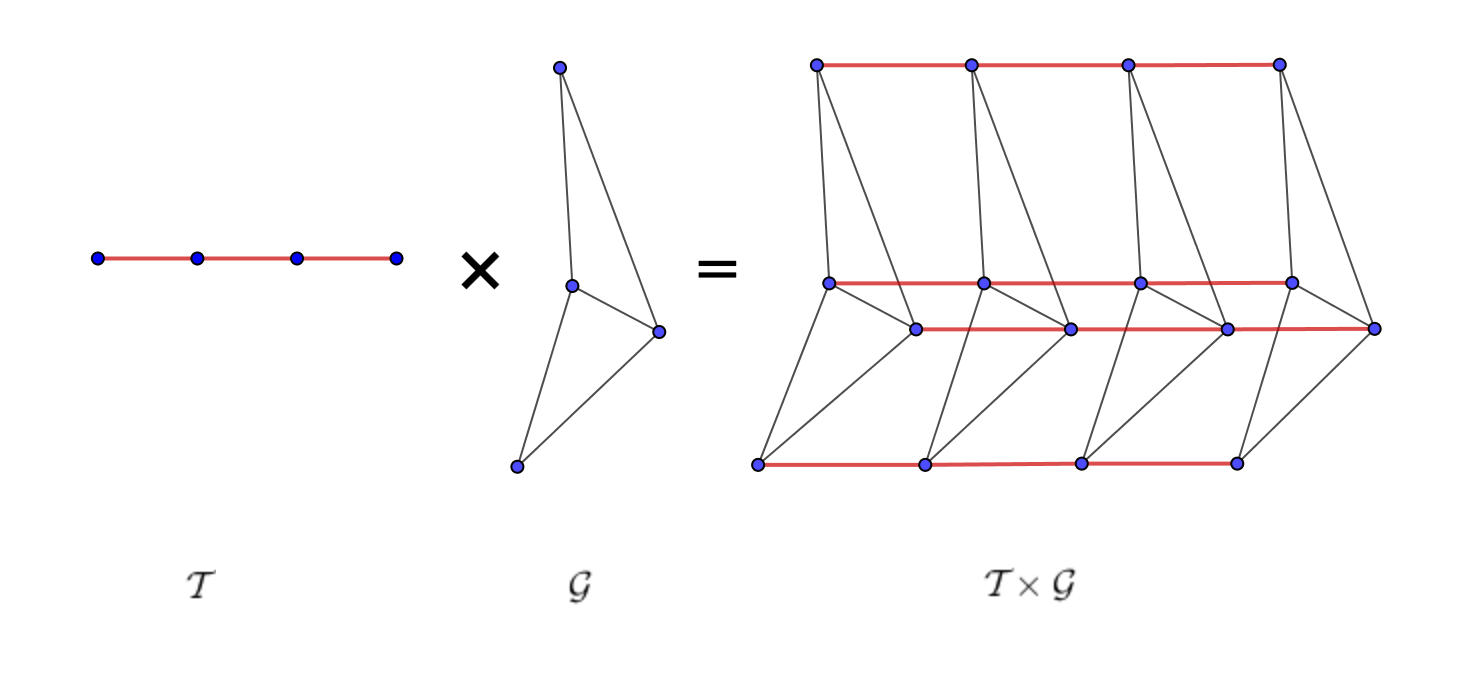}
			\caption{ Cartesian product  ${\mathcal T}\times {\mathcal G}$ of a line graph  $\mathcal{T}$  %with 4 vertices
and an undirected graph $\mathcal{G}$.} % with 4 vertices.}
			\label{Time_varying_graph}
		\end{center}
		\vspace{-1em}
	\end{figure}

Denote  symmetric normalized Laplacian matrices    and
 orders of the  graph ${\mathcal G}_i, i=1, 2$  by   ${\bf L}_i^{\rm sym}$ and $N_i$
 respectively.
	By the mixed-product property
	\vspace{-.4em}\begin{equation}\label{kroneckerproductrule}
	({\bf A}\otimes {\bf B}) ({\bf C}\otimes {\bf D})=({\bf AC})\otimes ({\bf BD})\vspace{-.4em}\end{equation}
	for Kronecker product of matrices ${\bf A}, {\bf B}, {\bf C}, {\bf D}$ of appropriate sizes \cite{Laubbook}, one may verify that
 ${\bf L}_1^{\rm sym}\otimes {\bf I}_{N_2}$ and  $ {\bf I}_{N_1}\otimes {\bf L}_2^{\rm sym}$
are graph filters   of the Cartesian product graph  ${\mathcal G}_1\times {\mathcal G}_2$.
  In the following proposition, we show that they are commutative.

%	{\color{red} references here \cite{Loukas16, Grassi18}.}
%	From the above definition of the Cartesian product graph ${\mathcal G}_1\times {\mathcal G}_2$, we see that
%	for $v_1, \tilde v_1\in V_1$ and  $v_2, \tilde v_2\in V_2$,
%	there is an edge between $(v_1, v_2)$ and $(\tilde v_1, \tilde v_2)$ in the Cartesian product graph ${\mathcal G}_1\times {\mathcal G}_2$ if and only if either $(v_1, \tilde v_1)\in E_1$ and $v_2=\tilde v_2$, or
%	$\tilde v_1=v_1$ and $(v_2, \tilde v_2)\in E_2$.

	\begin{proposition}\label{cartesianproduct.pr}
{\rm  Let ${\mathcal G}_1=(V_1, E_1)$ and ${\mathcal G}_2=(V_2, E_2)$ be two finite  graphs with
		normalized Laplacian matrices ${\bf L}_1^{\rm sym}$ and ${\bf L}_2^{\rm sym}$ respectively. Then
		${\bf L}_1^{\rm sym}\otimes {\bf I}_{\#V_2}$ and  $ {\bf I}_{\#V_1}\otimes {\bf L}_2^{\rm sym}$ are commutative graph shifts of the Cartesian product graph  ${\mathcal G}_1\times {\mathcal G}_2$.
}	\end{proposition}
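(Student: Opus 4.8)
The plan is to verify the two assertions hidden in the statement separately: first, that $ {\bf L}_1^{\rm sym}\otimes {\bf I}_{\#V_2}$ and ${\bf I}_{\#V_1}\otimes {\bf L}_2^{\rm sym}$ are genuine \emph{graph shifts} (geodesic-width one) on the Cartesian product graph ${\mathcal G}_1\times {\mathcal G}_2$, and second, that they commute. The commutativity is the real content of the proposition, and I expect it to drop out almost immediately from the mixed-product property \eqref{kroneckerproductrule}; the graph-shift check is essentially a sparsity-pattern bookkeeping matter.

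For the graph-shift property, I would argue from the support of the matrices. By definition the symmetric normalized Laplacian ${\bf L}_i^{\rm sym}$ has a nonzero $(u,v)$-entry only when $u=v$ or $u$ and $v$ are adjacent in ${\mathcal G}_i$. Hence the Kronecker product ${\bf L}_1^{\rm sym}\otimes {\bf I}_{\#V_2}$ carries a nonzero entry at a position indexed by $\big((u_1,u_2),(v_1,v_2)\big)$ only when $u_2=v_2$ and either $u_1=v_1$ or $u_1$ is adjacent to $v_1$ in ${\mathcal G}_1$. By the definition of the Cartesian product these are precisely diagonal positions or pairs joined by an edge of ${\mathcal G}_1\times {\mathcal G}_2$, so $\rho\big((u_1,u_2),(v_1,v_2)\big)\le 1$ wherever the entry is nonzero, which is exactly the condition that the geodesic-width be one. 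The identical argument, with the roles of the two factors interchanged, handles ${\bf I}_{\#V_1}\otimes {\bf L}_2^{\rm sym}$.

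For the commutativity I would apply \eqref{kroneckerproductrule} twice, using ${\bf L}_i^{\rm sym}{\bf I}={\bf I}{\bf L}_i^{\rm sym}={\bf L}_i^{\rm sym}$ to collapse both orderings to the same Kronecker product:
\begin{equation*}
({\bf L}_1^{\rm sym}\otimes {\bf I}_{\#V_2})({\bf I}_{\#V_1}\otimes {\bf L}_2^{\rm sym})
= {\bf L}_1^{\rm sym}\otimes {\bf L}_2^{\rm sym}
= ({\bf I}_{\#V_1}\otimes {\bf L}_2^{\rm sym})({\bf L}_1^{\rm sym}\otimes {\bf I}_{\#V_2}).
\end{equation*}
Since both products equal ${\bf L}_1^{\rm sym}\otimes {\bf L}_2^{\rm sym}$, the two shifts commute, which completes the argument.

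Honestly, I do not anticipate a genuine obstacle here: the mixed-product identity does all the algebraic work, and the only point demanding a little care is the graph-shift verification, where one must match the support of a Kronecker product against the precise edge set of the Cartesian product rather than conflate it with the full product or tensor structure. I would state \eqref{kroneckerproductrule} explicitly as the tool, note that it immediately gives commutativity of \emph{any} pair $ {\bf M}_1\otimes {\bf I}$ and ${\bf I}\otimes {\bf M}_2$, and then specialize to the normalized Laplacians, so that the proof reads as a one-line computation supported by the short sparsity remark.
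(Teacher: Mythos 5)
Your proof is correct and takes essentially the same route as the paper's: the commutativity is obtained by applying the mixed-product property \eqref{kroneckerproductrule} to collapse both orderings to ${\bf L}_1^{\rm sym}\otimes {\bf L}_2^{\rm sym}$. The only difference is that you also spell out the geodesic-width-one (graph shift) verification via the support of the Kronecker products, a point the paper dispatches with a ``one may verify'' remark just before the proposition.
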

	
	\begin{proof}  Let $N_i=\# V_i, i=1, 2$ and set ${\bf C}_1={\bf L}_1^{\rm sym}\otimes {\bf I}_{N_2}$ and ${\bf C}_2={\bf I}_{N_1}\otimes {\bf L}_2^{\rm sym}$.
By the mixed-product property \eqref{kroneckerproductrule} of Kronecker product,
Then
	$${\bf C_1}{\bf C}_2 = {\bf L}_1^{\rm sym}\otimes  {\bf L}_2^{\rm sym}=
{\bf C}_2  {\bf C}_1.$$
%	
%	$$({\bf C}_2\otimes {\bf I}_{N})( {\bf I}_{M}\otimes  {\bf C}_1)={\bf C}_2\otimes  {\bf C}_1=( {\bf I}_{M}\otimes  {\bf C}_1)({\bf C}_2\otimes {\bf I}_{N}).$$
	\end{proof}

	\subsection{Joint spectrum  of commutative shifts}\label{jointspectrum.appendix}

		%\section{fff}

%In Theorem \ref{polynomialfilter.thm} of , we extend the above result to
%polynomial filters of multiple graph shifts.
 % satisfying \eqref{commutativityS}.
		%Then
		% they can be upper-triangularized  simultaneously by \cite[Theorem 2.3.3.]{horn1990matrix},
		% i.e.,
		%\begin{equation}
		%\label{upperdiagonalization} \hat{\bf S}_k={\bf U}^*{\bf S}_k{\bf U},\  1\le k\le d,
		%\end{equation}
		%are upper triangular matrices for some unitary matrix ${\bf U}$.
		%Write $\hat{\bf S}_k=(\hat S_{k}(i,j))_{1\le i, j\le N}, 1\le k\le d$, and set
		% \begin{equation}\label{jointspectrum.def}
		% \Lambda=\big\{\lambda_i=\big(\hat{S}_1(i,i), ..., \hat{ S}_d(i,i)\big), 1\le i\le N\big\}.\end{equation}
		%  where $ \hat{ S}_k(i,i), 1\le i\in N$,  are the diagonal entries of the upper triangular matrices $\hat {\bf S}_k, 1\le k\le d$.
%		If a graph filter ${\bf H}$ is a polynomial of  ${\bf S}_1,...,{\bf S}_d$, then
%		it  commutates with  ${\bf S}_k, 1\le k\le d$.
%		In  this appendix, we show that the  necessary condition  \eqref{polynomialfilter.thm.eq1} is also sufficient
% under the  additional assumption that
%		the joint eigenvalues $\pmb \lambda_i, 1\le i\le N$, in  the
%		joint spectrum  $\Lambda$ are distinct.
%		For $d=1$, it is shown in
%\cite[Theorem 1]{aliaksei13}
%		that a filter satisfying
%		is a polynomial filter if the graph filter has distinct  eigenvalues.
		
% The assumption that the graph shifts  ${\mathbf S}_1, \ldots, {\mathbf S}_d$ are commutative,  is indispensable
%   for us to  develop the IOPA and ICPA algorithms  for inverse filtering.

Let   ${\bf S}_1,...,{\bf S}_d$  be commutative graph shifts.
An important property in \cite[Theorem 2.3.3]{horn1990matrix} is that  they can be upper-triangularized  simultaneously over ${\mathbb C}$,
	i.e.,
\begin{equation}
	\label{upperdiagonalization}
\widehat{\bf S}_k={\bf U}{\bf S}_k{\bf U}^{\rm H},\  1\le k\le d,
	\end{equation}
	are upper triangular matrices for some unitary matrix ${\bf U}$.
%, where ${\bf A}^{\rm H}$ is the conjugate transpose of a matrix ${\bf A}$.
	Write
	$\widehat{\bf S}_k=(\widehat S_{k}(i,j))_{1\le i, j\le N}, 1\le k\le d$, and set
	\vspace{-.2em}\begin{equation}\label{jointspectrum.def} \Lambda=\big\{\pmb \lambda_i=\big(\widehat{S}_1(i,i), ..., \widehat{ S}_d(i,i)\big), 1\le i\le N\big\}  \subset {\mathbb C}^d.\vspace{-.2em}\end{equation}
	As $\widehat{S}_k(i, i), 1\le i\le N$, are complex eigenvalues of ${\bf S}_k,\ 1\le k\le d$, we call $\Lambda$ as the {\em joint spectrum} of  ${\bf S}_1, \ldots, {\bf S}_d$.
   The  joint spectrum  $\Lambda$ of commutative shifts
${\bf S}_1, \ldots, {\bf S}_d$
 plays an essential role  in Section \ref{ipaa.section}
  %\ref{iterativeapproximation.section}
in the construction of
%  to
%construct
optimal polynomial approximation filters
  and Chebyshev polynomial approximation filters
 to the inverse filter
of a polynomial filter  of
${\bf S}_1,...,{\bf S}_d$.

We remark that
a sufficient condition for the  graph shifts ${\bf S}_1, \ldots, {\bf S}_d$ to be commutative
is that they can be diagonalized simultaneously, i.e.,
there exists a nonsingular matrix ${\bf P}$ such that ${\bf P}^{-1} {\bf S}_k{\bf P}, 1\le k\le d$, are diagonal matrices,
which a necessary condition is that they can be upper-triangularized  simultaneously, see \eqref{upperdiagonalization}.

\subsection{Polynomial  graph filters of commutative filters}\label{polynomialfilters.appendix}
		
Let  ${\bf S}_1,...,{\bf S}_d$ be commutative graph shifts.
%
%Commutativity of graph shifts ${\bf S}_1,...,{\bf S}_d$
% are building blocks to design
%	polynomial graph  filters  ${\bf H}$
%%	\vspace{-0.6em}\begin{equation}\label{MultiShiftPolynomial}
%%	{\bf H}=h({\bf S}_1, \ldots, {\bf S}_d)=\sum_{ l_1=0}^{L_1} \cdots \sum_{ l_d=0}^{L_d}  h_{l_1,\dots,l_d}{\bf S}_1^{l_1}\cdots {\bf S}_d^{l_d}
%%	\vspace{-0.6em}\end{equation}
%with certain spectral characteristic.  	
%In  this  appendix,  we give a sufficient condition on graph filters  being polynomial of multiple graph shifts.
%		If a graph filter ${\bf H}$ is a polynomial of   ${\bf S}_1,...,{\bf S}_d$, then
%		it  commutates with  ${\bf S}_k, 1\le k\le d$, i.e.,
%		\vspace{-0.6em}\begin{equation}\label{polynomialfilter.thm.eq1} %\label{CommutativityHS}
%		{\bf H}{\bf S}_k={\bf S}_k {\bf H} ,\ 1\leq k\leq d.
%		\vspace{-0.6em}\end{equation}
%		For $d=1$, it is shown in \cite[Theorem 1]{aliaksei13}
%		that any filter satisfying \eqref{polynomialfilter.thm.eq1}
%		is a polynomial filter if the graph shift has distinct  eigenvalues.
In the following theorem,  we show that the  necessary condition  \eqref{polynomialfilter.thm.eq1}
for a filter to be a polynomial of multiple graph shifts is also sufficient
 under the  additional assumption that
		the joint eigenvalues $\pmb \lambda_i, 1\le i\le N$, in  the
		joint spectrum  $\Lambda$ in \eqref{jointspectrum.def} are distinct.

		\begin{theorem}\label{polynomialfilter.thm}
  {\rm
			Let ${\bf S}_1,\ldots, {\bf S}_d$ be commutative graph filters,
%matrices $\hat{\bf S}_1, \ldots, \hat {\bf S}_d$ be as in
%			\eqref{upperdiagonalization},
			and  the joint spectrum ${\Lambda}$  be as in
			\eqref{jointspectrum.def}.
			If all elements ${\pmb \lambda}_i, 1\le i\le N$, in the set ${ \Lambda}$ are distinct,
			then any graph filter ${\bf H}$ satisfying
			\eqref{polynomialfilter.thm.eq1}
			is a polynomial of  ${\bf S}_1,...,{\bf S}_d$, i.e.,
			%\begin{equation} \label{polynomialfilter.thm.eq2}  %\label{CommutativityHS.eq2}
			${\bf H}=h({\bf S}_1,...,{\bf S}_d)$
%			\end{equation}
			for some polynomial $h$.
			%then $\hat{\bf H}= {\bf U}^*{\bf H}{\bf U}$ is an upper triangular.
		} \end{theorem}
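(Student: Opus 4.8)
The plan is to reduce the multivariate statement to the known single-shift case \cite[Theorem 1]{aliaksei13} by replacing the family $\mathbf{S}_1,\ldots,\mathbf{S}_d$ with a single cleverly chosen linear combination. First I would invoke the simultaneous upper-triangularization \eqref{upperdiagonalization}: there is a unitary $\mathbf{U}$ with $\widehat{\mathbf{S}}_k=\mathbf{U}\mathbf{S}_k\mathbf{U}^{\rm H}$ upper triangular, and the diagonals assemble into the joint spectrum $\Lambda=\{\pmb\lambda_i\}_{i=1}^N$ of \eqref{jointspectrum.def}. For a coefficient vector $\mathbf{c}=(c_1,\ldots,c_d)$ set $\mathbf{S}_{\mathbf c}=\sum_{k=1}^d c_k\mathbf{S}_k$; then $\mathbf{U}\mathbf{S}_{\mathbf c}\mathbf{U}^{\rm H}$ is upper triangular with diagonal entries $\langle \mathbf{c},\pmb\lambda_i\rangle$, so the eigenvalues of $\mathbf{S}_{\mathbf c}$ are exactly $\langle\mathbf{c},\pmb\lambda_i\rangle$, $1\le i\le N$.

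The hypothesis enters here. Since the $\pmb\lambda_i$ are distinct, for each pair $i\ne j$ the set $\{\mathbf{c}:\langle\mathbf{c},\pmb\lambda_i-\pmb\lambda_j\rangle=0\}$ is a proper hyperplane in $\mathbb{C}^d$, and any $\mathbf{c}$ avoiding the finite union of these hyperplanes makes the numbers $\langle\mathbf{c},\pmb\lambda_i\rangle$ pairwise distinct. Fixing such a $\mathbf{c}$, the matrix $\mathbf{S}_{\mathbf c}$ then has $N$ distinct eigenvalues and is therefore nonderogatory.

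Next I would use the classical description of the commutant of a matrix with simple spectrum. Because $\mathbf{H}$ commutes with every $\mathbf{S}_k$ by \eqref{polynomialfilter.thm.eq1}, it commutes with $\mathbf{S}_{\mathbf c}$. Diagonalizing $\mathbf{S}_{\mathbf c}=\mathbf{V}\mathbf{D}\mathbf{V}^{-1}$ with $\mathbf{D}$ carrying the distinct diagonal entries $\mu_i=\langle\mathbf{c},\pmb\lambda_i\rangle$, the commutation relation forces $\mathbf{V}^{-1}\mathbf{H}\mathbf{V}$ to be diagonal, since its off-diagonal $(i,j)$ entry is multiplied by $\mu_i-\mu_j\ne 0$. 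Lagrange-interpolating a univariate polynomial $q$ with $q(\mu_i)$ equal to the $i$-th diagonal entry of $\mathbf{V}^{-1}\mathbf{H}\mathbf{V}$ gives $\mathbf{V}^{-1}\mathbf{H}\mathbf{V}=q(\mathbf{D})$, hence $\mathbf{H}=q(\mathbf{S}_{\mathbf c})$. Substituting $\mathbf{S}_{\mathbf c}=\sum_k c_k\mathbf{S}_k$ exhibits $\mathbf{H}=q\big(\sum_k c_k\mathbf{S}_k\big)$ as a polynomial $h(\mathbf{S}_1,\ldots,\mathbf{S}_d)$, which completes the proof.

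The step I expect to be the crux is the choice of $\mathbf{c}$ making $\mathbf{S}_{\mathbf c}$ have a simple spectrum; this is precisely where distinctness of the joint spectrum is indispensable, and it is what lets me collapse the several commuting shifts into one nonderogatory matrix whose commutant is exactly the polynomial algebra. The remaining ingredients — the simultaneous triangularization \eqref{upperdiagonalization} taken from \cite{horn1990matrix}, and the fact that the commutant of a cyclic matrix consists of polynomials in it — are standard and short.
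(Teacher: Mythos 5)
Your proof is correct, but it follows a genuinely different route from the one in the paper. The paper works entirely in the simultaneously upper-triangularized frame of \eqref{upperdiagonalization}: it builds a multivariate polynomial $h$ interpolating the diagonal of $\mathbf{U}\mathbf{H}\mathbf{U}^{\rm H}$ at the joint spectrum points, sets $\mathbf{F}=\mathbf{U}(\mathbf{H}-h(\mathbf{S}_1,\ldots,\mathbf{S}_d))\mathbf{U}^{\rm H}$, and kills $\mathbf{F}$ entry by entry through a double induction, using at each off-diagonal position $(i,j)$ the existence of an index $k(i,j)$ with $\widehat S_{k(i,j)}(i,i)\ne \widehat S_{k(i,j)}(j,j)$ — which is exactly what distinctness of $\Lambda$ supplies. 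You instead collapse the family into one shift $\mathbf{S}_{\mathbf c}=\sum_k c_k\mathbf{S}_k$ whose eigenvalues $\langle\mathbf c,\pmb\lambda_i\rangle$ are made pairwise distinct by choosing $\mathbf c$ off the finite union of proper subspaces $\{\mathbf c:\langle\mathbf c,\pmb\lambda_i-\pmb\lambda_j\rangle=0\}$, and then invoke the single-shift result (the commutant of a matrix with simple spectrum is its polynomial algebra). Each step is sound: $\mathbf{S}_{\mathbf c}$ is automatically diagonalizable once its $N$ eigenvalues are distinct, $\mathbf{H}$ commutes with it by \eqref{polynomialfilter.thm.eq1}, and the commutativity of the shifts is what lets you expand $q\bigl(\sum_k c_k\mathbf{S}_k\bigr)$ into a well-defined multivariate polynomial $h(\mathbf{S}_1,\ldots,\mathbf{S}_d)$. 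What your approach buys is brevity and reuse of the known $d=1$ theorem, at the price of a non-constructive (generic) choice of $\mathbf c$ and a polynomial $h$ constrained to the special form $q(c_1t_1+\cdots+c_dt_d)$; what the paper's induction buys is an explicit interpolation-based construction that never leaves the triangular frame and never needs any shift (or combination of shifts) to be diagonalizable.
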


		\begin{proof} %  [Proof of Theorem \ref{polynomialfilter.thm}]
			Let ${\bf U}$ be the unitary matrix  in \eqref{upperdiagonalization},
 $\widehat{\bf S}_1, \ldots, \widehat {\bf S}_d$ be upper triangular matrices in
		\eqref{upperdiagonalization},
 and			 $\widehat {\bf H}={\bf U}{\bf H}{\bf U}^{\rm H}= (\widehat H(i,j))_{1\le i, j\le N}$.
			By the assumption on  the set $\Lambda$,
			there exist an interpolating polynomial $h$ such that
			\vspace{-0.4em}\begin{equation}\label{polynomialfilter.thm.pf.eq1}
			h(\widehat{ S}_1(i,i), ..., \widehat{ S}_d(i,i))=\widehat H(i,i), \ 1\le i\le N,
			\vspace{-0.6em}\end{equation}
			see \cite[Theorem 1 on p. 58]{cheney2000course}.
			Set
			\vspace{-0.4em}\begin{equation}\label{HpF}
			{\bf F}= {\bf U}\big({\bf H}-h({\bf S}_1,...,{\bf S}_d)\big) {\bf U}^{\rm H}=
			\widehat {\bf H}-h(\widehat{\bf S}_1, \ldots, \widehat {\bf S}_d).
			\vspace{-0.6em}\end{equation}
			Then it suffices to prove that  %$h$  satisfies \eqref{polynomialfilter.thm.eq2} or equivalently that
			${\bf F}$ is the zero matrix. %  ${\bf O}$.

			Write ${\bf F}=(F(i,j))_{1\le i, j\le N}$.
			By \eqref{polynomialfilter.thm.eq1}, we have that
%			\begin{equation*}\label{upperdiagonalization.lem.pf.eq1}
			${\bf F}\widehat{\bf S}_k=\widehat{\bf S}_k {\bf F}$ for all $1\le k\le d$.
%\end{equation*}
			This together with the upper triangular property for $\widehat{\bf S}_k, 1\le k\le d$, implies that
			\vspace{-0.5em}\begin{equation} \label{upperdiagonalization.lem.pf.eq2}
			\sum_{l=1}^j F(i,l) \widehat S_k(l,j)=\sum_{l=i}^N \widehat  S_k(i, l) F(l, j),\ 1\le i, j\le N.
			\vspace{-0.5em}\end{equation}
			By the assumption on $\Lambda$, we can find $1\le k(i,j)\le d$ for any $1\le i\neq j \le N $  such that
			\vspace{-0.6em}\begin{equation}\label{upperdiagonalization.lem.pf.eq3}
			\widehat S_{k(i,j)}(i, i)\ne \widehat S_{k(i,j)}(j, j).
			\vspace{-0.6em}\end{equation}
			Now we apply  \eqref{upperdiagonalization.lem.pf.eq2} and \eqref{upperdiagonalization.lem.pf.eq3}
			to prove
			\vspace{-0.6em}\begin{equation} \label{upperdiagonalization.lem.pf.eq4}
			{ F}(i,j)=0  %\ \ \text{if } 1\le j< i\le N
			\vspace{-0.6em}\end{equation}
			by induction on $j=1, \ldots, N$ and $i=N, \ldots, 1$.

			% To prove  Theorem \ref{polynomialfilter.thm}, we need a lemma.
			%
			%\begin{lemma}\label{upperdiagonalization.lem}
			%Let ${\bf S}_k, \hat  {\bf S}_k, 1\le k\le d$, $\Lambda$ and ${\bf H}$ be as in
			%Theorem \ref{polynomialfilter.thm}.
			%Then
			%${\bf U}^* {\bf H} {\bf U}$ is an upper triangular matrix
			%for the unitary matrix ${\bf U}$ in \eqref{upperdiagonalization}.
			%\end{lemma}
			%
			%\begin{proof}

			For $i=N$ and $j=1$,  applying \eqref{upperdiagonalization.lem.pf.eq2} with $k$ replaced by $k(N,1)$, we obtain
			\vspace{-0.4em}\begin{equation*} %\label{upperdiagonalization.lem.pf.eq2}
			F(N,1) \widehat S_{k(N,1)}(1,1)= \widehat S_{k(N,1)}(N, N) F(N, 1),
			\vspace{-0.4em}\end{equation*}
			which together with \eqref{upperdiagonalization.lem.pf.eq3} proves \eqref{upperdiagonalization.lem.pf.eq4} for $(i,j)=(N, 1)$.
			Inductively we assume that the conclusion  \eqref{upperdiagonalization.lem.pf.eq4} for all pairs $(i, j)$ satisfying either  $1\le j\le j_0$ and $i=i_0$,			or  $1\le j\le N$ and $i_0< i \le N$.
			
			For the case that $j_0<i_0-1$,
			we have
			\vspace{-0.6em}\begin{eqnarray*}  % \label{upperdiagonalization.lem.pf.eq5}
				& &  F(i_0,j_0+1) \widehat S_{k(i_0,j_0+1)}(j_0+1,j_0+1)=
				\sum_{l=1}^{j_0+1} F (i_0,l) \hat S_{k(i_0,j_0+1)}(l,j_0+1)\nonumber\\
				&= &\sum_{l=i_0}^N \widehat  S_{k(i_0, j_0+1)}(i_0, l) F(l, j_0+1)=   \widehat  S_{k(i_0, j_0+1)}(i_0, i_0) F(i_0, j_0+1),
			\end{eqnarray*}
%			\vspace{-0.6em}\begin{equation*}  % \label{upperdiagonalization.lem.pf.eq5}
%				F(i_0,j_0+1) \widehat S_{k(i_0,j_0+1)}(j_0+1,j_0+1)
%				= \sum_{l=i_0}^N \widehat  S_{k(i_0, j_0+1)}(i_0, l) F(l, j_0+1)
%				=   \widehat  S_{k(i_0, j_0+1)}(i_0, i_0) F(i_0, j_0+1),
%			\end{equation*}
			where the first and third equalities hold by the inductive hypothesis and the second equality is obtained from %the inductive hypothesis and
			\eqref{upperdiagonalization.lem.pf.eq2} with $k$ replaced by $k(i_0,j_0+1)$.
%			where the first and thir equality holds by the inductive hypothesis and the second equality is obtained from
%			\eqref{upperdiagonalization.lem.pf.eq2} with $k$ replaced by $k(i_0,j_0+1)$.
			This together with \eqref{upperdiagonalization.lem.pf.eq3} proves the conclusion \eqref{upperdiagonalization.lem.pf.eq4}
			for  $i=i_0$ and $j=j_0+1\le i_0-1$, and hence the inductive proof can proceed for the case that $j_0<i_0-1$.

			For the case that $j_0=i_0-1$,
			it follows from the  construction of the polynomial $h$ and the upper triangular property for $\widehat {\bf S}_k, 1\le k\le d$,
			that the diagonal entries of ${\bf F}$ are
			\vspace{-0.6em}$$ \widehat H(i, i)- h(\widehat{ S}_1(i,i), ..., \widehat{ S}_d(i,i))=0,\ 1\le i\le N\vspace{-0.4em}$$
			by
			\eqref{polynomialfilter.thm.pf.eq1}.
			Hence the conclusion \eqref{upperdiagonalization.lem.pf.eq4} holds
			for    $i=i_0$ and $j=j_0+1$, and hence
			the inductive proof can proceed for the case that $j_0=i_0-1$.
			
			For the case that $i_0\le j_0\le N-1$, we can follow the argument used in the proof
for the case that $j_0<i_0-1$
%			we have
%			\vspace{-0.6em}\begin{eqnarray*}
%				& & \hat S_{k(i_0, j_0+1)}(i_0,i_0) F(i_0,j_0+1)\\
%				%&=& \sum_{l=i_0}^N \hat S_{k(i_0,j_0+1)}(i_0,l) F (l,j_0+1)\\
%				&= & \sum_{l=1}^{j_0+1} F (i_0, l) \hat S_{k(i_0,j_0+1)}(l,j_0+1)\\
%				&= & F(i_0,j_0+1)\hat S_{k(i_0,j_0+1)}(j_0+1,j_0+1),
%			\end{eqnarray*}
%			where the first equality follows  from the inductive hypothesis and
%			\eqref{upperdiagonalization.lem.pf.eq2} with $k$ replaced by $k(i_0, j_0+1)$, and
%			the second equality holds by the inductive hypothesis.
%			This together with \eqref{upperdiagonalization.lem.pf.eq3}
to establish the conclusion \eqref{upperdiagonalization.lem.pf.eq4}
			for  $i=i_0$ and $j=j_0+1\le N$, and hence the inductive proof can proceed for the case that $i_0\le j_0\le N-1$.
			
			For the case that $j_0=N$ and $i_0\ge 2$,
			we obtain
%			\vspace{-0.6em}\begin{eqnarray*}  % \label{upperdiagonalization.lem.pf.eq5}
%				& & F(i_0-1,1) \widehat S_{k(i_0-1,1)}(1,1)\nonumber\\
%				&= &\sum_{l=i_0-1}^N \widehat  S_{k(i_0-1, l)}(i_0-1, l) F(l, 1)\nonumber\\
%				& = &  \widehat  S_{k(i_0-1, 1)}(i_0-1, i_0-1) F(i_0-1, 1),
%			\end{eqnarray*}
\vspace{-0.6em}\begin{equation*}  % \label{upperdiagonalization.lem.pf.eq5}
				 F(i_0-1,1) \widehat S_{k(i_0-1,1)}(1,1)
				= \sum_{l=i_0-1}^N \widehat  S_{k(i_0-1, l)}(i_0-1, l) F(l, 1)
				 =   \widehat  S_{k(i_0-1, 1)}(i_0-1, i_0-1) F(i_0-1, 1),
			\end{equation*}
			where the first equality follows  from
			\eqref{upperdiagonalization.lem.pf.eq2} with $k$ replaced by $k(i_0-1,1)$ and
			the  second equality holds by the inductive hypothesis.
			This together with \eqref{upperdiagonalization.lem.pf.eq3} proves the conclusion \eqref{upperdiagonalization.lem.pf.eq4}
			for  $i=i_0-1$ and $j=1$, and hence
			the inductive proof can proceed for the case that $j_0=N$ and $i_0\ge 2$.
			
			For the case that $j_0=N$ and $i_0=1$, the inductive proof of the zero matrix property for the matrix ${\bf F}$ is complete.
			This completes the inductive proof.
		\end{proof}

\subsection{Distance between a graph filter and the set of polynomial of commutative graph shifts}\label{distance.appendix}

Let ${\mathcal G}=(V, E)$ be a connected, unweighted and undirected finite graph,  ${\mathcal A}$ be a Banach algebra of graph filters on the graph ${\mathcal G}$ with norm denoted by $\|\cdot\|_{\mathcal A}$,
 ${\bf S}_1, \ldots, {\bf S}_d$ be nonzero commutative graph shifts in ${\mathcal A}$, and ${\mathcal P}$ be the set of all polynomial filters of graph shifts In this Appendix, we consider estimating
the distance
${\rm dist}({\bf H}, {\mathcal P})$ in \eqref{distance.def}
%=\inf_{{\bf P}\in {\mathcal P}} \|{\bf H}-{\bf P}\|_{\mathcal A}$$
between a graph filter  ${\bf H}$
 and the set  ${\mathcal P}$ of polynomial filters. %  given in \eqref{distance.def}.

\begin{theorem}
\label{distance.thm}
{\rm
If the commutative
graph shifts  ${\mathbf S}_1, \ldots, {\bf S}_d$ can be diagonalized  simultaneously by a unitary  matrix    and elements in their joint spectrum  $\Lambda$ are distinct, then there exist positive constants $C_0$ and $C_1$ such that
 \begin{equation}\label{distance.thm.eq1} C_0  \Big(\sum_{k=1}^d \|[{\bf H}, {\bf S}_k]\|_{\mathcal A}^2\Big)^{1/2} \le {\rm dist}({\bf H}, {\mathcal P})\le
 C_1 \Big(\sum_{k=1}^d \|[{\bf H}, {\bf S}_k]\|_{\mathcal A}^2\Big)^{1/2}, \ {\bf H}\in {\mathcal A},
 \end{equation}
where $[{\bf H}, {\bf S}_k]={\bf H}{\bf S}_k-{\bf S}_k {\bf H}, 1\le k\le d$.
%hold for all ${\bf H}\in {\mathcal A}$.
}
\end{theorem}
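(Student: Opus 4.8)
The plan is to establish the two inequalities separately, handling the lower bound directly in the Banach algebra $\mathcal{A}$ and reducing the upper bound to the Frobenius-norm case via finite-dimensional norm equivalence. The two hypotheses — simultaneous unitary diagonalizability and distinctness of the joint spectrum $\Lambda$ — will each be used in an essential way in the upper bound.

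For the lower bound I would first note that every polynomial filter ${\bf P}\in{\mathcal P}$ commutes with each shift, so $[{\bf P},{\bf S}_k]={\bf 0}$ and hence $[{\bf H},{\bf S}_k]=[{\bf H}-{\bf P},{\bf S}_k]$ for all $k$. Submultiplicativity of $\|\cdot\|_{\mathcal A}$ gives $\|[{\bf H},{\bf S}_k]\|_{\mathcal A}\le 2\|{\bf S}_k\|_{\mathcal A}\|{\bf H}-{\bf P}\|_{\mathcal A}$. Summing squares over $k$, taking square roots, and then the infimum over ${\bf P}\in{\mathcal P}$ yields the lower bound with $C_0=\big(2(\sum_{k=1}^d\|{\bf S}_k\|_{\mathcal A}^2)^{1/2}\big)^{-1}$. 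This step uses no spectral information.

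For the upper bound I would exploit the simultaneous unitary diagonalization. Let ${\bf U}$ be unitary with $\widehat{\bf S}_k={\bf U}{\bf S}_k{\bf U}^{\rm H}=\mathrm{diag}(\lambda_{1,k},\ldots,\lambda_{N,k})$ diagonal, and write $\widehat{\bf H}={\bf U}{\bf H}{\bf U}^{\rm H}=(\widehat H(i,j))$. Since the joint eigenvalues $\pmb\lambda_i=(\lambda_{i,1},\ldots,\lambda_{i,d})$ are distinct, there is an interpolating polynomial $h$ with $h(\pmb\lambda_i)=\widehat H(i,i)$; setting ${\bf P}=h({\bf S}_1,\ldots,{\bf S}_d)\in{\mathcal P}$, its diagonalization $\widehat{\bf P}=h(\widehat{\bf S}_1,\ldots,\widehat{\bf S}_d)$ is the diagonal matrix agreeing with $\widehat{\bf H}$ on the diagonal. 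Thus ${\bf U}({\bf H}-{\bf P}){\bf U}^{\rm H}$ is precisely the off-diagonal part of $\widehat{\bf H}$, and by unitary invariance of the Frobenius norm $\|{\bf H}-{\bf P}\|_F^2=\sum_{i\ne j}|\widehat H(i,j)|^2$. The core estimate recovers these entries from the commutators: a direct computation gives $[\widehat{\bf H},\widehat{\bf S}_k](i,j)=\widehat H(i,j)(\lambda_{j,k}-\lambda_{i,k})$, so with the spectral gap $\delta:=\min_{i\ne j}\|\pmb\lambda_i-\pmb\lambda_j\|_2>0$ one obtains $\sum_{k=1}^d|[\widehat{\bf H},\widehat{\bf S}_k](i,j)|^2=|\widehat H(i,j)|^2\|\pmb\lambda_i-\pmb\lambda_j\|_2^2\ge\delta^2|\widehat H(i,j)|^2$ for $i\ne j$. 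Summing over $i\ne j$ and using unitary invariance again gives $\|{\bf H}-{\bf P}\|_F\le\delta^{-1}\big(\sum_{k=1}^d\|[{\bf H},{\bf S}_k]\|_F^2\big)^{1/2}$. Since $\mathcal{A}$ is a finite-dimensional space of $N\times N$ matrices, $\|\cdot\|_{\mathcal A}$ and $\|\cdot\|_F$ are equivalent, say $c\|\cdot\|_F\le\|\cdot\|_{\mathcal A}\le C\|\cdot\|_F$; chaining these through ${\rm dist}({\bf H},{\mathcal P})\le\|{\bf H}-{\bf P}\|_{\mathcal A}$ produces the upper bound with $C_1=Cc^{-1}\delta^{-1}$.

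The step I expect to be the main obstacle is the upper bound, specifically the passage to the Frobenius norm and back. The commutators control only the off-diagonal entries of $\widehat{\bf H}$ entrywise, which is intrinsically an $\ell^2$ statement; transferring it to an arbitrary Banach-algebra norm relies on finite-dimensional norm equivalence and on the fact that the interpolating filter ${\bf P}$ genuinely lies in ${\mathcal P}\subset{\mathcal A}$. The hypotheses enter exactly here: unitary — rather than merely triangular — diagonalizability is what makes $[\widehat{\bf H},\widehat{\bf S}_k]$ have the clean product form $\widehat H(i,j)(\lambda_{j,k}-\lambda_{i,k})$, and distinctness of $\Lambda$ simultaneously supplies the interpolating polynomial and the strictly positive gap $\delta$.
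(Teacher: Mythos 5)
Your proposal is correct and follows essentially the same route as the paper's proof: the lower bound via $[{\bf H},{\bf S}_k]=[{\bf H}-{\bf P},{\bf S}_k]$ and submultiplicativity, and the upper bound via simultaneous diagonalization, the identity ${\bf U}[{\bf H},{\bf S}_k]{\bf U}^{\rm H}=\big(\widehat H(i,j)(\widehat S_k(j,j)-\widehat S_k(i,i))\big)_{i,j}$, the spectral gap of $\Lambda$, and finite-dimensional norm equivalence between $\|\cdot\|_{\mathcal A}$ and $\|\cdot\|_F$. The only cosmetic difference is that you exhibit the optimal Frobenius approximant explicitly as an interpolating polynomial, whereas the paper characterizes ${\mathcal P}$ as the full set ${\bf U}^{\rm H}{\bf D}{\bf U}$ of simultaneously diagonalized matrices and reads off the same off-diagonal expression for $\inf_{{\bf P}\in{\mathcal P}}\|{\bf H}-{\bf P}\|_F$.
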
	

\begin{proof}  Take
 ${\bf H}\in {\mathcal A}$. For any ${\bf P}\in {\mathcal P}$, we have
  \begin{equation*}
 \|[{\bf H}, {\bf S}_k]\|_{\mathcal A} \le
  \|({\bf H}-{\mathbf P}) {\bf S}_k\|_{\mathcal A}+
  \| {\bf S}_k({\bf H}-{\mathbf P})\|_{\mathcal A}\le 2 \|{\bf S}_k\|_{\mathcal A} \|{\bf H}-{\bf P}\|_{\mathcal A}, \
1\le k\le d. \end{equation*}
%Therefore
%\begin{equation}\label{lip.proof0}
%{\rm dist}({\bf H}, {\mathcal P}) \ge \max_{1\le k\le d, c\in {\mathbb R}} \frac{\|[{\bf H}, {\bf S}_k]\|}{2 \|{\bf S}_k-c{\bf I}\|}.
%\end{equation}
% \begin{eqnarray*}
% \|[{\bf H}, {\bf S}_k]\| & \hskip-0.08in \le  & \hskip-0.08in % \|[{\bf H}, {\bf S}_k- c{\bf I} ]\|\le
%  \|({\bf H}-{\mathbf P}) ({\bf S}_k-c{\bf I})\|+
%  \| ({\bf S}_k-c{\bf I})({\bf H}-{\mathbf P})\|\nonumber\\
%& \hskip-0.08in \le  & \hskip-0.08in 2 \|{\bf S}_k-c{\bf I}\| \|{\bf H}-{\bf P}\|, \
%1\le k\le d. \end{eqnarray*}
Therefore
\begin{equation*}\label{lip.proof0}
{\rm dist}({\bf H}, {\mathcal P}) \ge \max_{1\le k\le d} \frac{\|[{\bf H}, {\bf S}_k]\|_{\mathcal A}}{2 \|{\bf S}_k\|_{\mathcal A}},
\end{equation*}
and the first inequality in \eqref{distance.thm.eq1} follows.

Now we prove the second  inequality in \eqref{distance.thm.eq1}. Let ${\bf U}$
be the unitary matrix to diagonalize   ${\mathbf S}_1, \ldots, {\bf S}_d$  simultaneously,
	i.e., \eqref{upperdiagonalization} holds for some diagonal matrices
$\widehat{\bf S}_k={\rm diag} (\widehat S(i,i))_{i\in V}, 1\le k\le d$.
Then one may verify that  polynomial filters of
  graph shifts  ${\mathbf S}_1, \ldots, {\bf S}_d$ can  also be diagonalized by the unitary matrix ${\bf U}$. Moreover by the distinct assumption on elements in the joint spectrum  $\Lambda$ of the graph shifts, we have
\begin{equation}\label{P2.def0}
{\mathcal P}= \{ {\bf U}^{\rm H}  {\bf D} {\bf U}, \ {\bf D}\ {\rm are\  diagonal\ matrices} \}.
\end{equation}
 Set ${\bf U}{\bf H}{\bf U}^{\rm H}=(\widehat H(i,j))_{i,j\in V}$.
% ${\bf S}_1, \ldots, {\bf S}_d$.
 and denote the Frobenius norm of a matrix ${\bf A}$ by $\|{\bf A}\|_F$. Therefore  it follows from
\eqref{P2.def0} that
\begin{equation}\label{lip.proof1}
\inf_{{\bf P}\in {\mathcal P}}\|{\bf H}-{\bf P}\|_F=\inf_{{\bf D}\ {\rm are\ diagonal\ matrices}}\|{\bf U}^{\rm H}{\bf H}{\bf U}-{\bf D}\|_F
=\Big(\sum_{i,j\in V, j\ne i} |\widehat H(i,j)|^2\Big)^{1/2}.
\end{equation}
 On the other hand, we have
 $${\bf U} [{\bf H}, {\bf S}_k] {\bf U}^{\rm H}= \Big(\widehat H(i,j)  (\widehat S_{k}(j,j)- \widehat S_{k}(i,i))\Big)_{i,j\in V},\ 1\le k\le d.$$
 This implies that
 \begin{eqnarray}  \label{lip.proof2}
 \sum_{k=1}^d \|[{\bf H}, {\bf S}_k]\|_F^2 & \hskip-0.08in = & \hskip-0.08in  \sum_{k=1}^d \|{\bf U}^{\rm H}|[{\bf H}, {\bf S}_k]{\bf U}\|_F^2
 =\sum_{i,j\in V, j\ne i} |\widehat H(i,j)|^2 \Big(\sum_{k=1}^d \big|\widehat S_{k}(j,j)- \widehat S_{k}(i,i)\big|^2\Big)\nonumber\\
 & \hskip-0.08in \ge & \hskip-0.08in  \inf_{i,j\in V, j\ne i} \Big(\sum_{k=1}^d \big|\widehat S_{k}(j,j)- \widehat S_{k}(i,i)\big|^2\Big)
 \times \inf_{{\bf P}\in {\mathcal P}}\|{\bf H}-{\bf P}\|_F^2,  %\sum_{i,j\in V, j\ne i} |\widehat H(i,j)|^2
 \end{eqnarray}
 where the last inequality follows from \eqref{lip.proof1}.
 Then the second inequality in  \eqref{distance.thm.eq1} follows from \eqref{lip.proof2},
  the equivalence of  norms on a finite-dimensional linear space and the  distinct assumption on the joint spectrum $\Lambda$.
\end{proof}

We believe that the estimate  \eqref{distance.thm.eq1} % on the distance between a graph filter
% and the set  of polynomial of commutative graph shifts  ${\bf S}_1, \ldots, {\bf S}_d$
should hold without the  simultaneous diagonalization assumption on commutative graph shifts ${\bf S}_1, \ldots, {\bf S}_d$.

%\end{appendix}

	%%%%%%%%%%%%%%%%%%%%%%%%%%%%%%%%%%%%%%%%%%%%%%%%%
%%%%%%%%%%%%%%%%%%%%%%%%%%%%%%%%%%%%%%%%%	
		
	%	\vspace{-0.8em}

			%\end{appendix}
			%\end{appendices}
%%%%%%%%%%%%%%%%%%%%%%%%%%%%%%%%%%%%%%%%%%%%%%%%%%%%%%%%%%
%%%%%%%%%%%%%%%%%%%%%%%%%%%%%%%%%%%%%%%%%%%%%%%%%%%%%%%%%

		%\end{appendix}
		
		%
		%
		%Since, $\Lambda$ contains $N$ distinct d-tuples then for each $i\in V$ there is a $q\in \{1,\dots,d\}$ such that $\lambda_i^q\neq \lambda_{i+1}^q$. For $i\in V$ and corresponding $q\in \{1,\dots, d\}$ we have
		%$$S_q(i,i)F(i,i+1)=F(i,i+1)S_q(i+1,i+1).$$
		% By the fact that $S_q(i,i)=\lambda_i^q$ and $\lambda_i^q\neq \lambda_{i+1}^q$, we get $F(i,i+1)=0$.
		%
		%Now, lets apply induction method on m to prove $F(i,i+m)=0$ for all $i\in V$. Assume $F(i,i+z)=0$ for all $i\in V$ and $z\leq m$. Using similar technique as in $m=1$ case, for each $i\in V$ and $m$ there is a $r\in \{1,\dots, d\}$ such that $\lambda_i^r\neq \lambda_{i+m+1}^r$. Similarly, $(i,i+m+1)^{th}$ entry of $S_rF=FS_r$ is
		%$$\sum_{k\in V }S_r(i,k)F(k,i+m+1)=\sum_{k\in V}F(i,k)S_r(k,i+m+1).$$
		%Since,  $F(i,i+z)=0$ for all $i\in V$ and $z\leq m$, we have
		%$$S_r(i,i)F(i,i+m+1)=F(i,i+m+1)S_r(i+m+1,i+m+1),$$
		%which is
		%$$F(i,i+m+1)(\lambda_{i}^r-\lambda_{i+m+1}^r)=0.$$
		%We obtain $F(i,i+m+1)=0$ for $i\in V$.
		%Then by induction we prove that the matrix ${\bf F}$ is zero matrix and  ${\bf H}=p({\bf S}_1,...,{\bf S}_d)$.\\

\bigskip

{\bf Acknowledgement}:\  This work is partially supported by the National Natural Science Foundation of China (61761011, 62171146, 12171490)
and the  National Science Foundation (DMS-1816313).   The authors would like to thank anonymous reviewers to provide many constructive comments
for the improvement of the paper.
On behalf of all authors, the corresponding author states that there is no conflict of interest.

\end{document}